\newtheorem{theorem}{Theorem}[section]
\newtheorem{lemma}[theorem]{Lemma}
\newtheorem{proposition}[theorem]{Proposition}
\newtheorem{definition}[theorem]{Definition}
\definecolor{colorx}{rgb}{0.9, 0.1, 0.3}
\newcommand{\learn}{\textsc{Learn} }
\newcommand{\idle}{\textsc{Idle} }
\newcommand{\follow}{\textsc{Follow} }
\newcommand{\Brac}[1]{\left[#1\right]}
\newcommand{\Ex}[1]{\mathbb{E}\Brac{#1}}
\newcommand{\eat}[1]{}
\newcommand{\ALG}{\mathsf{ALG}\xspace}
\newcommand{\OPT}{\mathsf{OPT}\xspace}
\newcommand{\cost}{\mathsf{cost}\xspace}
\newcommand{\static}{\textsc{Static}\xspace}
\newcommand{\prev}{\mathsf{prev}\xspace}
\newcommand{\next}{\mathsf{next}\xspace}
\newcommand{\predictnext}{\mathsf{pnext}\xspace}
\title{Online Algorithms for Weighted Paging with Predictions\thanks{A preliminary version of this paper was published in the Proceedings of the 47th {\em International Colloquium on Automata, Languages and Programming} (ICALP), 2020.}}
\author{Zhihao Jiang\,\thanks{Institute for Interdisciplinary Information Sciences, Tsinghua University, Beijing, China. Work done while visiting Duke University, Durham, NC, USA. Email: {\tt jzh16@mails.tsinghua.edu.cn}.}
\and
Debmalya Panigrahi\thanks{Department of Computer Science, Duke University, Durham, NC, USA. This work was supported in part by NSF grants CCF-1535972, CCF-1955703, an NSF CAREER Award CCF-1750140, and the Indo-US Virtual Networked Joint Center on Algorithms under Uncertainty. Email: {\tt debmalya@cs.duke.edu}.}
\and
Kevin Sun\thanks{Department of Computer Science, Duke University, Durham, NC, USA. This work was supported in part by NSF grants CCF-1535972, CCF-1955703, and an NSF CAREER Award CCF-1750140. Email: {\tt ksun@cs.duke.edu}.}
}
\date{}
\begin{document}

\maketitle

\begin{abstract}
    In this paper, we initiate the study of the weighted paging problem with predictions. This continues the recent line of work in online algorithms with predictions, particularly that of Lykouris and Vassilvitski (ICML 2018) and Rohatgi (SODA 2020) on unweighted paging with predictions. We show that unlike unweighted paging, neither a fixed lookahead nor knowledge of the next request for every page is sufficient information for an algorithm to overcome existing lower bounds in weighted paging. However, a combination of the two, which we call the strong per request prediction (SPRP) model, suffices to give a 2-competitive algorithm. We also explore the question of gracefully degrading algorithms with increasing prediction error, and give both upper and lower bounds for a set of natural measures of prediction error. 
\end{abstract}

\clearpage

\section{Introduction}
\label{sec:introduction}
The paging problem is among the most well-studied problems in online algorithms. In this problem, there is a set of $n$ pages and a cache of size $k < n$. The online input comprises a sequence of requests for these pages. If the requested page is already in the cache, then the algorithm does not need to do anything. But, if the requested page is not in the cache, then the algorithm suffers what is known as a {\em cache miss} and must bring the requested page into the cache. If the cache is full, then an existing page must be evicted from the cache to make room for the new page. The goal of the online algorithm is to minimize the total number of cache misses in the {\em unweighted paging} problem, and the total weight of the evicted pages in the {\em weighted paging} problem. It is well-known that for both problems, the best deterministic algorithms have a competitive ratio of $O(k)$ and the best randomized algorithms have a competitive ratio of $O(\log k)$ (see, e.g., \cite{chrobak1991new,bansal2012primal}).
 
Although the paging problem is essentially solved from the perspective of competitive analysis, it also highlights the limitations of this framework. For instance, it fails to distinguish between algorithms that perform nearly optimally in practice such as the {\em least recently used} (LRU) rule and very na\"ive strategies such as {\em flush when full} that evicts all pages whenever the cache is full. In practice, paging algorithms are augmented with predictions about the future (such as those generated by machine learning models) to improve their empirical performance. To model this, for unweighted paging, several lookahead models have been proposed where only a partial prediction of the future leads to algorithms that are significantly better than what can be obtained in traditional competitive analysis. But, to the best of our knowledge, no such results were previously known for the weighted paging problem. In this paper, we initiate the study of the {\em weighted paging problem with future predictions}.  

For unweighted paging, it is well-known that evicting the page whose next request is {\em farthest in the future} (also called Belady's rule) is optimal. As a consequence, it suffices for an online algorithm to simply predict the next request of every page (we call this {\em per request prediction} or PRP in short) in order to match offline performance. In fact, Lykouris and Vassilvitskii~\cite{lykouris2018competitive} (see also Rohatgi~\cite{rohatgi2020near}) showed recently that in this prediction model, one can simultaneously achieve a competitive ratio of $O(1)$ if the predictions are accurate, and $O(\log k)$ regardless of the quality of the predictions. Earlier, Albers~\cite{albers1993influence} used a different prediction model called {\em $\ell$-strong lookahead}, where we predict a sequence of future requests that includes $\ell$ distinct pages (excluding the current request). For $\ell = n-1$, this prediction is stronger than the PRP model, since the algorithm can possibly see multiple requests for a page in the lookahead sequence. But, for $\ell < n-1$, which is typically the setting that this model is studied in, the two models are incomparable. The main result in \cite{albers1993influence} is to show that one can obtain a constant approximation for unweighted paging for $\ell \geq k-2$. 
 
Somewhat surprisingly, we show that neither of these models are sufficient for weighted paging. In particular, we show a lower bound of $\Omega(k)$ for deterministic algorithms and $\Omega(\log k)$ for randomized algorithms in the PRP model. These lower bounds match, up to constants, standard lower bounds for the online paging problem (without prediction) (see, e.g., \cite{MotwaniR}), hence establishing that the PRP model does not give any advantage to the online algorithm beyond the strict online setting. Next, we show that for $\ell$-strong lookahead, even with $\ell = k$, there are lower bounds of $\Omega(k)$ for deterministic algorithms and $\Omega(\log k)$ for randomized algorithms, again asymptotically matching the lower bounds from online paging without prediction. Interestingly, however, we show that a combination of these prediction models is sufficient: if $\ell = n-1$ in the strong lookahead setting, then we get predictions that subsume both models; and, in this case, we give a simple deterministic algorithm with a competitive ratio of $2$ for weighted paging, thereby overcoming the online lower bounds.
 
Obtaining online algorithms with predictions, however, is fraught with the risk that the predictions are inaccurate which renders the analysis of the algorithms useless. Ideally, one would therefore, want the algorithms to also be robust, in that their performance gracefully degrades with increasing prediction error. Recently, there has been significant interest in designing online algorithms with predictions that achieve both these goals, of matching nearly offline performance if the predictions are correct, and of gracefully degrading as the prediction error increases. Originally proposed for the (unweighted) paging problem~\cite{lykouris2018competitive}, this model has gained significant traction in the last couple of years and has been applied to problems in data structures~\cite{mitzenmacher2018model}, online decision making~\cite{purohit2018improving,gollapudi2019online}, scheduling theory~\cite{purohit2018improving,LattanziLMV20}, frequency estimation~\cite{hsu2018learningbased}, etc. Our final result contributes to this line of research.

First, if the online algorithm and offline optimal solution both use a cache of size $k$, then we show that no algorithm can asymptotically benefit from the predictions while achieving sublinear dependence on the prediction error. Moreover, if we make the relatively modest assumption that the algorithm is allowed a cache that contains just $1$ extra slot than that of the optimal solution, then we can achieve constant competitive ratio when the prediction error is small.

\subsection{Overview of models and our results}
Our first result is a lower bound for weighted paging in the PRP model. Recall that in the PRP model, in addition to the current page request, the online algorithm is provided the time-step for the next request of the same page. For instance, if the request sequence is $(a, b, a, c, d, b, \ldots)$, then at time-step $1$, the algorithm sees request $a$ and is given position 3, and at time-step 2, the algorithm sees request $b$ and is given position 6.

\begin{theorem} \label{thm:1-prp}
For weighted paging with PRP, any deterministic algorithm is $\Omega(k)$-competitive, and any randomized algorithm is $\Omega(\log k)$-competitive.
\end{theorem}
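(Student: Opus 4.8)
The goal is to show that the classical lower-bound instances for online weighted paging — which give $\Omega(k)$ against any deterministic algorithm and $\Omega(\log k)$ against any randomized algorithm on a universe of $k+1$ pages (see, e.g., \cite{MotwaniR}) — can be realized in the PRP model, i.e., they remain hard even when the algorithm additionally learns, at each request, the time of that page's next request. It is worth emphasizing why this is not immediate: in the \emph{unweighted} case the PRP annotations let an online algorithm recover, at every step, the next-request time of each page currently in its cache, hence run Belady's farthest-in-future rule, which is optimal — this is precisely why PRP yields constant-competitive unweighted paging. Our construction must therefore exploit that Belady is \emph{suboptimal} for weighted paging: the cost-optimal eviction depends on the page weights and on the request pattern well beyond the immediate next requests, information that PRP does not reveal, since PRP discloses only the next occurrence of pages \emph{already seen}, and nothing about the longer-range future or about pages not yet requested.

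The approach is to take the standard weighted-caching hard instance on $k+1$ pages with suitably separated weights (e.g., geometrically increasing), together with its standard amortized analysis, and adapt it in two ways. First, pad it with filler requests so that every page recurs within a short, \emph{fixed} window after each of its occurrences; this simultaneously makes the instance PRP-legal (all next-request times are finite and mutually consistent) and renders the PRP labels uninformative relative to the instance's skeleton — the algorithm merely learns that a page recurs a prescribed number of steps from now, which the fixed skeleton already determines. Second, lay out the decision points so that, at the moment the algorithm must irrevocably commit an eviction, the identity and weight of the page whose later behavior makes the classical argument bite is not yet pinned down by anything the algorithm has observed. For the deterministic bound this remaining choice is made adaptively against the known algorithm — which is consistent with PRP because it only selects which \emph{fresh} page (carrying no prior commitment) enters the instance next — and the classical amortized argument then shows the algorithm pays $\Omega(k)$ times $\OPT$. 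For the randomized bound this choice is instead drawn from a fixed distribution, and Yao's principle together with the classical phase-based argument underlying the $\Omega(\log k)$ randomized lower bound gives the expected ratio $\Omega(\log k)$; one checks that conditioning on the PRP labels revealed so far does not reduce the randomness the argument relies on, since those labels are determined by the skeleton alone.

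The step I expect to be the main obstacle is exactly this interplay. The PRP annotations act as a form of lookahead — effectively the algorithm always knows the next recurrence of every page currently in play — and one must certify this lookahead cannot be used to dodge the forced misses. That forces the decision points to be arranged so the ``wrongness'' of the Belady-type choice is revealed only after the eviction is final, which in turn constrains how the request chains can be interleaved; reconciling this with PRP-legality (bounded, consistent next-request times for \emph{all} pages) and with keeping $\OPT$'s cost within a constant factor of its value on the unpadded instance (so that the separation survives, and the filler does not hand the algorithm a cheap place to ``park'' pages) is the delicate part. Once the instance is set up so that all of these constraints hold at once, both classical analyses transfer with only constant-factor losses, yielding the claimed $\Omega(k)$ and $\Omega(\log k)$ bounds.
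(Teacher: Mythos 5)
There is a genuine gap: your proposal correctly identifies the central obstacle --- that every request commits the adversary to a future arrival time, so the classical adaptive construction cannot be replayed verbatim --- but it never resolves it; it only names it as ``the delicate part.'' Your proposed resolution (a fixed skeleton in which every page recurs in a short fixed window, with the adaptive choice confined to ``which fresh page enters next'') does not work for a universe of $k+1$ pages: after each page has been requested once, \emph{every} page carries a pending prediction, so there are no uncommitted ``fresh'' pages left, and if the skeleton fully determines the sequence then nothing adaptive remains to force a deterministic algorithm to miss at every step. The paper's construction resolves this tension concretely: a single request for the missing page $a_\ell$ is replaced by an $\ell$-\emph{block} (a run of $a_0$'s, then $a_1$'s, \dots, ending with one $a_\ell$), timed by counters $u_i$ recording each page's promised next arrival; and to keep the promises of higher-indexed pages from being violated, the adversary is sometimes forced to issue \emph{irregular} (larger-than-desired) blocks, whose number must then be bounded by a separate counting argument (Lemma~\ref{lem:num-i-blocks}). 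None of this machinery, or a substitute for it, appears in your sketch.

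A second, independent gap is your claim that ``the classical amortized argument then shows the algorithm pays $\Omega(k)$ times $\OPT$.'' The paper does not transfer the classical analysis; it upper-bounds $\OPT$ by an ensemble of $k$ fixed algorithms $\ALG_1,\dots,\ALG_k$ (each permanently caching everything except $a_0$ and $a_i$) and uses the geometric weights $w(a_i)=c^i$ to charge the \emph{sum} of their costs to $32\cdot\cost(\ALG)$, with the irregular-block overhead absorbed into the regular blocks. Likewise, the randomized bound is not the ``classical phase-based argument'': it requires a new recursive phase structure on the block sequence and a harmonic-number recurrence ($\Ex{z_i}\ge H_i/4$) to show each $k$-phase is long in expectation, plus separate lemmas controlling the cost of irregular blocks. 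Your plan is a reasonable statement of intent, but the constructions and charging arguments that make it true are exactly what is missing.
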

Note that these bounds are tight, because there exist online algorithms without prediction whose competitive ratios match these bounds (see Chrobak {\em et al.}~\cite{chrobak1991new} and Bansal {\em et al.}~\cite{bansal2012primal}).

Next, for the $\ell$-strong lookahead model, we show lower bounds for weighted paging. Recall that in this model, the algorithm is provided a lookahead into future requests that includes $\ell$ distinct pages. For instance, if $\ell = 3$ and the request sequence is $(a, b, a, c, d, b, \ldots)$, then at time-step 1, the algorithm sees request $a$ and is given the lookahead sequence $(b,a,c)$ since it includes 3 distinct pages. At time step 2, the algorithm sees request $b$ and is given $(a,c,d)$. Note the difference with the PRP model, which would not be give the information that the request in time-step $5$ is for page $d$, but does give the information that the request in time-step $6$ is for page $b$.

\begin{theorem} \label{thm:1-ell-small}\label{thm:1-ell-large}
For weighted paging with $\ell$-strong lookahead where $\ell\leq n-k$, any deterministic algorithm is $\Omega(k)$-competitive, and any randomized algorithm is $\Omega(\log k)$-competitive.

\smallskip
For weighted paging with $\ell$-strong lookahead where $n-k+1\leq \ell\leq n-1$, any deterministic algorithm is $\Omega(n-\ell)$-competitive, and any randomized algorithm is $\Omega(\log (n-\ell))$-competitive.
\end{theorem}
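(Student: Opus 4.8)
The plan is to treat both regimes with a single parameterized construction that blinds the $\ell$-strong lookahead on a chosen set of pages, and then reduce to the classical online paging lower bounds (deterministic $\Omega(\kappa)$ and randomized $\Omega(\log\kappa)$ for a cache of size $\kappa$ over $\kappa+1$ pages; see \cite{MotwaniR}). Set $\kappa:=\min\{k,n-\ell\}$, so $\kappa=k$ when $\ell\le n-k$ and $\kappa=n-\ell$ when $n-k+1\le\ell\le n-1$; in either case $1\le\kappa\le k$ and $n-\kappa-1\ge\ell-1$. Partition the $n$ pages into $\kappa+1$ \emph{hard} pages of weight $1$, $k-\kappa$ \emph{lock} pages of weight $1$ (none in the first regime), and $n-k-1$ \emph{cheap} pages of weight $\epsilon$ for a tiny $\epsilon>0$. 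The request sequence runs in \emph{rounds}: each round requests $\ell-1$ fixed distinct non-hard pages in a fixed order (all lock pages, plus a fixed subset of cheap pages) followed by a single hard page. Because a full block of a round's non-hard requests spans exactly $\ell-1$ distinct pages, the $\ell$-strong lookahead window — which sees $\ell$ distinct pages — always consists of a suffix of the current round's padding, then exactly the \emph{next} hard request, then a prefix of the next round's padding. In particular the algorithm never sees a hard request beyond the immediately next one, and it first sees that one precisely when it serves the previous hard request.

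Three observations drive the analysis. First, over $N$ rounds every algorithm spends only $O(Nn\epsilon)$ on cheap pages, negligible as $\epsilon\to0$. Second, the lock pages are requested every round, so $\OPT$ keeps them all for free, whereas $\ALG$ has $k$ slots but $k+1$ weight-$1$ pages (the $\kappa+1$ hard and the $k-\kappa$ lock), all requested every round, so it suffers at least one weight-$1$ miss per round; since dropping a lock page gains it nothing, its effective cache for the hard pages has size at most $\kappa$. Third, because only the next hard request is ever visible, $\ALG$'s behaviour on the hard pages is that of an online paging algorithm with one step of lookahead on $\kappa+1$ pages and a cache of size (at most) $\kappa$, and one step of lookahead cannot help there, since fetching a weight-$1$ page costs the same prefetched or on demand. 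For the deterministic bound the adversary makes each round's hard page one that $\ALG$ does not currently hold, committing this choice when the request first enters $\ALG$'s window (i.e.\ when $\ALG$ serves the previous hard page), at which point $\ALG$'s cache depends only on the past and necessarily omits some hard page; then $\ALG$ pays $\ge1$ per round while $\OPT$, running Belady on the hard subsequence, faults $O(N/\kappa)$ times by the standard amortized argument, giving competitive ratio $\Omega(\kappa)=\Omega(n-\ell)$. For the randomized bound I apply Yao's principle to the distribution in which the hard pages of each block of $\kappa+1$ consecutive rounds form a uniformly random permutation of the $\kappa+1$ hard pages: the classical coupon-collector argument shows every deterministic algorithm (even with one-step lookahead) incurs $\Omega(\log\kappa)$ hard misses per block while $\OPT$ incurs exactly one, giving $\Omega(\log\kappa)=\Omega(\log(n-\ell))$.

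The crux is making the padding simultaneously hide the hard sequence and leave the competitive ratio intact. Hiding forces a round's non-hard block to have exactly $\ell-1$ distinct pages: fewer would make the window (whose padding pages repeat across rounds) reach into later rounds and expose several hard requests, while more would make the window close before the next hard request; this is exactly what pins down $\kappa=\min\{k,n-\ell\}$, since for $\ell\le n-k$ there is room for $k+1$ weight-$1$ pages whereas for larger $\ell$ the padding is too short to sustain that many and $\kappa$ must drop to $n-\ell$. Preserving the ratio is why the lock pages are weight $1$ (so $\ALG$ cannot profitably drop one) yet requested every round (so $\OPT$ holds them for free, contributing nothing to the gap), and why the cheap pages are light enough that their $\Theta(N)$ fetches vanish in the limit. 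The remaining steps are routine: verifying the window computation at every position within a round, confirming that one-step lookahead does not break the classical deterministic and randomized paging lower bounds, and bounding $\OPT$ on the adversarial (resp.\ random-permutation) hard subsequence by the textbook arguments.
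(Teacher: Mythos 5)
Your construction takes a genuinely different route from the paper's, which simply pads Albers's unweighted lower-bound instance with $n-k-1$ cheap pages between consecutive requests so that the effective lookahead drops to $\ell'=\ell-(n-k-1)$, and then invokes his $\Omega(k-\ell')$ and $\Omega(\log(k-\ell'))$ bounds as a black box. Your self-contained version is fine on the deterministic side: the padding arithmetic is right, the window indeed reveals exactly one future hard request and no more, and although your claim that the cache ``necessarily omits some hard page'' is not literally true (the algorithm may hold all $\kappa+1$ hard pages and omit a lock page instead), in either case it pays at least one unit of weight-$1$ cost per round, while Belady on the hard subsequence costs $O(1/\kappa)$ per round, so the $\Omega(\kappa)=\Omega(\min\{k,n-\ell\})$ ratio survives.

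The randomized half, however, has a genuine gap: the distribution you chose does not yield $\Omega(\log\kappa)$. If each block of $\kappa+1$ consecutive rounds carries a uniformly random \emph{permutation} of the $\kappa+1$ hard pages, a deterministic algorithm can exploit the fact that a hard page already requested in the current block cannot recur before the next block: after the first hard request of a block it evicts that (now dead) page, fetches the single hard page it was missing, and incurs no further hard miss in the block. Its expected hard cost is therefore $\Theta(1)$ per block, the same order as $\OPT$'s (which is between $1$ and roughly $2$ per block), so Yao's principle gives only $\Omega(1)$. The coupon-collector argument you cite belongs to the i.i.d.\ construction, not the permutation one. The fix is to draw each round's hard page independently and uniformly from the $\kappa+1$ hard pages: the cache state at the instant the next hard request enters the window is a function of the past only, so the algorithm pays at least $1/(\kappa+1)$ in expectation per round (either the fresh uniform hard request hits its missing hard page, or it is missing a lock page and pays $1$ outright), while $\OPT$'s phases have expected length $\Theta(\kappa\log\kappa)$ by coupon collecting, giving $\Omega(\log\kappa)=\Omega(\log(n-\ell))$. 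With that substitution the rest of your argument goes through.
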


In contrast to these lower bounds, we show that a prediction model that combines features of these individual models gives significant benefits to an online algorithm. In particular, combining PRP and $\ell$-strong lookahead, we define the following prediction model:

\begin{center}
\fbox{
\begin{minipage}{0.98\linewidth}
{\bf SPRP (``strong per-request prediction''):} On a request for page $p$, the predictor gives the next time-step when $p$ will be requested {\em and all page requests till that request}.
\end{minipage}}
\end{center}

This is similar to $(n-1)$-strong lookahead, but is slightly weaker in that it does not provide the first request of every page at the outset. After each of the $n$ pages has been requested, SPRP and $(n-1)$-strong lookahead are equivalent.

\begin{theorem} \label{thm:1-static}
There is a deterministic 2-competitive for weighted paging with SPRP.
\end{theorem}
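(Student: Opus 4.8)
The plan is to design a deterministic algorithm that, at each request for page $p$, uses the SPRP prediction — namely the full request sequence up to (and including) the next request for $p$ — to decide what to evict. The guiding principle is a Belady-style rule adapted to weights: among the pages currently in the cache, we want to evict a page whose ``next request'' is far in the future, but discounted by its weight, so that we never pay too much relative to what the optimal solution must pay. Concretely, I would maintain for every cached page $q$ a notion of when $q$ will next be requested; crucially, SPRP gives us this information for the page just requested, and by stitching together the prediction windows from earlier requests we can in fact determine, for every cached page, whether it is requested within the current prediction window and if so when. A page not requested within the current window is a natural eviction candidate.

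The cleanest way to get a $2$-competitive bound is via a potential-function / primal-dual style argument, or alternatively via a direct charging argument against \OPT's eviction schedule. I would set up a potential $\Phi$ that measures the ``weighted discrepancy'' between the algorithm's cache and \OPT's cache — for instance, a sum over pages in \OPT's cache but not \ALG's of some weight-dependent quantity, reminiscent of the analysis of weighted caching via the online primal-dual framework of Bansal {\em et al.}~\cite{bansal2012primal}. The key steps would be: (i) specify the eviction rule precisely, breaking ties by weight and by next-request time within the window; (ii) show that whenever \ALG\ fetches a page of weight $w$, either \OPT\ also fetches a page of comparable weight (so we can charge directly), or the potential drops by at least $w$; (iii) show that \OPT's own moves increase $\Phi$ by at most a controlled amount; (iv) combine these via a telescoping sum to conclude $\cost(\ALG) \le 2\cdot\cost(\OPT) + \Phi_0$, where the additive term is absorbed into the competitive ratio. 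Because the algorithm is deterministic and the prediction is exact, there is no probabilistic slack to handle — the whole argument is adversarial but with \ALG\ knowing the near future.

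The main obstacle, I expect, is (ii): arguing that the eviction rule never makes a ``bad'' choice that \OPT\ avoided, given only the window up to the next request of the {\em currently requested} page rather than a uniform lookahead of all pages. The subtlety is that the prediction windows for different requests have different lengths, so ``next request time'' is only partially known for some cached pages; I would need an invariant asserting that the algorithm's knowledge is always sufficient to identify a safe eviction, probably by an inductive argument showing that any page whose next request is not yet revealed can be safely kept (it cannot be the one \OPT\ would have us evict without also paying). A secondary technical point is handling the boundary/initialization cleanly — before all $n$ pages have been requested, SPRP is strictly weaker than $(n-1)$-strong lookahead — but since each page is fetched at most once before its first request, this contributes only an additive lower-order term and does not affect the ratio. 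If the direct charging argument proves unwieldy, I would fall back on the interleaving/reduction idea: simulate an offline-aware ``eviction only when forced'' strategy on each prediction window and show each window's cost is within a factor $2$ of \OPT's cost restricted to that window, then sum over windows.
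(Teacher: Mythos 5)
Your primary plan --- a weight-discounted Belady rule analyzed with a potential function --- runs into a fundamental obstacle that the paper itself documents. For \emph{weighted} paging there is no Belady-style greedy eviction rule that is offline-optimal (the offline problem is solved by LP-based methods, not by a farthest-in-the-future rule), and Theorem~\ref{thm:1-prp} shows that exact knowledge of every page's next request time yields only $\Omega(k)$ deterministic competitiveness. A rule that scores cached pages by ``next request time discounted by weight'' is using essentially PRP-type information plus a bounded window, and you give no mechanism by which the window rescues it; step~(ii) of your outline, which you yourself flag as the main obstacle, is exactly where this collapses. You never specify the eviction rule or the potential, so there is nothing to check.

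Your fallback --- run an offline strategy on each prediction window and sum --- is the route the paper actually takes, but the per-window claim you state is not the one that can be proved. The paper's \static algorithm runs an \emph{optimal offline} algorithm on each batch, and the bound it establishes is $S_j \le \OPT_j + \OPT_{j-1}$, not $S_j \le 2\,\OPT_j$: the algorithm's cache entering batch $j$ can differ from \OPT's, so one must pay to morph into \OPT's cache state before imitating it, and that morphing cost has to be charged somewhere outside window $j$. The missing ingredient is the structural fact that the batches are nested, $B_1 \subseteq B_2 \subseteq \cdots \subseteq B_{m-1}$ (on the last occurrence of $p$ in a batch, SPRP reveals $p$'s next occurrence, which lands in the next batch), so any page in the algorithm's cache but not in \OPT's at the boundary was requested during $B_{j-1}$ and hence was evicted by \OPT within $B_{j-1}$; the discrepancy is therefore at most $\OPT_{j-1}$, and summing telescopes to $2\,\OPT$. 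Without the nesting observation and this cross-batch charging, the windowed argument does not close.
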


So far, all of these results assume that the prediction model is completely correct. However, in general, predictions can have errors, and therefore, it is desirable that an algorithm gracefully degrades with increase in prediction error. To this end, we also give upper and lower bounds in terms of the prediction error. 

For unweighted paging, Lykouris and Vassilvitski~\cite{lykouris2018competitive} basically considered two measures of prediction error. The first, called $\ell_{pd}$ in this paper, is defined as follows: For each input request $p_t$, we increase $\ell_{pd}$ by $w(p_t)$ times the absolute difference between the predicted next-arrival time and the actual next-arrival time. For unweighted paging, Lykouris and Vassilvitskii~\cite{lykouris2018competitive} gave an algorithm with cost $O(\OPT+\sqrt{\ell_{pd}\cdot \OPT})$. Unfortunately, we rule out an analogous result for weighted paging.
\begin{theorem} \label{thm:1-lpd-lower}
For weighted paging with SPRP, there is no deterministic algorithm whose cost is $o(k) \cdot \OPT + o(\ell_{pd})$, and there is no randomized algorithm whose cost is $o(\log k)\cdot \OPT + o(\ell_{pd})$.
\end{theorem}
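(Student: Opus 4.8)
The plan is to exploit a structural blind spot in the measure $\ell_{pd}$: it penalizes only errors in the predicted \emph{next-arrival time} of each requested page (scaled by that page's weight), and is entirely insensitive to errors in the predicted \emph{intermediate request sequence} that SPRP additionally supplies. So I would aim to produce a family of SPRP instances on which every predicted next-arrival time is \emph{exactly correct}, forcing $\ell_{pd}=0$, while the predicted intermediate sequences are pure noise. On such instances the information that distinguishes SPRP from PRP is useless, and the content of Theorem~\ref{thm:1-prp} is precisely that PRP --- even with perfectly correct next-arrival predictions --- is already $\Omega(k)$-hard for deterministic and $\Omega(\log k)$-hard for randomized algorithms. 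This suggests a direct reduction to Theorem~\ref{thm:1-prp}.

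Concretely, I would take the hard instance family underlying Theorem~\ref{thm:1-prp}: a request sequence --- generated adaptively against the given algorithm in the deterministic case, and drawn from the hard distribution used with Yao's principle in the randomized case --- together with the predictor's next-arrival answers, which in a valid PRP instance are by definition the true next-arrival times. I would convert it to an SPRP instance by attaching, to each request for a page $p$ at time $t$ whose true next arrival is $\tau$, an ``intermediate sequence prediction'' equal to a fixed uninformative template: say a dedicated dummy page $D$ repeated $\tau - t - 1$ times, a length dictated exactly by the (correct) next-arrival answer. The actual requests falling in the interval $(t,\tau)$ are those produced by the PRP construction, which never requests $D$; since $D$ is never requested, its weight is irrelevant to $\OPT$, to $\cost(\ALG)$, and to $\ell_{pd}$.

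It then remains to verify three claims. (i) $\ell_{pd}=0$ on every instance in the family, because every predicted next-arrival time equals the true one --- handling the final occurrence of each page either by predicting ``no further request'' or by appending a standard flush so that every occurrence has a genuine next request. (ii) On these instances an SPRP algorithm has no more information than a PRP algorithm: the template attached to a request is a deterministic function of the current time step and the predicted next-arrival time, both of which are already visible to a PRP algorithm, so any SPRP algorithm can be simulated by a PRP algorithm and therefore inherits the lower bounds of Theorem~\ref{thm:1-prp} --- $\cost(\ALG)=\Omega(k)\cdot\OPT$ deterministically and $\Ex{\cost(\ALG)}=\Omega(\log k)\cdot\OPT$ in expectation (the latter by Yao over the same augmented distribution). (iii) Putting these together yields the theorem: a deterministic algorithm on this family has $\cost(\ALG)=\Omega(k)\cdot\OPT$ while $\ell_{pd}=0$, contradicting $\cost(\ALG)=o(k)\cdot\OPT+o(\ell_{pd})$; the randomized statement follows identically with $\log k$ in place of $k$.

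The only genuinely delicate step is (ii): one must rule out that the (false) intermediate sequences leak anything useful --- in particular, that an algorithm's ability to \emph{detect} that a template was false is not itself an advantage. This is exactly why I would make the template a fixed function of PRP-visible data: ``detecting falsehood'' then reveals nothing a PRP algorithm did not already know, and the reduction becomes purely syntactic. A minor secondary point is to confirm that the instances of Theorem~\ref{thm:1-prp} are invoked in a way that exposes the predictor's next-arrival outputs as their true values --- immediate, since otherwise they would not constitute a valid PRP instance --- so that re-using them keeps $\ell_{pd}$ identically zero.
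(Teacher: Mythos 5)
Your high-level plan---reduce to Theorem~\ref{thm:1-prp} by building an SPRP instance whose extra information is useless---is the same skeleton the paper uses, but your construction rests on a false premise about $\ell_{pd}$, and the error is fatal. In the paper's formalization there is a single prediction sequence $A$, and the ``predicted next arrival'' of request $B_i$ is not an independent number you are free to set correctly while corrupting everything else: it is $\predictnext(i)$, the position of the next occurrence of the page $B_i$ \emph{inside $A$}. So $\ell_{pd}$ is highly sensitive to the intermediate entries of $A$. If you overwrite the stretch of $A$ between two occurrences of $p$ with a dummy page $D$, then every real page $q$ whose true next occurrence lies in that stretch now has $\predictnext$ of its previous occurrence pushed past the stretch (or to $m+1$), and $\ell_{pd}$ picks up $w(q)$ times that displacement---on the block-structured hard instances this makes $\ell_{pd}$ enormous, so the $o(\ell_{pd})$ term absorbs the algorithm's cost and no contradiction follows. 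Conversely, forcing $\ell_{pd}=0$ forces $\predictnext(i)=\next(i)$ for all $i$, which forces $A_j=B_j$ at every position $j$ that is not the first occurrence of its page, i.e., $A$ agrees with $B$ everywhere except at most $k+1$ positions. Such a prediction is essentially perfect: an algorithm can run \static on $A$ and pay only a bounded additive surcharge for the few mispredicted first occurrences, so no $\Omega(k)\cdot\OPT$ separation can be exhibited on instances with $\ell_{pd}=0$. Your claims (i) and (ii) cannot simultaneously hold.

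The paper's proof threads this needle differently: it fixes the prediction sequence $A$ to be a deterministic nested-block template and perturbs the \emph{input} fed to the SPRP algorithm by a single request for a page $b$ of negligible weight $1/v$ per block. The error is then small but strictly positive ($O(w(a_0)+w(b))$ per block), and the argument needs \emph{two} facts on the same family: $\cost(\ALG)=\Omega(k)\cdot\OPT$ (via the coupling Lemmas~\ref{lem:reduction-alg} and~\ref{lem:reduction-opt}) \emph{and} $\cost(\ALG)=\Omega(\ell_{pd})$ (because $\ALG$ pays $\Omega(1)$ per block while the error grows by $O(1)$ per block, and $\ell_{pd}=\Theta(\ell_1)$ there since $w(b)$ is tiny). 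Your proposal omits the second fact entirely because you believed the error was identically zero; any correct proof of a bound of the form $o(k)\cdot\OPT+o(\ell_{pd})$ must control the ratio of the algorithm's cost to the error, not just to $\OPT$.
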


It turns out that the $\ell_{pd}$ error measure is closely related to another natural error measure that we call the $\ell_1$ measure. This is defined as follows: for each input request $p_t$, if the prediction $q_t$ is not the same as $p_t$, then increase $\ell_1$ by the sum of weights $w(p_t) + w(q_t)$. (This is the $\ell_1$ distance between the predictions and actual requests in the standard weighted star metric space for the weighted paging problem.) The lower bound for $\ell_{pd}$ continues to hold for $\ell_1$ as well, and is tight.

\begin{theorem} \label{thm:1-l1-lower} \label{thm:1-l1-upper}
For weighted paging with SPRP, there is no deterministic algorithm whose cost is $o(k) \cdot \OPT + o(\ell_{1})$, and there is no randomized algorithm whose cost is $o(\log k)\cdot \OPT + o(\ell_{1})$.
Furthermore, there is a deterministic algorithm with SPRP with cost $O (\OPT + \ell_1)$.
\end{theorem}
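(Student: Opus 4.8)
The plan is to give an algorithm that \emph{trusts the predictions}: it runs a simulation of the $2$-competitive algorithm of Theorem~\ref{thm:1-static} as though every predicted request were real, and performs cheap local repairs to its cache whenever reality contradicts the prediction. Concretely, at any point the real requests revealed so far, followed by the currently predicted future, form a complete request sequence — the \emph{believed sequence} — and our algorithm $\ALG$ holds the cache that the algorithm of Theorem~\ref{thm:1-static} would hold on this believed sequence (fed with the predicted chunks as if they were the true SPRP hints). When the next real request $p_t$ equals the active prediction $q_t$, nothing changes. When $p_t\neq q_t$ (a \emph{mismatch}), $\ALG$ first serves $p_t$, evicting one page if the cache is full and paying at most $w(p_t)$; it then adopts the fresh SPRP prediction triggered by $p_t$ as the new believed future and resumes the simulation from the updated configuration.

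For the analysis, let $\hat\sigma$ denote the (``stitched'') sequence whose $t$-th request is the prediction $q_t$ that was active just before reality reached step $t$. Then $\hat\sigma$ coincides with the true input $\sigma$ except at the mismatch steps, and by definition $\ell_1=\sum_{t:\,p_t\neq q_t}\bigl(w(p_t)+w(q_t)\bigr)$. The cost of $\ALG$ decomposes into (i) the cost of the simulated Theorem~\ref{thm:1-static} algorithm and (ii) the repair fetches, where (ii) is at most $\sum_{t:\,p_t\neq q_t} w(p_t)\le\ell_1$. For (i), Theorem~\ref{thm:1-static} yields that (i) is at most $2\,\OPT(\hat\sigma)$, modulo the re-planning issue discussed below. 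Finally I would invoke the perturbation bound $\OPT(\hat\sigma)\le\OPT(\sigma)+\ell_1$: a single changed request, $p_t\to q_t$, can be absorbed by an optimal schedule for $\sigma$ via a detour confined to steps $t$ and $t+1$ — evict $p_t$ and fetch $q_t$ to serve step $t$, then evict $q_t$ and fetch $p_t$ to restore the configuration — at extra cost $w(p_t)+w(q_t)$; since the detours for distinct mismatch steps do not interact, summing gives the claim. Combining these, $\cost(\ALG)\le 2\,\OPT(\sigma)+3\,\ell_1 = O(\OPT+\ell_1)$.

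The main obstacle is that the believed sequence is not a \emph{fixed} input: every mismatch overwrites the predicted suffix, so $\ALG$ is really simulating the Theorem~\ref{thm:1-static} algorithm on a family of inputs sharing a common prefix, and the SPRP hints fed to the simulation at early steps can later turn out to be wrong about $\hat\sigma$; one must therefore show that re-planning after a mismatch does not inflate term (i) beyond $2\,\OPT(\hat\sigma)+O(\ell_1)$. I would control this by a potential argument: maintain the invariant that, immediately after a repair at step $t$, the cache of $\ALG$ differs from the cache that the Theorem~\ref{thm:1-static} algorithm would hold on the \emph{new} believed sequence in at most the single page involved in the mismatch, and use a weight-scaled symmetric-difference potential to charge the cost of reconciling that discrepancy either to a genuine future request (an $\OPT$-type term) or to the weight $w(q_t)$ of the now-stale prediction (an $\ell_1$-type term). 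An alternative, possibly cleaner route is to not treat Theorem~\ref{thm:1-static} as a black box but to re-run its (primal--dual / potential-based) analysis directly on the real input while carrying the prediction-error contributions along; this would also make fully explicit the interaction between a repair fetch and the eviction the simulated algorithm would otherwise have performed. Either way, I expect the final bound to be $O(\OPT+\ell_1)$ as claimed.
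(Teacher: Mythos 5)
Your proposal addresses only the second half of the theorem; the lower bound (no algorithm with cost $o(k)\cdot\OPT + o(\ell_1)$, resp.\ $o(\log k)\cdot\OPT+o(\ell_1)$) is not touched. In the paper that half is obtained by a reduction to the PRP lower bounds of Section~\ref{sec:PRP}: on the adversarial block instance the prediction error per block is $O(1)$ while any algorithm's cost per block is $\Omega(1)$, so $\ell_1 = O(\cost(\ALG))$ and a guarantee of $o(k)\cdot\OPT+o(\ell_1)$ would contradict the $\Omega(k)$ bound. Some argument of this kind is required; it does not follow from the upper-bound construction.

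For the upper bound, your decomposition (simulation cost plus repair cost, plus the perturbation bound $\OPT(\hat\sigma)\le\OPT(\sigma)+O(\ell_1)$) matches the paper's, but your algorithm re-plans the believed future after every mismatch, and the ``re-planning issue'' you yourself flag as the main obstacle is exactly the step you leave unproved: you only sketch a potential-function invariant you \emph{would} maintain. This is not a cosmetic gap --- after a mismatch the entire predicted suffix changes, so the batches of the simulated \static run on the new believed sequence need not extend the old ones, the inclusion $B_1\subseteq\cdots\subseteq B_{m-1}$ underlying Theorem~\ref{thm:static} breaks, and it is not clear the cache discrepancy stays confined to a single page. The paper's \follow algorithm sidesteps all of this by never re-planning: it runs \static on the fixed prediction sequence $A$, ignoring the real input when deciding cache contents, so Theorem~\ref{thm:static} applies verbatim to give $\cost(\static)\le 2\,\OPT(A)$; the real requests are then served at extra cost at most $2\ell_1$ (one fetch and one eviction per mismatched position), and $\OPT(A)\le\OPT(B)+2\ell_1$ by the same perturbation argument you give. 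If you replace your repair-and-replan scheme with this ``commit to the prediction'' version, your analysis goes through; as written, the crux is missing.
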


One criticism of both the $\ell_{pd}$ and $\ell_1$ error measures is that they are not robust to insertions or deletions from the prediction stream. To counter this, Lykouris and 
Vassilvitski~\cite{lykouris2018competitive} used a variant of the classic edit distance measure, and showed a constant competitive ratio for this error measure. For weighted paging, we also consider a variant of edit distance, called $\ell_{ed}$ and formally defined in Section~\ref{sec:sprp}, which allows insertions and deletions between the predicted and actual request streams.\footnote{For technical reasons, neither $\ell_{ed}$ in this paper nor the edit distance variant in \cite{lykouris2018competitive} exactly match the classical definition of edit distance.} Unfortunately, as with $\ell_{pd}$ and $\ell_1$, we rule out algorithms that asymptoticaly benefit from the predictions while achieving sublinear dependence on $\ell_{ed}$.
Furthermore, if the algorithm were to use a cache with even one extra slot than the optimal solution, then we show that even for weighted paging, we can achieve a constant competitive algorithm. We summarize these results in the next theorem.

\begin{theorem} \label{thm:1-led-lower} \label{thm:1-led-upper}
For weighted paging with SPRP, there is no deterministic algorithm whose cost is $o(k) \cdot \OPT + o(\ell_{ed})$, and there is no randomized algorithm whose cost is $o(\log k)\cdot \OPT + o(\ell_{ed})$.

\smallskip\noindent
In the same setting, there exists a randomized algorithm that uses a cache of size $k+1$ whose cost is $ O (\OPT + \ell_{ed})$, where $\OPT$ uses a cache of size $k$.
\end{theorem}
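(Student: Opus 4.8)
For the lower bound I would argue that no new construction is needed: for every request stream $\sigma$ and every prediction stream, deleting each position at which the prediction differs from $\sigma$ and then inserting the true request there is a legal edit script turning the prediction into $\sigma$, so $\ell_{ed}\le\ell_1$ on every instance. Since the error penalties may be taken non-decreasing, a deterministic (resp.\ randomized) algorithm with cost $o(k)\cdot\OPT+o(\ell_{ed})$ (resp.\ $o(\log k)\cdot\OPT+o(\ell_{ed})$) would in particular have cost $o(k)\cdot\OPT+o(\ell_1)$ (resp.\ $o(\log k)\cdot\OPT+o(\ell_1)$), contradicting Theorem~\ref{thm:1-l1-lower}.

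For the upper bound, write $\sigma$ and $\hat\sigma$ for the true and predicted streams. The plan is to run, in simulation, the deterministic $2$-competitive SPRP algorithm of Theorem~\ref{thm:1-static} (call it $B$) with a cache $C$ of size $k$, fed an online ``alignment'' of $\hat\sigma$ with $\sigma$, while the actual cache $R$ of size $k+1$ maintains the invariant $C\subseteq R$ — that is, $R$ is $C$ together with a single spare page. On a true request $p_t$: if $p_t$ is the next unconsumed request of $\hat\sigma$, feed it to $B$, mirror $B$'s move inside $R$ (possibly evicting the current spare page, whose weight is accounted for below), and serve $p_t$ for free since $p_t\in C\subseteq R$. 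Otherwise the streams have drifted, and I would \emph{reconcile}: either recognize a block of upcoming $\hat\sigma$-requests as phantom and skip them, advancing $B$'s pointer past them \emph{without} feeding them to $B$ so that $B$ effectively runs on the common subsequence of $\sigma$ and $\hat\sigma$; or declare $p_t$ a surprise, load it into the spare slot, serve it, and leave $B$ in place. Every page that ever occupies the spare slot is either such a surprise page (charge the weight paid when it later leaves to the corresponding insertion in an optimal edit script) or a page $B$ just evicted (charge that weight to $\cost(B)$), so $\cost(\ALG)=O(\cost(B)+\ell_{ed})$; and since $B$ is only ever run on the common subsequence, which is a subsequence of $\sigma$ and hence has paging optimum at most $\OPT$, its $2$-competitiveness gives $\cost(B)=O(\OPT)$ and thus $\cost(\ALG)=O(\OPT+\ell_{ed})$. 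Crucially, this accounting never passes through $\OPT(\hat\sigma)$, which may be far larger than $\OPT+\ell_{ed}$ because phantom requests in $\hat\sigma$ can force evictions that the true instance never incurs.

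The hard part will be the reconciliation step, which is essentially an online weighted sequence-alignment problem: $\ell_{ed}$ is defined through the \emph{offline}-optimal alignment of $\hat\sigma$ with $\sigma$, yet online — with no lookahead into the true future — one must decide whether a drift is more cheaply explained by a block of deletions in $\hat\sigma$ or by a surprise insertion in $\sigma$, and still remain within a constant factor of that optimal alignment. I would control this with a charging (or potential) argument against a fixed optimal alignment, using the spare $(k{+}1)$-st cache slot as a buffer that absorbs one unit of misalignment while a \emph{randomized} re-alignment decision is settled; that the buffer is genuinely needed is precisely what the first half of the theorem asserts, and the randomness is what keeps an adversary from trapping the alignment. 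A secondary point, also charged against $\ell_{ed}$: running $B$ on the common subsequence online ``corrupts'' the SPRP lookahead $B$ sees with not-yet-exposed future deletions, so one must separately bound $B$'s excess cost from this corrupted lookahead by $O(\ell_{ed})$. The cache-invariant bookkeeping I expect to be routine by comparison.
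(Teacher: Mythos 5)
Your lower-bound reduction is in spirit the paper's own final step (the paper proves the $\ell_1$, $\ell_{pd}$, and $\ell_{ed}$ lower bounds from one construction and simply observes that $\ell_{ed}=\ell_1$ on that instance), but your justification is incorrect: $\ell_{ed}$ here is not the classical edit distance, it is the minimum weight of elements left unmatched by a constrained non-crossing matching, and it is \emph{not} dominated by $\ell_1$ on every instance. For example, if the prediction sequence $A$ is longer than the input $B$ and its tail consists of heavy pages that are never requested, then $\ell_1=0$ (the sum ranges only over positions $1,\dots,n$) while those predicted requests are necessarily unmatched and make $\ell_{ed}$ arbitrarily large. The repair is cheap --- you only need $\ell_{ed}\le\ell_1$ on the hard instances of Theorem~\ref{thm:1-l1-lower}, where it holds with equality --- but as written the first step of your argument is false, and "legal edit script" reasoning does not apply to this definition.

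For the upper bound you correctly isolate the crux --- an online alignment rule whose cost stays within $O(\ell_{ed})$ of the \emph{offline}-optimal alignment --- and then do not supply it; asserting that a charging or potential argument with randomized re-alignment "should" work is not a proof, and this rule is essentially the entire content of the theorem. The paper's resolution is a concrete deterministic test: buffer the unexplained input in a queue $S$ and resume imitating the offline schedule only when some prediction prefix satisfies $\ell'_{ed}(A(s{+}1,t),S)<\frac{1}{3}(w(A(s{+}1,t))+w(S))$, where $\ell'_{ed}$ is a further-constrained edit distance with $\ell_{ed}\le\ell'_{ed}\le3\ell_{ed}$; showing this threshold rule loses only $O(\ell'_{ed})$ is a delicate three-case induction on the alignment's split point (Lemma~\ref{kplus:corelemma3}), and even the feasibility of the switch needs an argument (Lemma~\ref{lem:learn-feasible}). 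Two further points in your sketch go against how the proof actually works: (i) you insist on never passing through the optimum of the prediction sequence, but the paper does exactly that, introducing a $(k{+}1)$-st \emph{memoryless} slot so that $\cost(\OPT^{+1}(A))\le\cost(\OPT(B))+2\ell_{ed}$ (Lemma~\ref{kplus:corelemma1}); this also eliminates your "corrupted lookahead" worry, since the simulated algorithm's input \emph{is} the prediction and its SPRP is therefore exact. (ii) The randomization in the theorem is not in the alignment decisions at all --- those are deterministic --- but in rounding an LP for the offline memoryless-slot problem, so the role you assign to randomness is not the one it plays.
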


\subsection{Related work}
We now give a brief overview of the online paging literature, highlighting the results that consider a prediction model for future requests. For unweighted paging, the optimal offline algorithm is Belady's algorithm, which always evicts the page that appears farthest in the future~\cite{belady1966study}. For online paging, Sleator and Tarjan~\cite{sleator1985amortized} gave a deterministic $k$-competitive algorithm, and Fiat et al.~\cite{fiat1991competitive} gave a randomized $O(\log k)$-competitive algorithm; both results were also shown to be optimal. For weighted online paging, Chrobak {\em et al.}~\cite{chrobak1991new} gave a deterministic $k$-competitive algorithm, and Bansal {\em et al.}~\cite{bansal2012primal} gave an $O(\log k)$-competitive randomized algorithm, which are also optimal by extension.

Recently, Lykouris and Vassilvitskii~\cite{lykouris2018competitive} introduced a prediction model that we call PRP in this paper: on each request $p$, the algorithm is given a prediction of the next time at which $p$ will be requested. For unweighted paging, they gave a randomized algorithm, based on the ``marker'' algorithm of Fiat et al.~\cite{fiat1991competitive}, with competitive ratio $O(\min(\sqrt{\ell_{pd}/\OPT}, \log k))$. Here, $\ell_{pd}$ is the absolute difference between the predicted arrival and actual arrival times of requests, summed across all requests. They also perform a tighter analysis yielding a competitive ratio of $O(\min(\eta_{ed}/\OPT, \log k))$, where $\eta_{ed}$ is the edit distance between the predicted sequence and the actual input. Subsequently, Rohatgi~\cite{rohatgi2020near} improved the former bound to $O(1+\min((\ell_{pd}/\OPT)/k,1)\log k)$ and also proved a lower bound of $\Omega(\log \min((\ell_{pd}/\OPT)/(k\log k),k))$.

Albers~\cite{albers1993influence} studied the $\ell$-strong lookahead model: on each request $p$, the algorithm is shown the next $\ell$ distinct requests after $p$ and all pages within this range. For unweighted paging, Albers~\cite{albers1993influence} gave a deterministic $(k-\ell)$-competitive algorithm and a randomized $2H_{k-\ell}$-competitive algorithm. Albers also showed that these bounds are essentially tight: if $l\leq k-2$, then any deterministic algorithm has competitive ratio at least $k-\ell$, and any randomized algorithm has competitive ratio at least $\Omega(\log(k-\ell))$.

Finally, we review the paging model in which the offline adversary is restricted to a cache of size $h < k$, while the online algorithm uses a larger cache of size $k$. For this model, Young~\cite{young2002line} gave a deterministic algorithm with competitive ratio $k/(k-h+1)$ and showed that this is optimal. In another paper, Young~\cite{young1991online} showed that the randomized ``marker'' algorithm is $O(\log (k/k-h))$-competitive and this bound is optimal up to constants.

\paragraph*{Remark}
The independent, concurrent work of Antoniadas et al.~\cite{antoniadis2020online} has slight overlap with ours. In particular, they also showed that the PRP prediction model does not provide asymptotic benefits for randomized algorithms. They also gave a prediction-based randomized algorithm for \emph{unweighted} caching, and they note that their prediction error is not directly comparable to the error used by Lykouris and Vassilvitskii~\cite{lykouris2018competitive} and Rohatgi~\cite{rohatgi2020near}.

\paragraph*{Roadmap}
In Section~\ref{sec:PRP}, we show the lower bounds stated in Theorem~\ref{thm:1-prp} for the PRP model. The lower bounds for the $\ell$-strong lookahead model stated in Theorem~\ref{thm:1-ell-small} are proven in Section~\ref{sec:l-strong}. In Section~\ref{sec:static}, we state and analyze the algorithm for the SPRP model with no error, thereby proving Theorem~\ref{thm:1-static}. Finally, in Section~\ref{sec:sprp}, we consider the SPRP model with errors, and focus on the upper and lower bounds in Theorems~\ref{thm:1-lpd-lower},~\ref{thm:1-l1-lower}, and~\ref{thm:1-led-lower}.

\section{The Per-Request Prediction Model (PRP)} \label{sec:PRP}
In this section, we give the lower bounds stated in Theorem~\ref{thm:1-prp} for the PRP model. Our strategy, at a high level, will be the same in both the deterministic and randomized cases: we consider the special case where the cache size is exactly one less than the number of distinct pages. We then provide an algorithm that generates a specific input. In the deterministic case, this input will be adversarial, based on the single page not being in the cache at any time. In the randomized case, the input will be oblivious to the choices made by the paging algorithm but will be drawn from a distribution. We will give a brief overview of the main ideas that are common to both lower bound constructions first, and then give the details of the randomized construction in this section.

Let us first recall the $\Omega(k)$ deterministic lower bound for unweighted caching without predictions. Suppose the cache has size $k$ and the set of distinct pages is $\{a_0, a_1, \ldots, a_k\}$. At each step, the adversary requests the page $a_\ell$ not contained in the cache of the algorithm $\ALG$. Then $\ALG$ incurs a miss at every step, while $\OPT$, upon a miss, evicts the page whose next request is furthest in the future. Therefore, $\ALG$ misses at least $k$ more times before $\OPT$ misses again. 

Ideally, we would like to imitate this construction. But, the adversary cannot simply request the missing page $a_\ell$ because that could violate the predictions made on previous requests. Our first idea is to replace this single request for $a_\ell$ with a ``block'' of requests of pages {\em containing} $a_\ell$
in a manner that all the previous predictions are met, but $\ALG$ still incurs the cost of page $a_\ell$ in serving this block of requests.

But, how do we guarantee that $\OPT$ only misses requests once for every $k$ blocks? Indeed, it is not possible to provide such a guarantee. Instead, as a surrogate for $\OPT$, we use an array of $k$ algorithms $\ALG_i$ for $1\leq i\leq k$, where each $\ALG_i$ follows a fixed strategy: maintain all pages except $a_0$ and $a_i$ permanently in the cache, and swap $a_0$ and $a_i$ as required to serve their requests. Our goal is to show that the sum of costs of all these algorithms is a lower bound (up to constants) on the cost of $\ALG$; this would clearly imply an $\Omega(k)$ lower bound. 

This is where the weights of pages come handy. We set the weight $w(a_i)$ of page $a_i$ in the following manner: $w(a_i) = c^i$ for some constant $c \geq 2$. Now, imagine that a block requested for a missing page $a_\ell$ only contains pages $a_0, a_1, \ldots, a_\ell$ (we call this an $\ell$-block). The algorithms $\ALG_i$ for $i \leq \ell$ suffer a cache miss on page $a_i$ in this block, while the remaining algorithms $\ALG_i$ for $i > \ell$ do not suffer a cache miss in this block. Moreover, the sum of costs of all the algorithms $\ALG_i$ for $i \leq \ell$ in this block is at most a constant times that of the cost of $\ALG$ alone, because of the geometric nature of the cost function. 

The only difficulty is that by constructing blocks that do not contain pages $a_i$ for $i > \ell$, we might be violating the previous predictions for these pages. To overcome this, we create an invariant where for every $i$, an $(i+1)$-block must be introduced after a fixed number of $i$-blocks. Because of this invariant, we are sometimes forced to introduce a larger block than that demanded by the missing page in $\ALG$. To distinguish between these two types of blocks, we call the ones that exactly correspond to the missing page a {\em regular} block, and the ones that are larger {\em irregular} blocks. Irregular blocks help preserve the correctness of all previous predictions, but the sum of costs of $\ALG_i$'s on an irregular block can no longer be bounded against that of $\ALG$. Nevertheless, we can show that the number of irregular blocks is small enough that this extra cost incurred by $\ALG_i$'s in irregular blocks can be charged off to the regular blocks, thereby proving the deterministic lower bound. The randomized lower bound follows the same intuition.

\subsection{Deterministic Lower Bound} \label{subsec:prp-det}
Now we give a formal proof of the following theorem.

\begin{theorem} \label{thm:prp-det}
For weighted paging with PRP, any deterministic algorithm is $\Omega(k)$-competitive.
\end{theorem}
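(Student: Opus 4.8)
The plan is to specialize to the worst case $n=k+1$, with pages $a_0,\dots,a_k$ carrying geometric weights $w(a_i)=c^i$ for a fixed constant $c\ge 2$, and to feed $\ALG$ a request sequence built from \emph{blocks}: the ``$\ell$-block'' is the burst $a_0,a_1,\dots,a_\ell$. Since $\ALG$ is deterministic, the adversary can simulate it, so at the start of each block it knows the unique page $a_{\ell_t}$ currently absent from $\ALG$'s cache; it will always issue a block of size $s_t\ge \ell_t$, which therefore contains $a_{\ell_t}$ and forces $\ALG$ to pay at least $c^{\ell_t}$ on that block. The entire construction must be arranged so that every PRP prediction the adversary reveals is consistent with the sequence that is actually produced.

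As a surrogate for $\OPT$, I would use the $k$ fixed strategies $\ALG_1,\dots,\ALG_k$, where $\ALG_i$ permanently holds every page except $a_0$ and $a_i$ and swaps those two in and out as their requests demand. A short calculation shows $\ALG_i$ pays nothing on blocks of size $<i$ and, amortizing the cheap $a_0$-fetches, only $O(c^i)$ per block of size $\ge i$. Hence $\sum_{i=1}^k \cost(\ALG_i)\le \sum_t \sum_{i\le s_t} O(c^i) = O\!\big(\sum_t c^{s_t}\big)$. Because each $\ALG_i$ is a feasible algorithm, $\OPT\le \min_i\cost(\ALG_i)\le \tfrac1k\sum_i\cost(\ALG_i)$, so it suffices to prove $\cost(\ALG)=\Omega\!\big(\sum_t c^{s_t}\big)$, which then gives $\cost(\ALG)=\Omega(k)\cdot\OPT$.

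The crux is the block-generation rule. The difficulty is that an $\ell$-block omits $a_j$ for every $j>\ell$, so the prediction revealed for $a_j$ at its previous request must already name the correct future block of size $\ge j$, even though that block has not yet been chosen. I would break this circular dependence by having the adversary \emph{commit}, each time it serves $a_j$, to a definite future ``$\ge j$-block'', using a rule that depends only on the past and enforces the invariant that a block of size $\ge i+1$ is issued after a bounded number $N_i$ of consecutive size-$i$ blocks, where $N_i$ grows geometrically in $i$ (roughly $N_i$ proportional to $c^i$). With the predictions thus pinned down, $\ALG$'s trajectory is determined and the adversary can carry the construction forward. Call a block \emph{regular} if $s_t=\ell_t$ (the adversary issued exactly what $\ALG$'s missing page demanded) and \emph{irregular} if the invariant forced $s_t>\ell_t$. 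On a regular $\ell$-block, $\ALG$ pays at least $c^\ell$ while the surrogates collectively pay only $O(c^\ell)$, so regular blocks are ``paid for''. The last step is to charge the irregular blocks: using the geometric $N_i$, one shows $\sum_{\text{irregular }t} c^{s_t}$ is at most a constant times the regular contribution (an irregular block of size $s$ can be amortized against the cost $\ALG$ incurs between the two consecutive $\ge s$-blocks that bracket it). Putting the pieces together yields $\cost(\ALG)\ge \Omega(1)\sum_t c^{s_t}=\Omega(k)\cdot\OPT$.

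The main obstacle is exactly that last pair of points: choosing the $N_i$ and the precise commitment rule so that (a) the adversary is never ``stuck'' — it can always fit $\ALG$'s currently missing page into the block it is forced to issue without violating any earlier commitment — and (b) the cost that irregular blocks add to the surrogates is only a constant factor over that of the regular blocks. The remaining ingredients (the geometric weights, the per-block bound on $\cost(\ALG_i)$, and the reduction to $\sum_t c^{s_t}$) are routine.
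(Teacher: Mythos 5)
Your plan follows the paper's proof essentially step for step: $k+1$ pages with weights $c^i$, adversarial blocks targeting the page missing from $\ALG$'s cache, the surrogates $\ALG_1,\dots,\ALG_k$ that permanently hold everything except $a_0$ and $a_i$, and the regular/irregular dichotomy with irregular blocks amortized against regular ones. The two points you defer as ``the main obstacle'' are, however, the actual content of the argument, and your sketch of them needs two corrections. First, on prediction consistency: a PRP prediction names an exact \emph{time-step}, not a block, so committing to ``a future block of size $\ge j$'' is not enough. The adversary must commit, each time it serves $a_j$, to the precise time $u_j$ of the next request for $a_j$, and then \emph{pad} each block so that promise is met: an $\ell$-block is a run of $a_0$'s, then $a_1$'s, and so on, with each run stretched so that $a_j$ lands exactly at time $u_j$. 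The paper sets $u_j \leftarrow t + (2c+2)^j$ after each block containing $a_j$ and proves the invariant $u_0<u_1<\cdots<u_k$, which is exactly what makes this padding well-defined and guarantees the adversary is never stuck. Second, your guess that the thresholds $N_i$ must grow like $c^i$ is unnecessary (though not fatal): a \emph{constant} threshold $N_i=2c$ already suffices, because it yields the recurrence $v'_i \le \tfrac{1}{2c}(v'_{i-1}+v_{i-1})$ for the number of irregular $i$-blocks, and the resulting damping factor $(2c)^{-(j-j')}$ beats the weight growth $c^{j-j'}$ in the double sum bounding $\sum_i\cost(\ALG_i)$, giving $\sum_{i=1}^k \cost(\ALG_i) \le 32\cdot\cost(\ALG)$ and hence the $\Omega(k)$ bound. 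With those two pieces supplied, your outline becomes the paper's proof.
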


For simplicity, we assume that all algorithms start with an empty cache. While generating the input sequence, we will maintain variables $u_i$ and $t$ that satisfy the following invariants:
\begin{itemize}
    \item The value of $u_i$ denotes the next time at which page $a_i$ will arrive.
    \item The value of $t$ is the number of requests that have been made, initialized to $t=0$.
\end{itemize}
The input is defined as follows:
\begin{enumerate}
    \item For $0\leq i\leq k$, let $u_i=(2c+2)^i$, and for $0 \leq i < k$, let $y_i=0$.
    \item Repeat the following:\label{prp1:repeat}
    \begin{enumerate}
        \item Let $\ell$ denote the largest index such that $a_{\ell}$ is not in the cache. 
        \item Increase $\ell$ until $\ell = k$ or $y_\ell < 2c$. \label{prp1:inc_i}
        \item For $j$ from $0$ to $\ell$, \label{prp1:for}
        \begin{enumerate}
            \item Set all the requests from time $t+1$ through $u_j-1$ as $a_{j-1}$. (Note: If $j = 0$, then $u_j = t+1$, so this step is empty.)
            \item Set the request at time $u_j$ to be $a_j$.
            \item Let $t = u_j$.
        \end{enumerate}
        \item For $0\leq j\leq \ell$, let $u_j=t+(2c+2)^j$. \label{prp1:uj}
        \item For $0\leq j<\ell$, let $y_j=0$. If $\ell<k$, increase $y_{\ell}$ by one.
    \end{enumerate}
\end{enumerate}

We call the requests generated each time we enter Step (\ref{prp1:repeat}) a \emph{block}; if the final value of $\ell$ is $i$ then this is an \emph{$i$-block}.
Let us give an overview of the lower bound argument. Firstly, we show that every $i$-block is a contiguous sequence of $a_0$'s, then $a_1$'s, and so on, ending with a single $a_i$ (Lemma~\ref{lem:det-contiguous}). Thus, for each such block, $\ALG$ incurs a cost of at least $c^i$, because at the beginning of this block, the cache of $\ALG$ does not contain the page $a_i$.

On the other hand, for each $i \in \{1, 2, \ldots, k\}$, consider the algorithm $\ALG_i$ defined as follows: upon a cache miss, evict $a_i$ if it is in the cache, and $a_0$ otherwise. Notice that $\ALG_i$ incurs a cost of roughly $c^i$ in every $j$-block for any $j \geq i$. Thus, after we bound the total number of $i$-blocks (Lemma~\ref{lem:num-i-blocks}), we can conclude that $\cost(\ALG)$ is $\Omega(k)$ times the average cost of the $\ALG_i$ (Lemma~\ref{lem:det-alg-lower}). Since the optimal algorithm is no worse than the average of these $k$ algorithms, the theorem follows. We now begin with the formal analysis.

\begin{lemma} \label{lem:det-contiguous}
For every $\ell$, an $\ell$-block is a contiguous sequence of $a_0$'s, then $a_1$'s, and so on, ending with a single $a_\ell$.
\end{lemma}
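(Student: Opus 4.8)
The plan is to prove Lemma~\ref{lem:det-contiguous} by unrolling the definition of an $\ell$-block given in Step~(\ref{prp1:repeat}) and tracking the variable $t$ together with the invariant that $u_j$ records the next scheduled arrival of $a_j$. First I would establish the maintained invariant that whenever we enter Step~(\ref{prp1:repeat}), the values $u_0 < u_1 < \cdots < u_k$ are strictly increasing with $u_0 = t+1$. This holds initially since $u_i = (2c+2)^i$ and $t=0$, and it is restored at the end of each block: Step~(\ref{prp1:uj}) resets $u_j = t + (2c+2)^j$ for $0 \le j \le \ell$ using the updated $t = u_\ell$, and the indices $j > \ell$ keep their old $u_j$ values, which we must check are still larger than the new $u_\ell$ (this uses that the block only advanced $t$ up to the old $u_\ell$, and $(2c+2)^{j} \ge (2c+2)^{\ell+1} > $ the total length of any single block — this is essentially Lemma~\ref{lem:num-i-blocks}'s domain, so I may instead verify it directly by a crude length bound).

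Given that invariant, the proof of the lemma itself is a direct induction on the loop variable $j$ in Step~(\ref{prp1:for}). At the start of iteration $j$, the value of $t$ equals $u_{j-1}$ (for $j \ge 1$; for $j=0$ it equals the $t$ from the previous block, and $u_0 = t+1$). Step~(i) fills times $t+1, \ldots, u_j - 1$ with copies of $a_{j-1}$; since $u_0 = t+1$ when $j=0$ this range is empty, matching the parenthetical note; for $j \ge 1$ the range is nonempty precisely because $u_j > u_{j-1} = t$. Step~(ii) places a single $a_j$ at time $u_j$, and Step~(iii) updates $t \leftarrow u_j$, reestablishing the induction hypothesis for the next iteration. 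Concatenating over $j = 0, 1, \ldots, \ell$, the block consists of (possibly empty) runs: a run of $a_0$'s (actually empty, coming from $j=1$ filling with $a_0$ — wait, I need to be careful: the $a_0$'s come from iteration $j=1$, the $a_1$'s from iteration $j=2$, etc., and the lone $a_j$ at the end of iteration $j$ contributes the ``singletons''), so when reassembled in time order we get $a_0$'s (from $j=1$), then the lone $a_0$? No — let me re-examine: iteration $j=0$ contributes just $a_0$; iteration $j=1$ contributes a block of $a_0$'s then a single $a_1$; iteration $j=2$ contributes $a_1$'s then a single $a_2$; in general iteration $j$ contributes $a_{j-1}$'s then $a_j$. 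In time order this is: $a_0$ (from $j=0$), then $a_0$'s then $a_1$ (from $j=1$), then $a_1$'s then $a_2$ (from $j=2$), \ldots, then $a_{\ell-1}$'s then $a_\ell$ (from $j=\ell$). Grouping consecutive equal pages, this is exactly a run of $a_0$'s, then a run of $a_1$'s, \ldots, then a single $a_\ell$, as claimed.

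The main obstacle, and the only non-bookkeeping step, is verifying that the loop in Step~(\ref{prp1:for}) is well-defined throughout, i.e., that $u_j > t$ at the start of each iteration so that the "fill" range and the placement at $u_j$ respect the monotonically increasing time counter $t$. This reduces to the increasing-$u_j$ invariant above, whose only delicate point is that the $u_j$ for $j > \ell$, carried over unchanged from before the block, still exceed the post-block value of $t = u_\ell^{\text{old}}$. I would handle this by bounding the length of an $\ell$-block: each iteration $j$ of Step~(\ref{prp1:for}) advances $t$ from $u_{j-1}^{\text{old}}$ to $u_j^{\text{old}}$, so the block ends with $t = u_\ell^{\text{old}} < u_{\ell+1}^{\text{old}} \le u_j^{\text{old}}$ for all $j > \ell$, which is immediate from the pre-block increasing invariant — so in fact no length bound is needed, just the observation that within a block we only ever set $t$ to old $u_j$ values with $j \le \ell$. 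This closes the induction and hence the lemma.
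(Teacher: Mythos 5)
Your overall strategy is the same as the paper's: reduce the lemma to the invariant that $u_0 < u_1 < \cdots < u_k$ holds at the start of every block, and then observe that the inner loop of Step~(\ref{prp1:for}) mechanically produces the claimed runs of $a_0$'s, $a_1$'s, \dots, ending in a single $a_\ell$. That bookkeeping part of your argument is fine. The problem is that you misidentify, and then dismiss, the one genuinely delicate step.

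After an $\ell$-block ending at time $t = u_\ell^{\mathrm{old}}$, Step~(\ref{prp1:uj}) sets $u_\ell^{\mathrm{new}} = u_\ell^{\mathrm{old}} + (2c+2)^\ell$, while $u_{\ell+1}$ is left unchanged. The invariant you must restore is therefore $u_\ell^{\mathrm{old}} + (2c+2)^\ell < u_{\ell+1}$, \emph{not} merely $t^{\mathrm{new}} = u_\ell^{\mathrm{old}} < u_{\ell+1}$. Your final paragraph proves only the latter (``the block only ever sets $t$ to old $u_j$ values with $j \le \ell$'') and concludes ``no length bound is needed,'' which is exactly where the argument fails: each successive $\ell$-block pushes $u_\ell$ up by another $(2c+2)^\ell$ while $u_{\ell+1}$ stays fixed, so after enough consecutive $\ell$-blocks $u_\ell$ would overtake $u_{\ell+1}$ and the ordering would break. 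A ``crude length bound'' on a single block (your first paragraph's alternative) does not repair this either, since the danger is cumulative across many blocks. The missing ingredient is the counter $y_\ell$ and Step~(\ref{prp1:inc_i}), which you never invoke: between two consecutive updates of $u_{\ell+1}$ (say the earlier one sets $u_{\ell+1} = t_0 + (2c+2)^{\ell+1}$, and also $u_\ell = t_0 + (2c+2)^\ell$), at most $2c$ $\ell$-blocks can occur, because once $y_\ell \ge 2c$ the procedure is forced to produce an $(\ell+1)$-plus block. Hence $u_\ell \le t_0 + (2c+1)(2c+2)^\ell < t_0 + (2c+2)^{\ell+1} = u_{\ell+1}$, which is the paper's argument and the step your proof needs but lacks.
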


\begin{proof}
It suffices to show $u_0 < u_1 < \ldots < u_k$ at Step~\ref{prp1:repeat}; this clearly holds for the initial values of the $u_i$. Thus, it suffices to prove $u_\ell < u_{\ell + 1}$ because $u_0 < u_1 < \ldots < u_\ell$ from Step~\ref{prp1:uj} and the value of $u_j$ for $j \geq \ell + 1$ remains unchanged within each step.

Suppose $u_{\ell+1} = t_0 + (2c+2)^{\ell + 1}$ for some $t_0$, and for contradiction, suppose the value of $u_\ell$ exceeds $u_{\ell+1}$ at some point $t > t_0$. Since the value of $u_{\ell+1}$ has not changed, the blocks between $t$ and $t_0$ must all be $j$-blocks for $j \leq \ell$. Furthermore, the value of $u_\ell$ only changes after we create an $\ell$-block, and each time, it increases by $(2c+2)^\ell$. However, the number of $\ell$-blocks that have appeared is at most $2c$ because of the condition in Step~\ref{prp1:inc_i}: if $y_\ell \geq 2c$, then we would have created an $(\ell+1)$-plus block. Thus, the value of $u_\ell$ is at most $t_0 + 2c \cdot (2c+2)^\ell < t_0 + (2c+2)^{\ell+1} = u_{\ell+1}$.
\end{proof}

Let $v_i$ denote the number of regular $i$-blocks, and let $v'_i$ denote the number of irregular $i$-blocks. For each $i \in \{0, 1, \ldots k\}$, $\ALG$ will miss on page $a_i$ in each of the $v_i$ regular blocks. This implies
$$\cost(\ALG)\geq v_0+v_1c+v_2c^2+\cdots+v_kc^k.$$

\begin{lemma} \label{lem:num-i-blocks}
For any $i \geq 1$, the total number of $i$-blocks is at most $\sum_{j=0}^{i}\left( \frac{v_j}{(2c)^{i-j}} \right)$.
\end{lemma}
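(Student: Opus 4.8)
The plan is to prove the bound by induction on $i$, tracking how the counter $y_\ell$ governs the frequency with which larger blocks must appear. The key structural fact is the invariant encoded in Step~\ref{prp1:inc_i}: whenever we are about to create an $\ell$-block, we first check whether $y_\ell < 2c$; if not, we are forced to increase $\ell$ and create an $(\ell+1)$-plus block. Moreover, $y_j$ is reset to $0$ for all $j < \ell$ every time an $\ell$-block (or larger) is created (Step~\ref{prp1:uj}'s companion reset), and $y_\ell$ is incremented by one each time an $\ell$-block is created (when $\ell < k$). So between two consecutive $(i+1)$-plus blocks, the counter $y_i$ starts at $0$, is incremented by one for each $i$-block, and never reaches $2c$ — hence at most $2c - 1 < 2c$ many $i$-blocks can occur in each such ``epoch.''

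First I would set up the counting carefully. Let $N_i$ denote the total number of $i$-blocks and let $M_i = \sum_{j \geq i} N_j$ denote the total number of $i$-plus blocks (blocks with final index $\geq i$). The observation above says that the $i$-blocks are partitioned into epochs delimited by $i$-plus-but-not-$i$ blocks, i.e. by $(i+1)$-plus blocks, of which there are $M_{i+1}$; adding one for the final partial epoch, there are at most $M_{i+1} + 1$ epochs, each containing fewer than $2c$ many $i$-blocks. This gives $N_i \le 2c \cdot M_{i+1}$ once we absorb the ``$+1$'' — and here I should double-check the boundary bookkeeping, since a clean bound $N_i \le 2c \cdot M_{i+1}$ (rather than $2c(M_{i+1}+1)$) is what makes the geometric sum telescope correctly. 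The cleanest route is probably to observe that every $i$-block can be charged to the next $(i+1)$-plus block that follows it (each such block receiving at most $2c-1$ charges by the counter argument), with a negligible additive slack that one can either fold into constants or handle by noting the very first block is an $(i+1)$-plus block for $i$ small enough, or by a slightly more careful accounting of the initial $u_i$ values. This boundary handling is the main obstacle: the counter argument itself is immediate, but getting the constant exactly $2c$ (not $2c+O(1)$) requires attention to how epochs are opened and closed.

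Granting $N_i \le 2c \cdot M_{i+1}$, the rest is a routine unwinding. We have $M_i = N_i + M_{i+1} \le (2c+1) M_{i+1}$, but more usefully, I want a bound on $M_i$ in terms of the regular-block counts $v_j$. Here I would use that every $i$-block is either regular or irregular, and crucially an irregular $i$-block is created \emph{only} because some $y_\ell$ hit $2c$ — so an irregular $i$-block is always preceded (within the same ``forced escalation'') by exactly $2c$ many $(i-1)$-blocks whose counter triggered it; more simply, irregular blocks of index $i$ are themselves counted among the $(i-1)$-plus blocks' escalations. Iterating $M_i \le 2c \cdot M_{i+1} + N_i$ — wait, that double counts; instead the clean statement to prove by downward induction is: the number of $i$-blocks is at most $\sum_{j=0}^{i} v_j / (2c)^{i-j}$, using that each $i$-block is either regular (contributing the $j = i$ term, $v_i$) or irregular, and each irregular $i$-block is ``caused by'' a full batch of $2c$ many $(i-1)$-blocks, so (number of irregular $i$-blocks) $\le$ (number of $(i-1)$-blocks)$/(2c)$. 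Plugging the inductive bound for the number of $(i-1)$-blocks, namely $\sum_{j=0}^{i-1} v_j/(2c)^{i-1-j}$, and dividing by $2c$ gives $\sum_{j=0}^{i-1} v_j/(2c)^{i-j}$; adding the $v_i$ regular blocks yields exactly $\sum_{j=0}^{i} v_j/(2c)^{i-j}$, completing the induction. The base case $i = 0$ is trivial since every $0$-block is regular (a $0$-block is never forced by escalation), so the count is $v_0$. The one place to be careful is justifying ``each irregular $i$-block is caused by $2c$ many $(i-1)$-blocks'': this follows because $y_{i-1}$ is reset to $0$ after each $i$-plus block and must reach $2c$ before the escalation to index $\geq i$ is triggered at Step~\ref{prp1:inc_i}, and each increment of $y_{i-1}$ corresponds to one $(i-1)$-block.
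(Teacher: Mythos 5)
Your final argument is correct and is essentially the paper's own proof: each irregular $i$-block is charged to the $2c$ many $(i-1)$-blocks that drove $y_{i-1}$ to $2c$ since the last $i$-plus block (giving $v'_i \le \frac{1}{2c}(v_{i-1}+v'_{i-1})$), and adding the $v_i$ regular blocks and unwinding the recursion yields the stated geometric sum. The earlier epoch-counting detour via $M_{i+1}$, which you rightly abandon as double-counting, is unnecessary; the charging argument you settle on is exactly what the paper does.
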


\begin{proof}
An irregular $i$-block is created only after $2c$ $(i-1)$-blocks have been created since the last time an $i$-plus block was created. Since every $i$-block is also an $i$-plus block, the number of $(i-1)$ blocks since the last time an $i$-block was created must also be at least $2c$. So we have $v'_i\leq \frac{1}{2c}(v'_{i-1}+v_{i-1})$. (Since every $0$-block is regular, we have $v'_1\leq \frac{1}{2c}v_0$.) Adding $v_i$ to both sides and repeatedly applying this inequality proves the lemma.
\end{proof}

Now we analyze the cost of any algorithm $\ALG$ by bounding it against the performance of $k$ algorithms, defined as follows. For any $i \in \{1,2, \ldots, k\}$, the algorithm $\ALG_i$ evicts $a_i$ if it is in the cache on a cache miss, and $a_0$ otherwise.

\begin{lemma} \label{lem:det-alg-lower}
On the adversarial input generated by the procedure above, the total cost of the algorithms $\ALG_1, \ldots, \ALG_k$ is at most $32\cdot \cost(\ALG)$. That is,
\[
\sum_{i=1}^k \cost(\ALG_i) \leq 32 \cdot \cost(\ALG).
\]
\end{lemma}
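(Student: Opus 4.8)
The plan is to bound each $\cost(\ALG_i)$ against the regular blocks, and then sum over $i$. First I would observe how $\ALG_i$ behaves on a $j$-block. If $j < i$, then the block consists only of requests for $a_0, \ldots, a_j$, none of which is $a_i$; since $\ALG_i$ keeps all pages except $a_0$ and $a_i$ permanently in cache, it never evicts anything in such a block (it only swaps $a_0$ in, if needed, evicting $a_i$ — wait, that costs $w(a_i)=c^i$). Let me be careful: $\ALG_i$ holds $\{a_0,\dots,a_k\}\setminus\{a_i\}$ or $\{a_0,\dots,a_k\}\setminus\{a_0\}$. In a $j$-block with $j<i$, only $a_0,\dots,a_j$ are requested, so $a_i$ is never needed; thus $\ALG_i$ can simply keep $a_i$ evicted and $a_0$ in cache throughout, incurring zero cost. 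In a $j$-block with $j \ge i$, the block requests $a_0,\dots,a_j$ in order, so $\ALG_i$ must bring in $a_i$ (evicting $a_0$) and then bring back $a_0$ (evicting $a_i$), for a total cost of $w(a_0) + w(a_i) = 1 + c^i \le 2c^i$. Hence
\[
\cost(\ALG_i) \le 2c^i \cdot \bigl(\text{number of } j\text{-blocks with } j \ge i\bigr) = 2c^i \sum_{j=i}^{k} (\text{number of } j\text{-blocks}).
\]

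Next I would plug in Lemma~\ref{lem:num-i-blocks}, which bounds the number of $j$-blocks by $\sum_{m=0}^{j} v_m/(2c)^{j-m}$. Substituting and swapping the order of summation, $\cost(\ALG_i)$ is at most $2c^i$ times a double sum over $j \ge i$ and $m \le j$ of $v_m/(2c)^{j-m}$. Summing over $i$ from $1$ to $k$ then gives a triple sum; the goal is to show the coefficient of each $v_m$ is $O(1)$, which combined with $\cost(\ALG) \ge \sum_m v_m c^m$ yields the claim. For fixed $m$, the coefficient of $v_m$ is $\sum_{i=1}^{k} \sum_{j=\max(i,m)}^{k} 2c^i/(2c)^{j-m}$. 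The inner geometric sum over $j$ is dominated by its smallest term $j=\max(i,m)$, giving roughly $c^i/(2c)^{\max(i,m)-m}$ up to a constant factor (since $2c \ge 4$, the geometric tail is at most twice the first term). For $i \le m$ this is $c^i$, summing to at most $c^m \cdot \frac{c}{c-1} = O(c^m)$; for $i > m$ this is $c^i/(2c)^{i-m} = c^m/2^{i-m}$, summing to $O(c^m)$. So each $v_m$ gets coefficient $O(c^m)$, and a careful accounting of the constants (using $c \ge 2$) should bring the total below $32\,\cost(\ALG)$.

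The main obstacle is purely the constant bookkeeping: one must track the various geometric-series constants — the factor $2$ from $1+c^i \le 2c^i$, the factor $\frac{2c}{2c-1} \le 2$ from each geometric tail, and the factor $\frac{c}{c-1} \le 2$ from summing $c^i$ — and verify they multiply out to at most $32$ rather than merely $O(1)$. The conceptual content (zero cost on small blocks, $O(c^i)$ cost on large blocks, Lemma~\ref{lem:num-i-blocks} to count blocks, then compare against $\sum_m v_m c^m$) is straightforward; the only place to be cautious is ensuring the double- and triple-sum rearrangements are valid and that no off-by-one in the block sizes (e.g. whether $a_0$ is always present) changes the per-block cost bound.
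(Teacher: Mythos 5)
Your proposal follows essentially the same route as the paper: bound $\cost(\ALG_i)$ by $O(c^i)$ times the number of $i$-plus blocks, invoke Lemma~\ref{lem:num-i-blocks}, swap the order of summation so each $v_m$ picks up an $O(c^m)$ coefficient, and compare against $\cost(\ALG)\geq \sum_m v_m c^m$; your constant bookkeeping also checks out (one even lands below $32$). The one slip is your claim that $\ALG_i$ incurs zero cost on $j$-blocks with $j<i$: $\ALG_i$ is a \emph{fixed} eviction rule, not a clairvoyant one, so in the first small block following an $i$-plus block it starts with $a_0$ missing and must evict $a_i$ (cost $c^i$) to serve the leading $a_0$. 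This mis-attributes where the $c^i$ eviction is paid but not how often: since $\ALG_i$ strictly alternates between evicting $a_0$ and evicting $a_i$, and $a_0$ is evicted only when $a_i$ is requested (i.e., once per $i$-plus block), the total cost is still at most roughly $(c^i+1)$ per $i$-plus block --- which is exactly the paper's accounting --- so your final bound is unaffected.
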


\begin{proof}
Notice that $\ALG_i$ misses on request $a_i$ at most once in any $i$-plus block, so $\ALG_i$ misses on page $a_i$ at most $\sum_{j=i}^{k}(v_j+v'_j)$ times. Furthermore, $\ALG_i$ alternates between evicting $a_0$ and $a_i$, so $\ALG_i$ misses on page $a_0$ at most $1+\sum_{j=i}^{k}(v_j+v'_j)$ times in total. We bound the cost of every $\ALG_i$ miss (either on $a_0$ or $a_i$) by $(c^i + 1)$.

Thus, by Lemma~\ref{lem:num-i-blocks}, we have the following:
\begin{align*}
\cost(\ALG_i)
&\leq  \left(1+2\sum_{j=i}^{k}(v_j+v'_j)\right)\cdot(c^{i}+1) \\
&\leq  8c^{i}\sum_{j=i}^{k}(v_j+v'_j) \\
&\leq  8\sum_{j=i}^{k}\sum_{j'=0}^{j}\left( \frac{v_{j'}\cdot c^i}{(2c)^{j-j'}} \right). \\
\end{align*}

Summing across all values of $i \in \{1,2, \ldots, k\}$, we have
\begin{align*}
\sum_{i=1}^{k} \cost(\ALG_i) 
&\leq 8\sum_{i=1}^{k}\sum_{j=i}^{k}\sum_{j'=0}^{j}\left( \frac{v_{j'}\cdot c^i}{(2c)^{j-j'}} \right)  \\
&\leq 8\sum_{j'=0}^{k}\sum_{j=j'}^{k}\sum_{i=1}^{j} \left( \frac{v_{j'}\cdot c^i}{(2c)^{j-j'}} \right)  \\
&\leq 8\sum_{j'=0}^{k}v_{j'}\cdot \sum_{j=j'}^{k}\frac{1}{(2c)^{j-j'}} \cdot \sum_{i=1}^{j}c^i   \\
&\leq 16\sum_{j'=0}^{k}v_{j'}\cdot \sum_{j=j'}^{k}\frac{1}{(2c)^{j-j'}} c^j   \\
&\leq 32\sum_{j'=0}^{k}v_{j'}\cdot c^{j'}    
\leq 32\cdot \cost(\ALG).
\end{align*}
\qedhere
\end{proof}

We now conclude the proof of Theorem~\ref{thm:prp-det}. From Lemma~\ref{lem:det-alg-lower}, we have
\[
\OPT\leq \min\{\cost(\ALG_1),\cost(\ALG_2),\ldots,\cost(\ALG_k)\}\leq \frac{32}{k}\cost(\ALG),
\]
so $\ALG$ is $\Omega(k)$-competitive, as desired.

\subsection{Randomized Lower Bound} \label{sec:random_lower}
This subsection is devoted to proving the following theorem:
\begin{theorem} \label{thm:prp-rand}
For weighted paging with PRP, any randomized algorithm is $\Omega(\log k)$-competitive.
\end{theorem}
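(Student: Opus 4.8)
The plan is to adapt the deterministic construction from Section~\ref{subsec:prp-det} into a distribution over inputs and invoke Yao's principle, mirroring the classical $\Omega(\log k)$ lower bound for (weighted) paging. As in the deterministic case, I would work with $k+1$ distinct pages $a_0, a_1, \ldots, a_k$ with geometric weights $w(a_i) = c^i$, and a cache of size $k$, so that exactly one page is ``missing'' at any time. The key difference is that the missing page must now be chosen randomly rather than adversarially (since the input must be oblivious to the algorithm's coin flips), and the block structure — regular blocks that realize the chosen missing page, irregular blocks forced by the invariant counters $y_i$ to keep predictions consistent — is retained verbatim from the deterministic construction so that Lemmas~\ref{lem:det-contiguous} and~\ref{lem:num-i-blocks} still apply to the realized sequence.

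First I would specify the random process generating the input: at each iteration of the repeat loop, instead of letting $\ell$ be the index of the page missing from $\ALG$'s cache, I would sample the ``target'' index from a suitable distribution (for instance, recursively, choosing at each level whether to go one level deeper, so that level-$i$ blocks occur geometrically rarely as $i$ grows), then apply Step~\ref{prp1:inc_i} exactly as before to possibly promote it to a larger block because of the $y_\ell$ counters. Because this process does not look at $\ALG$'s cache, the resulting request sequence (together with its PRP predictions, which are well-defined from the $u_i$ bookkeeping) is a fixed random variable. Next I would lower-bound $\Ex{\cost(\ALG)}$: within each realized $i$-block, which by Lemma~\ref{lem:det-contiguous} is a contiguous run $a_0^* a_1^* \cdots a_i$, a standard potential/counting argument on any randomized algorithm restricted to pages $\{a_0,\ldots,a_i\}$ with a cache that can hold all but one of them shows the algorithm pays expected cost $\Omega(c^i \log i)$ — this is precisely the Fiat~et~al./marker-style bound that each ``phase'' over $i+1$ pages with a size-$i$ cache costs a $\Theta(\log i)$ factor more than one eviction — so summing over blocks gives $\Ex{\cost(\ALG)} = \Omega\big(\log k \cdot \sum_i v_i c^i\big)$ where $v_i$ counts regular $i$-blocks. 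Finally I would upper-bound $\OPT$ in expectation by exhibiting a single good offline strategy per realization (keep all pages except $a_0$ and the current target resident, swapping $a_0$ with the target as needed, just as the $\ALG_i$'s do), and reuse the arithmetic of Lemmas~\ref{lem:num-i-blocks} and~\ref{lem:det-alg-lower} — counting how many $i$-blocks each offline strategy misses on and how the geometric weights telescope — to conclude $\Ex{\OPT} = O\big(\sum_i v_i c^i\big)$. Combining the two bounds and applying Yao's principle yields the $\Omega(\log k)$ lower bound against randomized algorithms.

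The main obstacle I anticipate is the lower bound on $\Ex{\cost(\ALG)}$ within a block. In the deterministic case it sufficed that $a_i$ was simply absent from $\ALG$'s cache at the block's start, giving a clean cost of $c^i$; in the randomized setting the algorithm holds a distribution over which page is missing, and I must argue that, over the random choice of target index combined with the forced irregular promotions, the algorithm cannot cheaply ``hedge.'' The cleanest route is probably to isolate the sub-sequence of requests restricted to $\{a_0, \ldots, a_i\}$ for each level $i$ and argue that this sub-sequence looks, from the algorithm's viewpoint, like an adversarial paging instance on $i+1$ pages with cache size $i$ — so the $\Omega(\log i)$ factor is inherited from the classical bound — while being careful that (a) the interleaving of levels and the irregular blocks do not leak information that lets the algorithm pre-position pages for free, and (b) the heavy pages $a_j$ with $j > i$, which are resident throughout a level-$i$ block, do not absorb any of the relevant cost. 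I would also need to double-check that the PRP predictions are genuinely consistent with the random sequence on every sample path, which is exactly what the $u_i$/$y_i$ invariant machinery from Section~\ref{subsec:prp-det} was designed to guarantee; reusing it verbatim should make this step routine.
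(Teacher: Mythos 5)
Your setup (an oblivious distribution over block sizes with $\Pr[\ell=j]\approx c^{-j-1}$, Yao's principle, and reuse of the $u_i/y_i$ bookkeeping to keep the PRP predictions consistent) matches the paper's, but the place you extract the $\log k$ factor is wrong, and the key lemma you would need is false. You want each realized $i$-block to cost the algorithm $\Omega(c^i\log i)$ via a marker-style argument on the subsequence restricted to $\{a_0,\dots,a_i\}$. But a block, conditioned on its type, is a fixed and fully predictable string $a_0^{*}a_1^{*}\cdots a_i$ (and the PRP predictions reveal its length); there is no residual randomness inside a block for a coupon-collector argument to exploit. Indeed, the trivial algorithm that permanently caches $a_1,\dots,a_k$ and swaps $a_0$ in and out pays only $O(1)$ per block, so $\Ex{\cost(\ALG)}=\Theta(m)$ is the right order (where $m$ is the number of blocks), not $\Omega(mk\log k)$. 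The correct per-block lower bound for $\ALG$ is only a constant: whatever page $a_j$ is missing, it is requested in the next block with probability $\Theta(c^{-j})$ and costs $c^j$; this is exactly Lemma~\ref{lem:lower-prp-alg}.

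Symmetrically, your upper bound on $\OPT$ is too weak to create any gap. The fixed swap strategies you propose (keep everything except $a_0$ and one target page resident) are exactly the $\ALG_i$'s of the deterministic proof; on the oblivious distribution each $\ALG_i$ misses in every $i$-plus block, which occurs with probability $\Theta(c^{-i})$ per block and costs $c^i$, so every $\ALG_i$ --- and hence their minimum --- costs $\Theta(m)$, the same order as $\ALG$. The averaging trick that produced the factor $k$ deterministically relied on blocks being targeted at $\ALG$'s missing page, which is precisely what obliviousness forbids. The missing idea is the paper's recursive phase decomposition: an $i$-phase is a minimal stretch containing $c$ $(i-1)$-phases and an $i$-plus block; $\OPT$ can serve each $k$-phase for $O(c^{k+1})$ (Lemma~\ref{lem:opt-B}), while the recursion in Lemmas~\ref{prp:z=z+p}--\ref{lem:L=cz} and Proposition~\ref{prop:L_k-lower} shows that a $k$-phase contains $\Omega(c^k H_k)$ blocks in expectation. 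This yields $\Ex{\cost(\OPT)}=O(m/H_k)$ against $\Ex{\cost(\ALG)}=\Omega(m)$, which is where the $\log k$ actually comes from.
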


Here, we still use the same idea of request blocks, but now the input is derived from a fixed distribution and is not aware of the state of $\ALG$. The main idea is to design a distribution over block sizes in a manner that still causes any fixed deterministic algorithm $\ALG$ to suffer a large cost {\em in expectation}, and then invoke Yao's minimax principle to translate this to a randomized lower bound. Let $H_k = 1 + 1/2 + \cdots + 1/k \approx \ln k$ denote the $k$-th harmonic number. The input is defined as follows:

\begin{enumerate}
    \item For $0\leq i\leq k$, set $u_i=(2ckH_k+2)^i$ and let $y_i=0$ for $i<k$.
    \item Repeat the following:\label{prp2:repeat}
    \begin{enumerate}
        \item Select a value of $\ell$ according to the following probability distribution: $\Pr[\ell=j] = \frac{c-1}{c^{j+1}}$ for $j \in \{0,1, \ldots, k-1\}$ and $\Pr[\ell=k] = \frac{1}{c^{k}}$.\label{prp2:prob}
        \item Increase $\ell$ until $\ell = k$ or $y_\ell < 2ckH_k$. \label{prp2:inc_i}
        \item For $j$ from $0$ to $\ell$,
        \begin{enumerate}
            \item Set all requests from time $t+1$ through $u_j-1$ as $a_{j-1}$. (Note: If $j = 0$, then $u_j = t+1$, so this step is empty.)
            \item Set the request at time $u_j$ as $a_j$.
            \item Let $t = u_j$.
        \end{enumerate}
        \item For $0\leq j\leq \ell$, let $u_j=t+(2ckH_k+2)^j$.
        \item For $0\leq j<\ell$, let $y_j=0$. If $\ell<k$, increase $y_{\ell}$ by one.
    \end{enumerate}
\end{enumerate}

Note that if $\ell$ is not increased in Step \ref{prp2:inc_i}, then this block is \emph{regular}; otherwise, it is \emph{irregular}. Let $v_i$ denote the number of regular $i$-blocks, and let $v'_i$ denote the number of irregular $i$-blocks. A $j$-block is an \emph{$i$-plus} block if and only if $j \geq i$. We first lower bound the cost of $\ALG$ by the number of blocks.

\begin{lemma} \label{lem:lower-prp-alg}
Every requested block increases $\Ex{\cost(\ALG)}$ by at least a constant.
\end{lemma}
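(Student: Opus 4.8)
The plan is to argue per block, conditioning on the history of all requests that precede it. Since we may assume (by Yao's principle) that $\ALG$ is a fixed deterministic algorithm, this history determines the contents of $\ALG$'s cache at the instant the block begins. Because there are $k+1$ distinct pages but the cache holds only $k$, there is a well-defined largest index $m$ with $a_m$ not in $\ALG$'s cache at the start of the block. Let $X$ denote the cost $\ALG$ pays on the requests of this block, and let $\mathcal{H}$ denote the history. The entire argument plays the weight $w(a_m) = c^m$ against the probability that the block is long enough to contain a request for $a_m$.

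First I would note that an $\ell$-block requests all of $a_0, \dots, a_\ell$ (by the same reasoning as Lemma~\ref{lem:det-contiguous}), so whenever the realized block index $\ell$ satisfies $\ell \geq m$, page $a_m$ is requested during the block. Since $a_m$ is absent from the cache when the block starts, $\ALG$ must fetch it at some point in the block, paying at least $w(a_m) = c^m$; any further eviction and re-fetch of $a_m$ only adds to this. Hence
\[
\Ex{X \mid \mathcal{H}} \;\geq\; c^{m}\cdot \Pr[\,\ell \geq m \mid \mathcal{H}\,].
\]

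Next I would lower-bound this probability. The value of $\ell$ actually used for the block is at least the value sampled in Step~\ref{prp2:prob}, since Step~\ref{prp2:inc_i} only increases $\ell$; and that sampled value is drawn from the fixed distribution independently of $\mathcal{H}$. Therefore
\[
\Pr[\,\ell \geq m \mid \mathcal{H}\,] \;\geq\; \sum_{j=m}^{k-1}\frac{c-1}{c^{j+1}} \;+\; \frac{1}{c^{k}} \;=\; c^{-m},
\]
by a short telescoping computation, valid for every $m \in \{0,1,\dots,k\}$ (the sum being empty when $m=k$). Combining the two displays gives $\Ex{X \mid \mathcal{H}} \geq c^m \cdot c^{-m} = 1$, and taking expectation over $\mathcal{H}$ yields $\Ex{X} \geq 1$, proving the lemma. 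The first block, with an initially non-full cache, is only easier: $\ALG$ still must fetch the missing page $a_m$ whenever $\ell \geq m$.

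The point that makes this work — and the reason the construction couples the geometric weights $w(a_i) = c^i$ with the geometrically decaying law $\Pr[\ell = j] \propto c^{-j}$ — is that the product $w(a_m)\cdot\Pr[\ell \geq m]$ must be bounded below by a universal constant simultaneously for every possible index $m$ of the missing page: a larger missing index makes the forced fetch more expensive but proportionally less likely, and these two effects must exactly cancel. I expect this calibration to be the only delicate step; the apparent complication from Step~\ref{prp2:inc_i} (irregular blocks) does not interfere here, since lengthening a block can only increase $\ALG$'s cost, and bounding the number of irregular blocks is what the subsequent lemmas handle.
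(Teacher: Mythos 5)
Your proof is correct and follows essentially the same approach as the paper's: identify a page $a_m$ missing from $\ALG$'s cache at the start of the block and multiply its weight $c^m$ by the probability (at least $c^{-m}$, by the geometric sampling of $\ell$) that the block reaches index $m$ and forces a fetch. Your write-up is in fact somewhat more careful than the paper's one-line argument, making explicit the conditioning on history, the fact that Step~\ref{prp2:inc_i} can only lengthen a block, and the exact tail computation $\Pr[\ell \geq m] = c^{-m}$.
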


\begin{proof}
At every time step, the cache of $\ALG$ is missing some page $a_j$. The probability that $a_j$ is requested in the next block is at least $\Pr[\ell = j] \geq \frac{1}{2c^j}$, so the expected cost of serving this block is at least $c^j \cdot \Pr[\ell = j] = \Omega(1)$.
\end{proof}

For the rest of the proof, we upper bound the cost of $\OPT$. We first upper bound the number of regular blocks, and then we use this to bound the number of irregular blocks.

\begin{lemma} \label{lem:num_reg_irre}
For every $i \in \{0, 1, \ldots, k\}$, we have $\Ex{v_i}\leq 2c^{-i}m$.
\end{lemma}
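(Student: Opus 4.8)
The plan is to observe that a block can become a \emph{regular} $i$-block only when the random value drawn in Step~\ref{prp2:prob} is already equal to $i$, and then to bound the number of such draws by linearity of expectation. Concretely, fix a block and let $\ell_0$ denote the value sampled in Step~\ref{prp2:prob}; by construction $\ell_0$ is drawn independently of everything that has happened so far, with $\Pr[\ell_0 = j] = \frac{c-1}{c^{j+1}}$ for $j < k$ and $\Pr[\ell_0 = k] = c^{-k}$. Step~\ref{prp2:inc_i} only ever raises $\ell$ above $\ell_0$, and by definition the block is \emph{irregular} precisely when such a raise occurs; hence a regular block has final index exactly $\ell_0$. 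In particular, a regular $i$-block can only occur when $\ell_0 = i$.

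Writing $m$ for the number of blocks produced and $\ell_0^{(t)}$ for the value sampled in the $t$-th block, we therefore have $v_i \le \sum_{t=1}^{m} \mathbf{1}[\ell_0^{(t)} = i]$ deterministically. Taking expectations and using that each $\ell_0^{(t)}$ has the marginal distribution above, linearity of expectation gives $\Ex{v_i} \le m \cdot \Pr[\ell_0 = i]$. For $i < k$ we have $\Pr[\ell_0 = i] = \frac{c-1}{c^{i+1}} < \frac{1}{c^i}$, and for $i = k$ we have $\Pr[\ell_0 = k] = c^{-k}$; in either case this is at most $2c^{-i}$, so $\Ex{v_i} \le 2 c^{-i} m$ as claimed (the argument in fact yields the slightly stronger bound $m c^{-i}$).

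The only point requiring care is the status of $m$. If the construction is run for a fixed number $m$ of iterations of Step~\ref{prp2:repeat}, then $m$ is deterministic and the computation above is immediate. If instead $m$ is a stopping time (e.g.\ the number of blocks needed to fill a prescribed request length), one should replace the plain linearity step by Wald's identity / an optional-stopping argument, using that the event ``block $t$ is reached'' depends only on $\ell_0^{(1)}, \ldots, \ell_0^{(t-1)}$ and is therefore independent of $\{\ell_0^{(t)} = i\}$. I do not expect any genuine difficulty here: this lemma is essentially a bookkeeping step that will feed the subsequent bound on the number of irregular blocks, paralleling Lemma~\ref{lem:num-i-blocks} in the deterministic case.
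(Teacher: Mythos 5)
Your probabilistic core is the same as the paper's: a block can be a regular $i$-block only if the value drawn in Step~\ref{prp2:prob} equals $i$, which happens with probability at most $c^{-i}$, and summing over blocks gives the bound. (Your observation that a regular block's final index equals the sampled index is exactly the justification the paper leaves implicit, and your caveat about Wald's identity when the number of blocks is a stopping time is a fair point the paper glosses over.)

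The gap is in what $m$ denotes. In the paper $m$ is \emph{not} the total number of blocks produced: it is the number of blocks in $B$, the subsequence consisting of the \emph{regular} blocks only, and this is the quantity used downstream (in $\Ex{N_k(B)}\le 4m/(c^kH_k)$ and in the final $\Omega(m)$ lower bound on the algorithm's cost). Your argument bounds $\Ex{v_i}$ by $c^{-i}$ times the total number of blocks, which is at least $m$, so it does not by itself yield $\Ex{v_i}\le 2c^{-i}m$. The missing step --- and the source of the factor $2$ in the statement --- is the paper's potential argument: taking $\phi(y)=\sum_{i=0}^{k-1}y_i\ge 0$, each regular block increases $\phi$ by at most $1$, while each irregular block zeroes out some $y_j\ge 2ckH_k$ and hence decreases $\phi$ by at least $2ckH_k-1\ge 1$; since $\phi$ starts at $0$ and stays nonnegative, the number of irregular blocks is at most the number of regular blocks, so the total number of blocks is at most $2m$. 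With that inserted, your computation gives $\Ex{v_i}\le 2m\cdot c^{-i}$ as required.
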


\begin{proof}
Consider the potential function $\phi(y)=\sum_{i=0}^{k-1}y_i \geq 0$. The initial value of $\phi(y)$ is 0. Notice that whenever a regular block is generated, $\phi(y)$ increases by at most 1, and whenever an irregular block is generated, $\phi(y)$ decreases by at least $2ckH_k$. Thus, the number of irregular blocks is at most the number of regular blocks, so the total number of blocks is at most $2m$. The lemma follows by noting that the probability that a block is a regular $i$-block is at most $c^{-i}$.
\end{proof}

\begin{lemma} \label{lem:upper-v'_i-1}
For every $i \in \{0, 1, \ldots, k\}$, we have $\Ex{v'_i}\leq \frac{2m}{c^i k H_k}$.
\end{lemma}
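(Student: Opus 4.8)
The plan is to mirror the deterministic argument behind Lemma~\ref{lem:num-i-blocks}, but now in expectation. First I would prove the per-realization inequality
$$2ckH_k\cdot v'_i \;\le\; v_{i-1} + v'_{i-1} \qquad (i \ge 1), \qquad v'_0 = 0.$$
The base case holds because $\ell$ is only ever increased in Step~\ref{prp2:inc_i}, so a block with final value $\ell = 0$ must have had $\ell = 0$ selected in Step~\ref{prp2:prob} and is therefore regular. For the inductive inequality, I would track the counter $y_{i-1}$: it is incremented by exactly one on each $(i-1)$-block, and reset to $0$ on each $i$-plus block (precisely the blocks with final $\ell \ge i$). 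A block becomes an irregular $i$-block only if Step~\ref{prp2:prob} selects a value below $i$ and Step~\ref{prp2:inc_i} raises it to $i$; crossing from $i-1$ to $i$ forces $y_{i-1} \ge 2ckH_k$ at that moment. Hence, just as in the proof of Lemma~\ref{lem:num-i-blocks}, at least $2ckH_k$ $(i-1)$-blocks occur since the last $i$-plus block whenever an irregular $i$-block is created, and since an irregular $i$-block itself resets $y_{i-1}$, the $(i-1)$-block windows charged to distinct irregular $i$-blocks are pairwise disjoint. Summing over all irregular $i$-blocks gives the displayed inequality.

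Next I would take expectations. Unrolling the recurrence with $v'_0 = 0$ yields, pointwise, $v'_i \le \sum_{j=1}^{i} (2ckH_k)^{-j}\, v_{i-j}$. Taking expectations and substituting $\Ex{v_{i-j}} \le 2c^{-(i-j)} m$ from Lemma~\ref{lem:num_reg_irre},
$$\Ex{v'_i} \;\le\; \sum_{j=1}^{i} \frac{2c^{-(i-j)}m}{(2ckH_k)^j} \;=\; \frac{2m}{c^i}\sum_{j=1}^{i}\frac{1}{(2kH_k)^j} \;\le\; \frac{2m}{c^i}\cdot\frac{1}{2kH_k - 1} \;\le\; \frac{2m}{c^i kH_k},$$
where the last inequality uses $kH_k \ge 1$. (Equivalently, one can argue by induction on $i$: the slack factor introduced at each step is $1 + 1/(kH_k) \le 2$, which preserves the leading constant.)

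I expect the main obstacle to be the combinatorial bookkeeping in the first step — correctly pinning down which blocks increment $y_{i-1}$ and which reset it, and verifying that the windows of $(i-1)$-blocks assigned to distinct irregular $i$-blocks really are pairwise disjoint. Everything after the deterministic inequality $2ckH_k\, v'_i \le v_{i-1} + v'_{i-1}$ is a short computation using Lemma~\ref{lem:num_reg_irre} and a geometric series.
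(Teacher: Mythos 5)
Your proposal is correct and follows essentially the same route as the paper: the same recurrence $v'_i \le \frac{1}{2ckH_k}(v_{i-1}+v'_{i-1})$ (which the paper states with justification inherited from the deterministic Lemma~\ref{lem:num-i-blocks}), unrolled and combined with Lemma~\ref{lem:num_reg_irre} and a geometric series. You simply spell out the $y_{i-1}$ bookkeeping and the final series bound in more detail than the paper does.
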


\begin{proof}
Observe that $v'_i\leq \frac{1}{2ckH_k}(v'_{i-1}+v_{i-1})$ and $v'_1\leq \frac{1}{2ckH_k}v_0$. Repeatedly applying this inequality yields
\[
\Ex{v'_i}
\leq \sum_{j=0}^{i-1} \frac{\Ex{v_j}}{(2ckH_k)^{i-j}} 
\leq \sum_{j=0}^{i-1} \frac{2c^{-j}m}{(2ckH_k)^{i-j}}  
= \frac{2m}{c^i}\sum_{j=0}^{i-1} \frac{1}{(2kH_k)^{i-j}} 
\leq \frac{2m}{c^ikH_k},
\]
where the second inequality holds due to Lemma \ref{lem:num_reg_irre}.
\end{proof}

Now let $A$ denote the entire sequence of requests, $B$ the subsequence of $A$ comprising all regular blocks, and $m$ the number of blocks in $B$. We bound $\OPT = \OPT(A)$ in terms of the optimal cost on $B$ and the number of irregular blocks.

\begin{lemma}\label{lem:opta-b-c}
Let $\OPT(A)$ and $\OPT(B)$ denote the optimal offline algorithm on request sequences $A$ and $B$ respectively. Then $\cost(\OPT(A))\leq \cost(\OPT(B)) + 4c\sum_{i=0}^{k}v'_{i}c^{i}$.
\end{lemma}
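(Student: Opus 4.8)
The plan is to fix an arbitrary realization of the random process (so that the sequence $A$, its regular-block subsequence $B$, and the counts $v'_i$ are all determined) and then exhibit, for that realization, a feasible solution for $A$ whose cost is at most $\cost(\OPT(B)) + 4c\sum_{i=0}^{k}v'_i c^i$. Since $\cost(\OPT(A))$ is the minimum cost over all feasible solutions for $A$, this proves the inequality pathwise, which is exactly what the statement asserts (the expectations enter only later, through Lemmas~\ref{lem:num_reg_irre} and~\ref{lem:upper-v'_i-1}).

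To build the solution for $A$, I would first fix an optimal offline algorithm $\sigma^\star$ for $B$ together with its deterministic trajectory of cache configurations. Because $B$ is precisely the concatenation, in order and with identical request content, of the regular blocks of $A$, I can define an algorithm $\sigma$ on $A$ as follows: on each regular block, $\sigma$ copies the moves $\sigma^\star$ makes on the corresponding block of $B$; on each irregular block, $\sigma$ makes a bounded ``detour'', serving that block starting from whatever configuration $\sigma^\star$ is currently in and then restoring that same configuration before the next block starts. Note that the first block of $A$ is always regular (initially all $y_j=0$, so Step~\ref{prp2:inc_i} does not fire), every $0$-block is regular, and both $\sigma$ and $\sigma^\star$ begin from the empty cache; hence an easy induction over blocks shows that $\sigma$ enters every regular block in exactly the configuration $\sigma^\star$ holds at the start of the corresponding block of $B$. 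Consequently $\sigma$ and $\sigma^\star$ pay the same on all regular blocks, and $\cost(\sigma) = \cost(\OPT(B)) + \sum_{\text{irregular blocks}} (\text{cost of that detour})$.

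It then remains to bound one detour on an irregular $i$-block by $4c^i$. Here I would invoke the analogue of Lemma~\ref{lem:det-contiguous} for this construction (proved by the identical argument, with $2c+2$ replaced by $2ckH_k+2$): an irregular $i$-block is a contiguous run of $a_0$'s, then $a_1$'s, \ldots, ending in a single $a_i$, and throughout the instance the cache holds $k$ of the $k+1$ pages $a_0,\dots,a_k$, so exactly one page is out at any time. Starting from the configuration missing some $a_m$: if $m>i$ the whole block is already served for free and there is nothing to restore; if $m\le i$, $\sigma$ serves the run of $a_m$'s by fetching $a_m$ and evicting $a_0$ (or, when $m=0$, briefly parking $a_0$ in $a_1$'s slot and swapping back), serves all remaining runs for free, and then restores by re-fetching $a_0$ and evicting $a_m$. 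In every case the detour fetches only pages from $\{a_0,\dots,a_i\}$, at most a constant number of times each, so its cost is at most $2\sum_{j=0}^{i}c^j \le 4c^i$ for $c\ge 2$; if $\sigma^\star$'s cache is not yet full (possible only in the initial transient), there is extra room and the detour is only cheaper. Summing over irregular blocks gives $\sum_{\text{irreg}}(\text{detour})\le \sum_{i=0}^{k} v'_i\cdot 4c^i \le 4c\sum_{i=0}^{k} v'_i c^i$, and combining with the previous paragraph yields $\cost(\OPT(A))\le \cost(\sigma)\le \cost(\OPT(B)) + 4c\sum_{i=0}^{k} v'_i c^i$.

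The step I expect to need the most care is the detour-and-restore bound coupled with the configuration-matching induction: one must be careful about block boundaries, about the fact that the regular blocks of $A$ appear in $B$ in the same order and with identical request content so that ``copying $\sigma^\star$'s moves'' is well defined, and about the early blocks before $\sigma^\star$'s cache fills up. Everything else is routine geometric-series bookkeeping.
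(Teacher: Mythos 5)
Your proposal is correct and follows essentially the same route as the paper: imitate $\OPT(B)$ on the regular blocks, serve each irregular $i$-block by a self-restoring detour involving only the missing page and $a_0$ (with the special swap through $a_1$'s slot when $a_0$ itself is missing), and bound each detour geometrically by $O(c^i)$ so that summing over irregular blocks gives the $4c\sum_{i}v'_ic^i$ term. The only differences are cosmetic (your per-block bound of $4c^i$ versus the paper's $4c^{i+1}$, and your extra care about the initial transient), so no further comparison is needed.
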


\begin{proof}
Consider the following algorithm $\ALG_{A}$ on request sequence $A$:
\begin{enumerate}
    \item For requests in regular blocks, imitate $\OPT(B)$. That is, copy the cache contents when $\OPT(B)$ serves this block. \label{prp2:proof:imitate}
    \item Upon the arrival of an irregular $i$-block, let $a_\ell$ denote the page not in the cache.\label{prp2:proof:stable}
    \begin{enumerate}
        \item If $\ell>i$, then the cost of serving this block is 0.
        \item If $1\leq \ell \leq i$, evict $a_0$ when $a_\ell$ is requested. Then evict $a_\ell$ and fetch $a_0$ at the end of this block; the cost of this is $2(c^i+1)$.
        \item If $\ell=0$, we evict $a_1$ and fetch $a_0$ when $a_0$ is requested. Then we evict $a_0$ and fetch $a_1$ when $a_1$ is requested or at the end of this block (if $a_1$ is not requested in this block). The cost is $2(c+1)$.
    \end{enumerate}
\end{enumerate}

For each irregular block, notice that the cache of $\ALG_A$ is the same at the beginning and the end of the block. So Step \ref{prp2:proof:stable} does not influence the imitation in Step \ref{prp2:proof:imitate}. The cost of serving an irregular $i$-block is at most $4c^{i+1}$. Combining these facts proves the lemma.
\end{proof}

To bound $\OPT(B)$, we divide the sequence $B$ into phases. Each phase is a contiguous sequence of blocks. Phases are defined recursively, starting with $0$-phases all the way through to $k$-phases. A 0-phase is defined as a single request. For $i\geq 1$, let $M_i$ denote the first time that an $i$-plus-block is requested and let $Q_i$ denote the first time that $c$ $(i-1)$-phases have appeared. An $i$-phase ends immediately after $M_i$ and $Q_i$ have both occurred. In other words, an $i$-phase is a minimal contiguous subsequence that contains $c$ $(i-1)$-phases and an $i$-plus block. (Notice that for a fixed $i$, the set of $i$-phases partition the input sequence.)

For any $k$-phase, we upper bound $\OPT$ by considering an algorithm $\ALG^k_B$ that is optimal for $B$ subject to the additional restriction that $a_0$ is not in the cache at the beginning or end of any $k$-phase. We bound the cost of $\ALG^k_B$ in any $k$-phase using a more general lemma.

\begin{lemma} \label{lem:opt-B}
For any $i$, let $\ALG^i_B$ be an optimal algorithm on $B$ subject to the following: $a_0$ is not in the cache at the beginning or the end of any $i$-phase. Then the cost of $\ALG^i_B$ within an $i$-phase is at most $4c^{i+1}$. In particular, in each $k$-phase, the algorithm $\ALG^k_B$ incurs cost at most $4c^{k+1}$.
\end{lemma}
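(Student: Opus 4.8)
The plan is to prove the statement by induction on $i$, taking as the inductive hypothesis exactly the displayed bound (cost of $\ALG^{i}_B$ within any $i$-phase is at most $4c^{i+1}$); the ``in particular'' clause is then just the case $i=k$. The structural observation driving the induction is that the defining constraint of $\ALG^{i}_B$ — that $a_0$ is out of the cache at the endpoints of every $i$-phase — actually pins down the whole cache at every $i$-phase boundary to be $\{a_1,\dots,a_k\}$, since there are $k+1$ pages and $k$ slots. Consequently serving $B$ under this constraint decomposes into independent subproblems, one per $i$-phase, each of the form ``begin and end with cache $\{a_1,\dots,a_k\}$, serve the requests of this $i$-phase, minimize cost'', and it suffices to exhibit, for each $i$-phase, a feasible schedule costing at most $4c^{i+1}$.

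For the base case $i=0$, a $0$-phase is a single request $a_m$: if $m\ge 1$ it is already cached and costs nothing, and if $m=0$ we fetch $a_0$ by evicting $a_1$, serve it, and restore the state, at cost $w(a_0)+w(a_1)=1+c\le 4c$. For the inductive step I would use the recursive definition — an $i$-phase is a minimal window containing $c$ complete $(i-1)$-phases and an $i$-plus block — and split the schedule in two. On the maximal prefix before the first $i$-plus block, which contains only requests to $a_0,\dots,a_{i-1}$, I would imitate $\ALG^{i-1}_B$, which keeps $a_0$ out at all $(i-1)$-phase boundaries (in particular at the start of the $i$-phase): by the inductive hypothesis each $(i-1)$-phase there costs at most $4c^i$, and since this prefix meets at most $c$ complete $(i-1)$-phases the total is at most $c\cdot 4c^i=4c^{i+1}$. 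Then I would service the first $i$-plus block almost for free, exploiting the block structure (the analogue of Lemma~\ref{lem:det-contiguous}: every block is a run of $a_0$'s, then a run of $a_1$'s, and so on): since the block opens with a request for $a_0$ and $a_0$ is already out, the block can be served without ever touching the heavy pages $a_i,\dots,a_k$, evicting $a_0$ at the single forced eviction and later restoring it at cost $w(a_1)=c$; this also leaves $a_0$ out at the end of the $i$-phase, as required. (If instead the $i$-phase ends at its $c$-th $(i-1)$-phase rather than at the first $i$-plus block, that $i$-plus block sits inside one of those $(i-1)$-phases and is already covered by its inductive budget, so no separate accounting is needed.)

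The step I expect to be the main obstacle is the counting claim that an $i$-phase meets only a bounded multiple of $c$ complete $(i-1)$-phases: if an $i$-plus block is slow to appear, the minimal window can absorb many more than $c$ of them, and then charging $4c^i$ apiece overshoots the $4c^{i+1}$ budget. Handling this has to use that throughout such an overrun only the cheap pages $a_0,\dots,a_{i-1}$ are requested, so that stretch is itself an $(i-1)$-scale weighted paging instance and can be served far below $4c^i$ per $(i-1)$-phase — roughly $c^{i-1}$ per $(i-1)$-block, by keeping the heaviest cheap page $a_{i-1}$ evicted — together with the interplay between the phase definition and the block-generation rule (the $y$-counters) that bounds how long such an overrun can persist. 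The remaining fiddly points — bookkeeping across the partial $(i-1)$-phase that the end of an $i$-phase may cut through, and checking that the ``$a_0$ out'' invariant stitches together across consecutive $i$-phase boundaries — need no idea beyond the cheap-swap trick and the inductive hypothesis.
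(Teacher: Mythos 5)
Your skeleton matches the paper's: induction on $i$, the observation that the constraint pins the cache to $\{a_1,\dots,a_k\}$ at every phase boundary, the same base case, and the dichotomy according to whether the first $i$-plus block $s_i$ arrives before or after the $c$-th $(i-1)$-phase completes. Your treatment of the subcase where $s_i$ falls inside the first $c$ $(i-1)$-phases is also the paper's: the phase is then exactly the concatenation of $c$ $(i-1)$-phases, and the inductive hypothesis gives $c\cdot 4c^{i}=4c^{i+1}$. The gap is in the other case, precisely the one you flag as ``the main obstacle.'' There you want to keep using the inductive hypothesis on the prefix before $s_i$ and then pay separately for $s_i$, and this fails twice over. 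First, even when the prefix contains exactly $c$ complete $(i-1)$-phases, you are already at $4c^{i+1}$ and the extra $\Theta(c)$ for serving the block overshoots the stated bound. Second, and more seriously, when $M_i>Q_i$ the prefix contains \emph{at least} $c$ complete $(i-1)$-phases and possibly far more (the $y$-counters only force an $i$-plus block after $2ckH_k$ consecutive $(i-1)$-blocks, so charging $4c^{i}$ per $(i-1)$-phase in the overrun can cost an extra factor on the order of $kH_k$). Your proposed repair --- re-serving the overrun as an $(i-1)$-scale instance and invoking the $y$-counters --- is left as a sketch, and it is not the route the paper takes.

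The missing idea is that in this case you should abandon the inductive hypothesis entirely. When $M_i>Q_i$, the $i$-phase ends immediately after $s_i$, and every request before the first request for $a_i$ inside $s_i$ is for a page in $\{a_0,\dots,a_{i-1}\}$, while everything after it inside $s_i$ is for $a_i,a_{i+1},\dots$, which are then all cached. So the \emph{entire} phase is served by a single swap: evict $a_i$ at the start of the phase to bring in $a_0$, serve everything for free until $a_i$ is requested inside $s_i$, then evict $a_0$ to bring $a_i$ back. This costs at most $2(c^{i}+1)\le 4c^{i+1}$ for the whole phase, leaves $a_0$ out at the phase boundary as required, and makes the two cases disjoint in their accounting --- the inductive bound is never added to anything. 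With that replacement for your overrun case, the rest of your argument goes through.
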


\begin{proof}
We shall prove this by induction on $i$. If $i = 0$, then the phase under consideration is one step. To serve one step, we can evict $a_1$ to serve $a_0$, and then evict $a_0$ if necessary for a total cost of $4c$. Now assume that the lemma holds for all values in $\{0, \ldots, i-1\}$. Let $s_i$ denote the first $i$-plus block; there are two possible cases for the structure of an $i$-phase:
\begin{enumerate}
    \item $s_i$ appears after the $c$ $(i-1)$-phases: In this case, the $i$-phase ends after this block. Thus, one strategy to serve the phase is to evict $a_i$ at the beginning and evict $a_0$ when $a_i$ is requested within $s_i$. These two evictions cost at most $4c^{i+1}$.
    \item $s_i$ appears within the first $c$ $(i-1)$-phases: By the inductive hypothesis, the algorithm can serve these $c$ $(i-1)$-phases with total cost at most $c \cdot 4c^{i} = 4c^{i+1}$.\qedhere
\end{enumerate}
\end{proof}

Finally, we lower bound the expected number of blocks in an $i$-phase. Since the total number of blocks is fixed, this allows us to upper bound the number of $k$-phases in the entire sequence. The next proposition forms the technical core of the lower bound:

\begin{proposition} \label{prop:L_k-lower}
For $i \geq 1$, the expected number of blocks in an $i$-phase is at least $c^iH_i / 4$.
\end{proposition}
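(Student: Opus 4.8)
I would argue by induction on $i$, writing $L_i$ for the random number of blocks contained in an $i$-phase. Since $H_i = H_{i-1} + 1/i$, the desired bound $\Ex{L_i} \ge c^i H_i/4$ is equivalent to the base case $\Ex{L_1} \ge c/4$ — immediate, as a $1$-phase contains $c$ $0$-phases and hence is at least $c$ blocks long — together with the recursion $\Ex{L_i} \ge c\,\Ex{L_{i-1}} + c^i/(4i)$ for $i \ge 2$. For the term $c\,\Ex{L_{i-1}}$, note that an $i$-phase contains at least $c$ \emph{complete} $(i-1)$-phases (this is precisely the event $Q_i$), and that, because block generation is memoryless, these are i.i.d.\ copies of an $(i-1)$-phase; so if $\Delta_i$ denotes the number of blocks of the $i$-phase appearing after its $c$-th complete $(i-1)$-phase, then $\Ex{L_i} \ge c\,\Ex{L_{i-1}} + \Ex{\Delta_i}$, and it remains to show $\Ex{\Delta_i} \ge c^i/(4i)$.

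Now $\Delta_i > 0$ exactly when $M_i$ occurs after $Q_i$, i.e.\ no $i$-plus block appears within the first $c$ complete $(i-1)$-phases; conditioned on that, serving the rest of the $i$-phase means waiting for the first $i$-plus block, and since a block is $i$-plus only if its sampled index is $\ge i$ — an event of probability $c^{-i}$, perturbed only negligibly by the restriction to the regular blocks that make up $B$ (irregular blocks are rare, cf.\ Lemma~\ref{lem:upper-v'_i-1}) — this wait has expectation at least $c^i/2$. Hence $\Ex{\Delta_i} \ge \tfrac12 c^i \cdot \Pro{\text{no } i\text{-plus block among the first } c \text{ complete } (i-1)\text{-phases}}$, and it suffices to show this probability is $\Omega(1/i)$.

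This is the crux. A block is $i$-plus iff it is $(i-1)$-plus \emph{and} then has sampled index $\ge i$, which given the former happens with probability $\Pr[\ell\ge i]/\Pr[\ell\ge i-1] = 1/c$, independently across the $(i-1)$-plus blocks of an $(i-1)$-phase. Thus, letting $r_j$ be the expected number of $j$-plus blocks inside a $j$-phase, a single $(i-1)$-phase contains no $i$-plus block with probability $\Ex{(1-1/c)^{R}} \ge (1-1/c)^{r_{i-1}}$ by convexity (where $R$ is its number of $(i-1)$-plus blocks), and across $c$ independent $(i-1)$-phases this is at least $(1-1/c)^{c\,r_{i-1}} \ge 4^{-r_{i-1}}$. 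So the target $\Omega(1/i)$ follows once we know $r_{i-1} = O(\log i)$, which I would prove by a parallel induction: the same conditioning identity yields a recursion $r_i \le r_{i-1} + O\!\big((1-1/c)^{c\,r_{i-1}}\big) = r_{i-1} + O(e^{-\Omega(r_{i-1})})$, the discrete analogue of $r' = e^{-\Omega(r)}$, whose solution grows like $\log i$; feeding this back gives $\Pro{\text{no } i\text{-plus block among the first } c \text{ complete } (i-1)\text{-phases}} = \Omega(1/i)$ and closes the induction.

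The main obstacle is exactly this coupling: the extra $c^i/(4i)$ term is supplied by the probability $\Theta(1/i)$ of having to wait for an $i$-plus block, and that probability is $\Theta(1/i)$ only because an $(i-1)$-phase typically carries merely $O(\log i)$ blocks of index $\ge i-1$ — so one is forced to run a two-sided induction that simultaneously lower-bounds $\Ex{L_i}$ and upper-bounds $r_i$, with all constants tracked. Two smaller technical points also need care: an $i$-phase need not begin or end at an $(i-1)$-phase boundary (since $M_i$ can fall in the interior of an $(i-1)$-phase), so one must be slightly careful when isolating $c$ genuinely i.i.d.\ complete $(i-1)$-phases; and, since the phases partition the regular subsequence $B$ rather than the whole request stream, one must check that conditioning on a block being regular changes the per-block probabilities above only negligibly, which again follows from the rarity of irregular blocks.
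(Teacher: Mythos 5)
Your decomposition of an $i$-phase into $c$ complete $(i-1)$-phases plus a geometric tail, and the resulting identity $\Ex{L_i}=c\,\Ex{L_{i-1}}+c^{i}\Pr[M_i>Q_i]$, are exactly the paper's Lemma~\ref{lem:L=cz}, and your Jensen step $\Pr[M_i>Q_i]=\Ex{(1-1/c)^{V}}\ge(1-1/c)^{c\Ex{z_{i-1}}}$ is its Lemma~\ref{lem:p-mi-qi}. The gap is in how you close the induction: you want the \emph{unconditional} bound $\Pr[M_i>Q_i]=\Omega(1/i)$, derived from an upper bound $r_{i-1}=O(\log i)$. Two things break. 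First, the recursion you propose for that upper bound, $r_i\le r_{i-1}+O\bigl((1-1/c)^{c\,r_{i-1}}\bigr)$, requires $\Ex{(1-1/c)^{V}}\le O\bigl((1-1/c)^{\Ex{V}}\bigr)$ — the \emph{reverse} of Jensen, which does not follow without concentration of $V$; the only easy upper bound is $\Pr[M_i>Q_i]\le(1-1/c)^{c}$ (each $(i-1)$-phase contains at least one $(i-1)$-plus block), giving merely $r_i=O(i)$. Second, even granting $r_{i-1}\le\alpha\ln i+O(1)$, exponentiating yields $(1-1/c)^{c\,r_{i-1}}=\Theta\bigl(i^{-\alpha c\ln(c/(c-1))}\bigr)$, which for $c=2$ is $i^{-1.38\alpha}$; since $r_{i-1}$ genuinely is $\Theta(H_{i-1})$ (it is at least $H_{i-1}/4$), you need $\alpha$ small enough that this stays $\Omega(1/i)$, and nothing guarantees the constants cooperate.

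The paper sidesteps both problems with a dichotomy rather than an unconditional tail bound. Working with $\Ex{z_i}=\Ex{L_i}/c^{i}$ and the recursion $\Ex{z_i}=\Ex{z_{i-1}}+\Pr[M_i>Q_i]$ (Lemma~\ref{prp:z=z+p}), its Lemma~\ref{lem:zi-hi} argues: either $\Ex{z_{i-1}}\ge\tfrac12 H_{i-1}$ already, in which case $\Ex{z_i}\ge\Ex{z_{i-1}}\ge\tfrac14 H_i$ with no contribution needed from the tail; or $\Ex{z_{i-1}}<\tfrac12 H_{i-1}$, in which case Jensen gives $\Pr[M_i>Q_i]\ge e^{-2\Ex{z_{i-1}}}\ge\tfrac{1}{e(i-1)}$ and the additive term supplies the missing $\Theta(1/i)$. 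The tail probability is only required to be large precisely when the mean is small — which is exactly when Jensen delivers it — so no upper bound on $r_i$ is ever needed. Restructure your induction this way; your remaining concerns (phase boundaries, regular versus irregular blocks) are minor.
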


We defer the proof of Proposition~\ref{prop:L_k-lower} to the end of this section; first, we use it to prove Theorem~\ref{thm:prp-rand}.

\begin{proof}[Proof of Theorem~\ref{thm:prp-rand}]
Let $\OPT(A)$ denote the cost of an optimal algorithm on the request sequence $A$, and let $\OPT(B)$ denote the cost of an optimal algorithm on the regular blocks $B$. Then we have the following:
\begin{align*}
\Ex{\cost(\OPT(A))}
&\leq \Ex{\cost(\OPT(B))} + 4c\sum_{i=0}^{k}c^{i}\cdot \Ex{v'_{i}} \tag{Lemma~\ref{lem:opta-b-c}} \\
&\leq \Ex{\cost(\ALG^k_B)} + 4c\sum_{i=0}^{k}c^{i}\cdot \frac{2m}{c^ikH_k} \tag{Lemma~\ref{lem:upper-v'_i-1}} \\
&\leq 4c^{k+1}\cdot \Ex{N_k(B)} + \frac{16cm}{H_k}, \tag{Lemma~\ref{lem:opt-B}}
\end{align*}
where $N_k(B)$ denotes the number of $k$-phases in $B$. According to Proposition~\ref{prop:L_k-lower}, the expected number of blocks in a $k$-phase is at least $c^kH_k/4$, which implies $\Ex{N_k(B)} \leq \frac{ 4m}{c^kH_k}$. Combining this with the above, we get
\[
\Ex{\cost(\OPT(A))} \leq \frac{16cm}{H_k} + \frac{16cm}{H_k} = O\left(\frac{m}{H_k}\right).
\]
Since any algorithm incurs at least some constant cost in every block by Lemma~\ref{lem:lower-prp-alg}, its cost is $\Omega(m)$, which concludes the proof.
\end{proof}

\paragraph*{Proof of Proposition~\ref{prop:L_k-lower}} \label{sec:L_k-lower-proof}
Let $z_i$ be a random variable denoting the number of $i$-plus blocks in a fixed $i$-phase. We will first prove a sequence of three lemmas to yield a lower bound on $\Ex{z_i}$.

\begin{lemma}\label{prp:z=z+p}
For any $i \geq 1$, we have $\Ex{z_i}=\Ex{z_{i-1}}+\Pr{M_i>Q_i}$.
\end{lemma}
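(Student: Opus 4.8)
The plan is to split on which of the two ``deadlines'' $M_i$ (first $i$-plus block of the current $i$-phase) and $Q_i$ (completion of the $c$-th $(i-1)$-phase of the current $i$-phase) occurs last, and to evaluate
\[
\Ex{z_i} = \Ex{z_i \cdot \mathbf{1}[M_i > Q_i]} + \Ex{z_i \cdot \mathbf{1}[M_i \le Q_i]}
\]
term by term. I would show that the first term equals exactly $\Pr[M_i > Q_i]$ and the second equals exactly $\Ex{z_{i-1}}$, which is the claim.

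First, on the event $M_i > Q_i$: the $i$-phase consists of its first $c$ (necessarily complete) $(i-1)$-phases, none of which contains an $i$-plus block by the definition of $M_i$, followed by a tail generated block by block that is cut off precisely at the first $i$-plus block. Hence there is exactly one $i$-plus block in the whole $i$-phase, so $z_i = 1$ on this event and $\Ex{z_i \cdot \mathbf{1}[M_i > Q_i]} = \Pr[M_i > Q_i]$. Next, on the event $M_i \le Q_i$: both deadlines are met by the time the $c$-th $(i-1)$-phase completes, and the $i$-phase cannot end before then, so by minimality it equals the union $X_1 \cup \cdots \cup X_c$ of its first $c$ $(i-1)$-phases. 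Writing $a_t$ for the number of $i$-plus blocks in $X_t$, we have $z_i = \sum_{t=1}^{c} a_t$ on this event; moreover $M_i \le Q_i$ is exactly the event $\sum_{t=1}^{c} a_t \ge 1$, and since $\bigl(\sum_t a_t\bigr)\mathbf{1}[\sum_t a_t \ge 1] = \sum_t a_t$ pointwise, we get $\Ex{z_i \cdot \mathbf{1}[M_i \le Q_i]} = \Ex{\sum_{t=1}^{c} a_t}$ with the expectation now taken unconditionally.

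It remains to evaluate $\Ex{\sum_{t=1}^{c} a_t}$. Every $i$-plus block is in particular an $(i-1)$-plus block, and — using $\Pr[\ell \ge j] = c^{-j}$, which follows from the distribution in Step~\ref{prp2:prob} — conditioned on being $(i-1)$-plus a block is $i$-plus with probability $1/c$, independently across blocks and independently of the data that fixes the phase decomposition through level $i-1$ (that data only sees the indicators $\{\ell \ge j : j \le i-1\}$). Hence, writing $b_t$ for the number of $(i-1)$-plus blocks in $X_t$, we get $\Ex{a_t \mid b_t} = b_t/c$ and so $\Ex{a_t} = \Ex{b_t}/c$. Since each of $X_1, \dots, X_c$ is a genuine $(i-1)$-phase, the renewal structure of the phase decomposition gives $\Ex{b_t} = \Ex{z_{i-1}}$, and summing over $t$ yields $\Ex{\sum_{t=1}^{c} a_t} = \Ex{z_{i-1}}$, which completes the proof. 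The step that requires the most care is this last one: an $i$-phase need not begin at an $(i-1)$-phase boundary — the previous $i$-phase may have ended at its $M$-deadline strictly inside an $(i-1)$-phase — so to conclude $\Ex{b_t} = \Ex{z_{i-1}}$ for the leading phase $X_1$ one must appeal to stationarity of the block stream together with the fact that every $(i-1)$-phase is shared by at most two consecutive $i$-phases; handling this boundary effect cleanly is the main obstacle.
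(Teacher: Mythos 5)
Your proof is correct and takes essentially the same route as the paper's: split on whether $M_i>Q_i$ (where $z_i=1$, contributing $\Pr[M_i>Q_i]$) versus counting the $i$-plus blocks in the first $c$ $(i-1)$-phases, each $(i-1)$-plus block being $i$-plus with probability $1/c$, which contributes $\Ex{z_{i-1}}$. Your write-up is the more careful of the two --- the indicator-absorption identity and the boundary question about the leading $(i-1)$-phase $X_1$ are both points the paper passes over silently (the latter dissolves if one reads the phase hierarchy as renewing at the start of each $i$-phase, which begins at a stopping time of the block stream).
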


\begin{proof}
Recall that an $i$-phase ends once it contains $c$ $(i-1)$-phases and an $i$-plus block. In each of the $(i-1)$-phases, the expected number of $(i-1)$-plus blocks is $\Ex{z_{i-1}}$, so the total expected number of $(i-1)$-plus blocks in the first $c$ $(i-1)$-phases of an $i$-phase is $c\cdot \Ex{z_{i-1}}$.

An elementary calculation shows that an $(i-1)$-plus block is an $i$-plus block with probability $1/c$. Thus, in expectation, the first $c$ $(i-1)$-phases of this $i$-phase contain $\Ex{z_{i-1}}$ $i$-plus blocks.

If there are no $i$-plus blocks in the first $c$ $(i-1)$-phases, then the $i$-phase ends as soon as an $i$-plus block appears. In this case, we have $z_i = 1$, and this happens with probability exactly $\Pr{M_i>Q_i}$. Otherwise, the $i$-phase ends immediately after the $c$ $(i-1)$-phases, in which case no additional term is added.
\end{proof}

\begin{lemma} \label{lem:p-mi-qi}
For any $i \geq 1$, we have $\Pr{M_i>Q_i}\geq e^{-2\Ex{z_{i-1}}}$.
\end{lemma}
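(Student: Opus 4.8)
The plan is to rewrite the event $M_i>Q_i$ in terms of a single ``upgrade'' experiment on $(i-1)$-plus blocks and then apply convexity. First I would observe that the blocks requested up to time $Q_i$ are exactly the blocks lying in the first $c$ $(i-1)$-phases of the current $i$-phase, and that every $i$-plus block is in particular an $(i-1)$-plus block. Hence $M_i>Q_i$ holds if and only if none of the $(i-1)$-plus blocks inside those first $c$ $(i-1)$-phases happens to be an $i$-plus block.

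Next I would invoke the independence structure already used in the proof of Lemma~\ref{prp:z=z+p}: conditioned on a block being $(i-1)$-plus, it is an $i$-plus block with probability exactly $1/c$, and --- crucially --- these promotion events can be realized by fresh independent coins that are not already consumed in determining the $(i-1)$-phase partition or the identity of the $(i-1)$-plus blocks (all of which is fixed by the lower levels of the recursion). Let $Z$ be the total number of $(i-1)$-plus blocks inside the first $c$ $(i-1)$-phases; conditioning on everything except the promotion coins gives $\Pr{M_i>Q_i}=\Ex{(1-1/c)^{Z}}$. By linearity of expectation, and since each of the $c$ constituent $(i-1)$-phases contains $\Ex{z_{i-1}}$ many $(i-1)$-plus blocks in expectation (as already used for Lemma~\ref{prp:z=z+p}), we get $\Ex{Z}=c\,\Ex{z_{i-1}}$.

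Then I would finish with Jensen's inequality: the map $t\mapsto(1-1/c)^{t}$ is convex, so
\[
\Pr{M_i>Q_i}=\Ex{(1-1/c)^{Z}}\ \ge\ (1-1/c)^{\,\Ex{Z}}=\bigl((1-1/c)^{c}\bigr)^{\Ex{z_{i-1}}},
\]
and a one-line estimate ($\ln(1-1/c)\ge -2/c$ for all $c\ge 2$, hence $(1-1/c)^{c}\ge e^{-2}$) turns the right-hand side into $e^{-2\Ex{z_{i-1}}}$, which is the claim. It is convenient that the argument only needs the weak bound $(1-1/c)^{c}\ge e^{-2}$ rather than the asymptotically tight $\ge e^{-1}$, so no care about lower-order terms is required; note also that using Jensen on $Z$ (rather than phase-by-phase) means we only rely on $\Ex{Z}=c\,\Ex{z_{i-1}}$ and not on independence of the $c$ phases.

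The step I expect to be the main obstacle is the conditional-independence claim in the second paragraph: making precise that the randomness promoting $(i-1)$-plus blocks to $i$-plus blocks is ``new'' relative to the randomness that fixes the $(i-1)$-phase structure, and checking that passing to the regular-block subsequence $B$ (rather than the full request stream) does not disturb this. The hierarchical, recursive definition of phases is exactly what makes this go through, but it is the part that needs to be spelled out; everything after it is routine.
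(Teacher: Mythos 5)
Your proof is correct and follows essentially the same route as the paper: both express $\Pr{M_i>Q_i}$ as the expectation of $(1-1/c)^Z$ over the number $Z$ of $(i-1)$-plus blocks in the first $c$ $(i-1)$-phases, apply Jensen's inequality using convexity, compute $\Ex{Z}=c\,\Ex{z_{i-1}}$ by linearity, and finish with $(1-1/c)^c\ge e^{-2}$ for $c\ge 2$. Your extra care about the conditional independence of the promotion coins is a reasonable elaboration of what the paper leaves implicit, not a different argument.
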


\begin{proof}
We let $v_1, \ldots, v_c$ denote the number of $i$-plus blocks in the first $c$ $(i-1)$-phases and let $V = \sum_{i=1}^c v_i$. As we saw in the proof of Lemma \ref{prp:z=z+p}, an $(i-1)$-plus block is an $i$-plus block with probability $1/c$, so the probability that an $(i-1)$-plus block is an $(i-1)$-block is $1 - 1/c$. Thus, we have
\[
\Pr{M_i>Q_i} =\mathbb{E}_{v_1,v_2,\dots,v_c}\left[\left(1-\frac{1}{c}\right)^V\right] \geq \left(1-\frac{1}{c}\right)^{\Ex{V}} = \left(1-\frac{1}{c}\right)^{c\cdot \Ex{z_{i-1}}}
\]
where the inequality follows from convexity and the second equality holds due to linearity of expectation. The lemma follows from this and the fact that $c \geq 2$.
\end{proof}

\begin{lemma}\label{lem:zi-hi}
For any $i \geq 0$, we have $\Ex{z_i}\geq \frac{1}{4}H_i$.
\end{lemma}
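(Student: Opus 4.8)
The plan is a short induction on $i$ built on the recursion obtained by chaining Lemmas~\ref{prp:z=z+p} and~\ref{lem:p-mi-qi}: for every $i \geq 1$,
\[
\Ex{z_i} \;=\; \Ex{z_{i-1}} + \Pr{M_i>Q_i} \;\geq\; \Ex{z_{i-1}} + e^{-2\Ex{z_{i-1}}}.
\]
Writing $x_i := \Ex{z_i}$, the goal becomes $x_i \geq \tfrac14 H_i$. The base case $i=0$ is trivial, since $H_0 = 0$ and $x_0 \geq 0$ (as $z_0$ is a nonnegative count).

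For the inductive step, assume $x_{i-1} \geq \tfrac14 H_{i-1}$ and recall $H_i = H_{i-1} + \tfrac1i$. I would split on the size of $x_{i-1}$. If $x_{i-1} \geq \tfrac14 H_i$, then because the extra term in the recursion is nonnegative, $x_i \geq x_{i-1} \geq \tfrac14 H_i$ and we are done. If instead $x_{i-1} < \tfrac14 H_i$, then the exponential term is large: $e^{-2x_{i-1}} > e^{-H_i/2} \geq \tfrac{1}{4i}$, where the last step uses the elementary bound $H_i \leq 1 + \ln i \leq 2\ln(4i)$, valid for every $i \geq 1$. Combining this with the inductive hypothesis gives $x_i \geq x_{i-1} + e^{-2x_{i-1}} \geq \tfrac14 H_{i-1} + \tfrac{1}{4i} = \tfrac14 H_i$, which closes the induction. (This is also exactly the ingredient needed, together with the relation between the number of $i$-plus blocks and the total length of an $i$-phase, to complete Proposition~\ref{prop:L_k-lower}.)

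There is no real obstacle here; the statement is a routine consequence of the two preceding lemmas. The one point deserving attention is that $x \mapsto e^{-2x}$ is \emph{decreasing}, so one cannot feed the inductive lower bound on $x_{i-1}$ directly into the exponential — hence the case split, which lets the inductive hypothesis carry the argument when $x_{i-1}$ is already large, and uses the case assumption $x_{i-1} < \tfrac14 H_i$ as the \emph{upper} bound on $x_{i-1}$ needed to lower-bound $e^{-2x_{i-1}}$ in the other case. The lone calculation, $e^{-H_i/2} \geq \tfrac{1}{4i}$ for $i \geq 1$, reduces to the standard estimate $H_i \leq 1 + \ln i$.
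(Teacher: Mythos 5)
Your proof is correct and follows essentially the same route as the paper's: induction using $\Ex{z_i} = \Ex{z_{i-1}} + \Pr{M_i>Q_i} \geq \Ex{z_{i-1}} + e^{-2\Ex{z_{i-1}}}$, with a case split on whether $\Ex{z_{i-1}}$ is already large enough (the paper thresholds at $\tfrac12 H_{i-1}$ rather than your $\tfrac14 H_i$, and handles $i\le 4$ as base cases via $z_i\ge 1$, but the mechanism is identical), and in the small case lower-bounding the exponential term via $H_i \leq 1+\ln i$. Your observation about needing the case assumption as an \emph{upper} bound on $\Ex{z_{i-1}}$ to control the decreasing exponential is exactly the point of the paper's case split as well.
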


\begin{proof}
When $i \leq 4$, we have $\Ex{z_i}\geq 1\geq \frac{1}{4}H_i$. Now for induction, assume the statement holds for $j < i$, and consider the two possible cases:
\begin{enumerate}
\item If $\Ex{z_{i-1}}\geq \frac{1}{2}H_{i-1}$, then Lemma \ref{prp:z=z+p} implies $\Ex{z_i}\geq \Ex{z_{i-1}}\geq \frac{1}{4}H_i$.
\item If $\Ex{z_{i-1}}< \frac{1}{2}H_{i-1} < \frac{1}{2}(1 + \ln(i-1))$, then\\
$\Ex{z_i} =\Ex{z_{i-1}}+\Pr{M_i>Q_i} \geq \frac{1}{4}H_{i-1} + e^{-2\cdot \Ex{z_{i-1}}}$,
where the equality follows from Lemma~\ref{prp:z=z+p} and the inequality holds by the induction hypothesis and Lemma~\ref{lem:p-mi-qi}. Thus,\\ 
$\Ex{z_i} \geq \frac{1}{4}H_{i-1} + \frac{1}{e}\cdot\frac{1}{i-1}\geq \frac{1}{4}H_{i}$.\qedhere
\end{enumerate}
\end{proof}

Now let $L_i$ denote the number of blocks in an $i$-phase; recall that our goal is to lower bound its expectation by $c^iH_i/4$. The following lemma relates $L_i$ to $z_i$.

\begin{lemma}\label{lem:L=cz}
For any $i\geq 0$, we have $\Ex{L_i}=c^i\cdot \Ex{z_i}$.
\end{lemma}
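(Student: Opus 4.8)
The plan is to prove this by induction on $i$, using the recursion $\Ex{z_i}=\Ex{z_{i-1}}+\Pr{M_i>Q_i}$ from Lemma~\ref{prp:z=z+p} as the template. The base case $i=0$ is immediate: a $0$-phase is a single block, which is itself a $0$-plus block, so $L_0=z_0=1$ and $\Ex{L_0}=1=c^0\,\Ex{z_0}$.

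For the inductive step, fix $i\ge 1$ and assume $\Ex{L_{i-1}}=c^{i-1}\Ex{z_{i-1}}$. Following the definition of an $i$-phase, I would split its blocks into two groups: (a) those lying in its first $c$ constituent $(i-1)$-phases, and (b) the ``overflow'' blocks, which are present only when $M_i>Q_i$ and then consist of the blocks occurring strictly after the end of the $c$-th $(i-1)$-phase up to and including the first $i$-plus block. Group (a) has total expected length $c\,\Ex{L_{i-1}}=c^i\Ex{z_{i-1}}$ by the inductive hypothesis, using that each $(i-1)$-phase behaves as a fresh copy of a generic $(i-1)$-phase --- the same regeneration property already used in the proof of Lemma~\ref{prp:z=z+p}. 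For group (b): conditioned on $M_i>Q_i$, the overflow blocks are exactly the blocks following the time $Q_i$, which are independent of the event $\{M_i>Q_i\}$ since that event is determined by the blocks up to $Q_i$; as each block is an $i$-plus block with probability $c^{-i}$ (the elementary calculation from Lemma~\ref{prp:z=z+p}), the number of overflow blocks is geometric with mean $c^i$. Hence group (b) contributes $c^i\,\Pr{M_i>Q_i}$ in expectation, and combining with Lemma~\ref{prp:z=z+p},
\[
\Ex{L_i}=c^i\Ex{z_{i-1}}+c^i\,\Pr{M_i>Q_i}=c^i\left(\Ex{z_{i-1}}+\Pr{M_i>Q_i}\right)=c^i\Ex{z_i}.
\]

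The point I expect to need the most care is the regeneration bookkeeping behind group (a): that the first $c$ $(i-1)$-phases of an $i$-phase really do behave like $c$ independent copies of a generic $(i-1)$-phase, even though an $i$-phase may end in the middle of an $(i-1)$-phase (when $M_i>Q_i$). This is the same implicit assumption already at work in the analysis of $\Ex{z_i}$, so I would make it explicit here, along with the routine fact that all expectations involved are finite --- which itself follows from the induction, since the decomposition above gives $\Ex{L_i}\le c\,\Ex{L_{i-1}}+c^i$.
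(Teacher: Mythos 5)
Your proof is correct and follows essentially the same route as the paper: both derive the recursion $\Ex{L_i}=c\cdot\Ex{L_{i-1}}+c^i\cdot\Pr{M_i>Q_i}$ from the same decomposition into the first $c$ $(i-1)$-phases plus a geometric overflow wait. The only cosmetic difference is that you close the argument by direct induction using Lemma~\ref{prp:z=z+p}, whereas the paper unrolls both recursions and compares the resulting sums.
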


\begin{proof}
When $i=0$, the lemma holds because $E[L_0]=E[z_0]=1$, so now we assume $i\geq 1$. Recall that an $i$-phase contains at least $c$ $(i-1)$-phases, so the expected total number of blocks in the first $c$ $(i-1)$-phases of this $i$-phase is $c\cdot \Ex{L_{i-1}}$.

If there are no $i$-plus-blocks in these $c$ $(i-1)$-phases, we need to wait for an $i$-plus block to appear in order for the $i$-phase to end. This is a geometric random variable with expectation $c^i$. Thus, we have:
$\Ex{L_i}=c\cdot \Ex{L_{i-1}}+c^i \cdot \Pr{M_i>Q_i}$.
Applying this recursively,
\[
\Ex{L_i}=c^i\left(\sum_{j=1}^{i}\Pr{M_j>Q_j}+\Ex{L_0}\right) = c^i\left(\sum_{j=1}^{i}\Pr{M_j>Q_j}+1\right)
\]
Furthermore, from Lemma \ref{prp:z=z+p}, we have
\[
\Ex{z_i} = \Ex{z_{i-1}} + \Pr{M_i > Q_i} = \Ex{z_0} + \sum_{j=1}^i \Pr{M_j > Q_j} = 1 + \sum_{j=1}^i \Pr{M_j > Q_j}.
\]
Combining the two equalities yields the lemma.
\end{proof}
We conclude by proving Proposition~\ref{prop:L_k-lower}. Fix some $i \geq 1$. Using 
Lemma~\ref{lem:L=cz} and Lemma~\ref{lem:zi-hi}, we get 
$\Ex{L_i} = c^i\cdot \Ex{z_i} \geq \frac{c^iH_i}{4}$.

\section{The $\ell$-Strong Lookahead Model} \label{sec:l-strong}
Now we consider the following prediction model: at each time $t$, the algorithm can see request $p_t$ as well as $L(t)$, which is the set of all requests through the $\ell$-th distinct request. In other words, the algorithm can always see the next contiguous subsequence of $\ell$ distinct pages (excluding $p_t$) for a fixed value of $\ell$. This model was introduced by Albers~\cite{albers1993influence}, who (among other things) proved the following lower bounds on algorithms with $\ell$-strong lookahead.

\begin{lemma}[\cite{albers1993influence}]\label{lem:strong-1}
For unweighted paging with $\ell$-strong lookahead where $\ell \leq k-2$, any deterministic algorithm is $\Omega(k-\ell)$-competitive. For randomized algorithms, the bound is $\Omega(\log(k-\ell))$.
\end{lemma}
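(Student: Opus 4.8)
The plan is to turn $\ell$-strong lookahead into a handicap of $\ell$ cache slots and then run the classical no-lookahead lower bounds on an effective cache of size $k-\ell$. Work with a universe of $k+1$ distinct pages, split into a set $S$ of $\ell$ \emph{filler} pages and a set $T$ of $k-\ell+1$ \emph{active} pages (if $\ell=0$ this is just the classical construction, so assume $\ell\geq 1$). The request sequence is a concatenation of $m$ \emph{rounds}, where each round is an arbitrary permutation of all of $S$ (the $\ell$ filler requests) followed by a single request to some page of $T$ (the active request). The first thing to check is that this layout is consistent with, and reveals nothing beyond, what the lookahead already forces: when the algorithm is at the active request of a round, the next $\ell$ distinct requests are exactly the $\ell$ pages of $S$ of the following round, so the lookahead window there contains only $S$ and never the next round's active page; and while the algorithm works through a round's filler, the lookahead exposes at most that same round's active page, which the adversary will have already committed. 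Hence the adversary keeps full freedom to choose each round's active page as a function of $\ALG$'s cache state at the \emph{end of the previous round}, which is precisely the freedom available to the no-lookahead adversary playing on the $(k-\ell+1)$-page universe $T$.

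For the deterministic bound: since there are $k+1$ pages and the cache has size $k$, after each round $\ALG$ is missing exactly one page $p$. Before the next round begins, the adversary looks at $p$: if $p\in T$, it makes $p$ the next round's active request, and since $\ALG$ then holds all of $S$ it cannot repair its cache during that round's filler, so it misses on $p$; if $p\in S$, then $p$ occurs in the very next filler, so $\ALG$ misses there regardless of the active request. Either way $\ALG$ incurs at least one miss per round, so $\cost(\ALG)\geq m$. Meanwhile $\OPT$ is at most the cost of the algorithm that pins all of $S$ forever (at most $\ell$ misses total) and runs Belady on the active subsequence using the remaining $k-\ell$ slots; since $|T|=k-\ell+1$, Belady's evicted $T$-page returns only after all of the other $k-\ell$ pages of $T$ have appeared, so its misses are $\Omega(k-\ell)$ rounds apart and $\cost(\OPT)=O(\ell+m/(k-\ell))$. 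Letting $m\to\infty$ gives competitive ratio $\Omega(k-\ell)$; the hypothesis $\ell\leq k-2$ is exactly what guarantees $|T|\geq 3$, i.e.\ that the effective cache $k-\ell$ leaves a nontrivial gap.

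For the randomized bound I would apply Yao's principle with the oblivious distribution that sets each round's active request to a page chosen uniformly at random from $T$ minus the previous round's active page. Against this distribution, any deterministic $\ALG$ that holds all of $S$ is missing one $T$-page $M$ at the active request, and $M$ differs from the previous active page (since $\ALG$ made no cache changes during the filler), so $M$ is one of the $k-\ell$ equally likely candidates and $\ALG$ misses with probability $1/(k-\ell)$; and any $\ALG$ that does \emph{not} hold all of $S$ eats a deterministic filler miss, which is no better since $k-\ell\geq 2$. Hence $\Ex{\cost(\ALG)}\geq m/(k-\ell)$. On the other side, pinning $S$ and running the offline optimum on the active subsequence shows $\Ex{\cost(\OPT)}$ is at most $\ell$ plus the expected optimal offline cost of a uniform, immediate-repeat-free random sequence over $k-\ell+1$ pages with a cache of size $k-\ell$; the classical phase analysis --- a phase is a maximal block touching only $k-\ell$ distinct pages, it has expected length $\Theta((k-\ell)H_{k-\ell})$, and Belady pays $O(1)$ per phase --- gives $\Ex{\cost(\OPT)}=O(\ell + m/((k-\ell)H_{k-\ell}))$. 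Dividing and taking $m\to\infty$ yields randomized competitive ratio $\Omega(H_{k-\ell})=\Omega(\log(k-\ell))$.

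The main obstacle is the lookahead bookkeeping: one must verify, at the start of a round, inside the filler, and at the active request, that the next $\ell$ distinct requests are exactly what is claimed, so that the adversary's adaptive choice of active page is never already pinned down by the visible window; getting the round boundaries and the (possibly different) permutations of $S$ in consecutive rounds to line up correctly is the fiddly part. The quantitative phase-length computation for the random $T$-sequence is the other place requiring care, since that is what produces the logarithmic factor rather than a mere $\Theta(1/(k-\ell))$ per-round miss rate for $\OPT$. Everything else is a direct transfer of the standard $\Omega(k)$ and $\Omega(\log k)$ paging lower bounds to the $(k-\ell+1)$-page universe $T$.
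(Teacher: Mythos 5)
The paper does not prove this lemma; it imports it verbatim from Albers's paper, so there is no internal proof to compare against. Your reconstruction is correct and is essentially Albers's original construction: burn the lookahead on a block of $\ell$ ``filler'' pages that any reasonable algorithm keeps resident, so that the window at each active request never reveals the next active request, and then run the classical $\Omega(k)$ / $\Omega(H_k)$ adversaries on the residual universe of $k-\ell+1$ pages against an effective cache of $k-\ell$. The lookahead bookkeeping you flag as the fiddly part does check out: at an active request the window contains exactly the $\ell$ distinct filler pages of the next round and stops there, and inside a round's filler the window reaches only that round's (already committed) active page. The one small patch I would make is in the randomized case, where you justify $M\neq a_i$ (the missing page differs from the previous active page) by ``$\ALG$ made no cache changes during the filler''; the clean justification is to restrict to lazy (demand-paging) algorithms without loss of generality, after which a just-served page cannot be the missing page and the $1/(k-\ell)$ hitting probability follows. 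Likewise, in the deterministic case the statement ``it cannot repair its cache during the filler'' is cleanest as ``any page missing at the start of a round and requested within it forces at least one fetch that round,'' which covers non-lazy prefetching too.
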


Notice that Lemma~\ref{lem:strong-1} implies that for small values of $\ell$, $\ell$-strong lookahead provides no asymptotic improvement to the competitive ratio of any algorithm. The proof proceeds by constructing a particular sequence of requests and analyzing the performance of any algorithm on this sequence. By slightly modifying the sequence, we can prove a similar result for the weighted paging problem.

\begin{theorem} \label{thm:ell-large}
For weighted paging with $\ell$-strong lookahead where $n-k+1\leq \ell\leq n-1$, any deterministic algorithm is $\Omega(n-\ell)$-competitive, and any randomized algorithm is $\Omega(\log (n-\ell))$-competitive.
\end{theorem}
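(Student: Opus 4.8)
The plan is to \emph{reduce} to Albers' unweighted lower bound (Lemma~\ref{lem:strong-1}): we pad a long but essentially cost‑free lookahead window so that it conveys only as much useful information as a short lookahead would on a much smaller instance. Write $m:=n-\ell$, so $1\le m\le k-1$ by hypothesis; if $m=O(1)$ the claimed bound is vacuous, so assume $m$ is large. Split the $n$ pages into $k$ \emph{active} pages $a_1,\dots,a_k$ of weight $1$ and $n-k$ \emph{dummy} pages $d_1,\dots,d_{n-k}$ of weight $\varepsilon$, with $\varepsilon>0$ tiny (it will suffice to take $\varepsilon=\Theta\!\big(1/(kn)\big)$). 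Let $\kappa:=k-1$ and $\lambda:=k-m$; note $\lambda\le\kappa-2$ whenever $m\ge3$, and the active set has exactly $\kappa+1=k$ pages, so Lemma~\ref{lem:strong-1} applies to unweighted paging on the active pages with cache $\kappa$ and $\lambda$-strong lookahead and yields an $\Omega(\kappa-\lambda)=\Omega(m)$ deterministic (resp.\ $\Omega(\log(\kappa-\lambda))=\Omega(\log m)$ randomized) bound. Take Albers' hard sequence $\sigma$ witnessing this, and build the actual input $\sigma'$ by inserting the fixed block $d_1d_2\cdots d_{n-k}$ (a \emph{sweep}) right after every request of $\sigma$, plus one sweep at the very start.

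The proof then proceeds in four steps. \textbf{(i) Lookahead equivalence:} since a full sweep lies between any two consecutive $\sigma$-requests, at every moment the next $\ell$ distinct requests of $\sigma'$ are all $n-k$ dummies (which are deterministic, hence informationless) together with the next $\ell-(n-k)=\lambda$ distinct $\sigma$-requests; if $\sigma$ repeats a page the window just reaches further into $\sigma$, exactly mirroring $\lambda$-strong lookahead on $\sigma$. So an algorithm on $\sigma'$ with $\ell$-strong lookahead sees precisely what $\lambda$-strong lookahead on $\sigma$ reveals. \textbf{(ii) Effective cache $k-1$:} immediately before any $\sigma$-request an online algorithm $\ALG'$ on $\sigma'$ has just served the $\ge1$ dummies of the preceding sweep and, since fetches are on demand, could not have refetched any active page it evicted there, so it holds at most $k-1$ active pages; recording these at the $\sigma$-request times defines an algorithm $\ALG$ for $\sigma$ running in a cache of size $\kappa=k-1$ under $\lambda$-strong lookahead, with (cost of $\ALG$ on $\sigma$) $\le$ (cost of $\ALG'$ on $\sigma'$). \textbf{(iii) Bounding $\OPT$:} the optimum on $\sigma'$ is at most ``imitate the optimum for $\sigma$ in $k-1$ slots and cycle the dummies through the remaining slot,'' which adds at most $(n-k)\varepsilon$ per sweep; since Albers' optimum for $\sigma$ incurs at least one miss per $O(k)$ requests, taking $\varepsilon$ small enough makes this extra term at most half of the optimum for $\sigma$. \textbf{(iv) Combine:} (cost of $\ALG'$) $\ge$ (cost of $\ALG$ on $\sigma$) $\ge(\kappa-\lambda)\cdot(\text{opt for }\sigma)\ge\Omega(\kappa-\lambda)\cdot(\text{opt for }\sigma')$, giving the deterministic bound; the randomized case is identical via Yao's principle applied to the randomized half of Lemma~\ref{lem:strong-1}, the deterministic sweeps not interfering with the distribution.

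The main obstacle is making steps (i) and (ii) airtight. For (i) one must recall that Albers' deterministic adversary is ``$\lambda$-delayed adaptive''—it fixes the next $\lambda$ distinct requests before seeing the algorithm react to them—and verify that composing it with the sweep insertion is still a legal online adversary against $\ALG'$ whose revealed $\ell$-window is always consistent. For (ii) one needs a short exchange argument showing $\ALG'$ gains nothing by devoting more than one slot to dummies or by carrying a dummy across a $\sigma$-request (so the induced $\ALG$ really lives in a size-$(k-1)$ cache), and one must check that the negligible dummy cost cannot help $\ALG'$ against the optimum. These are routine but finicky; once the reduction is in place, the quantitative heart—$\Omega(\kappa-\lambda)=\Omega(m)=\Omega(n-\ell)$, and likewise for the logarithm—is immediate.
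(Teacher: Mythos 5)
Your proposal is correct and takes essentially the same route as the paper: both reduce to Albers' lower bound (Lemma~\ref{lem:strong-1}) by interleaving a sweep of very cheap dummy pages between consecutive requests of the unweighted hard instance, so that the $\ell$-strong lookahead is diluted to an effective lookahead of only $\ell-(n-k)+O(1)$ distinct heavy pages. The only difference is bookkeeping: the paper inserts $n-k-1$ dummies and keeps the $(k+1)$-page hard instance with cache $k$, whereas you insert $n-k$ dummies, keep $k$ active pages, and argue the effective cache for them is $k-1$; both yield $\Omega(n-\ell-1)=\Omega(n-\ell)$ (and the logarithmic randomized analogue).
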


\begin{proof}
We modify the adversarial input in Lemma~\ref{lem:strong-1} as follows: insert $n-k-1$ distinct pages with very low weight between every two pages. This causes the lookahead to have effective size $\ell' = \ell-(n-k-1)$, because at any point $L(t)$ contains at most $\ell'$ pages with normal weight. Note that if $\ell \leq n-k$, then $\ell' \leq 1$, and from Lemma~\ref{lem:strong-1}, a lookahead of size 1 provides no asymptotic benefit to any algorithm.

If $\ell \leq n-3$, then $\ell' \leq k-2$. Thus, we can apply Lemma~\ref{lem:strong-1} to conclude that for any deterministic algorithm, the competitive ratio is $\Omega(k - \ell') = \Omega(n-\ell-1)$, and for any randomized algorithm, the competitive ratio is $\Omega(\log(n-\ell-1))$. Otherwise, if $\ell \geq n-2$, then the lower bounds continue to hold because when $\ell = n-3$, they are $\Omega(1)$.
\end{proof}

\section{The Strong Per-Request Prediction Model (SPRP)} \label{sec:static}
In this section, we define a simple algorithm called \static that is 2-competitive when the SPRP predictions are always correct. At any time step $t$, let $L(t)$ denote the set of pages in the current prediction. The \static algorithm runs on ``batches'' of requests. The first batch starts at $t=1$ and comprises all requests in $L(1)$. The next batch starts once the first batch ends, i.e. at $|L(1)|+1$, and comprises all predicted requests at that time, and so on. Within each batch, the \static algorithm runs the optimal offline strategy, computed at the beginning of the batch on the entire set of requests in the batch.  

\begin{theorem} \label{thm:static}
The \static algorithm is 2-competitive when the predictions from SPRP are entirely correct.
\end{theorem}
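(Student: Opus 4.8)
The plan is to compare the cost of \static batch-by-batch against $\OPT$, showing that on each batch \static pays at most the optimal offline cost \emph{on that batch alone}, plus a one-time ``boundary'' charge of at most the cost of filling the cache once, and that the sum of these per-batch optimal costs is itself at most $2\,\OPT$ — or, more cleanly, that the per-batch optimal costs telescope against a global lower bound on $\OPT$. First I would fix notation: let the batches be $B_1, B_2, \ldots$, where $B_1$ consists of the requests at times $1, \ldots, |L(1)|$, and each subsequent batch consists of exactly the requests predicted (hence, since predictions are correct, the actual requests) spanning one ``round'' of the SPRP window. The key structural fact to nail down is that a batch boundary occurs precisely at a point where the SPRP prediction on the first request of the batch runs out — so within a batch, every page requested in the batch will appear again within the same batch, \emph{except} possibly on its last occurrence. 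I would make this precise, because it is what lets the offline optimum within a batch be related to the global optimum.

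Next I would set up the accounting. Let $\cost_{\mathrm{opt}}(B_j)$ denote the optimal offline weighted-paging cost of serving the request subsequence $B_j$ in isolation, starting from an empty cache (or from an arbitrary cache — these differ by at most the weight of a full cache load, which I will absorb). \static pays, on batch $B_j$, at most $\cost_{\mathrm{opt}}(B_j)$ plus at most the cost of evicting whatever it must evict to transition from its cache state at the end of $B_{j-1}$ to the state the fresh optimal solution for $B_j$ wants at its start; I would bound this transition cost by the total weight brought into the cache during $B_j$, which is in turn at most $\cost_{\mathrm{opt}}(B_j)$ up to constants, or simply fold it in. So $\cost(\static) \le \sum_j \cost_{\mathrm{opt}}(B_j) + (\text{lower-order boundary terms})$. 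The heart of the argument is then the inequality $\sum_j \cost_{\mathrm{opt}}(B_j) \le 2\,\OPT$ (or $\le \OPT + (\text{something} \le \OPT)$). To get this, I would take the single global optimal solution $\OPT$ and restrict it to each batch: $\OPT$ restricted to $B_j$ is a feasible way to serve $B_j$ starting from $\OPT$'s cache state at the start of $B_j$; its cost is the portion of $\OPT$'s cost incurred during $B_j$. Summing over $j$ recovers exactly $\OPT$. The slack of $2$ comes from the gap between ``$\OPT$'s restriction starts from $\OPT$'s cache state'' and ``$\cost_{\mathrm{opt}}(B_j)$ starts from the worst/empty state'': converting between these costs at most one extra cache-fill per batch, and I would argue — using the structural fact above that within each batch every non-final occurrence of a page recurs — that these conversion costs can be charged against $\OPT$'s own evictions, yielding the clean factor $2$.

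I would organize the writeup as: (i) define batches formally and prove the ``recurrence within a batch'' structural lemma; (ii) prove $\cost(\static)$ on batch $j$ is at most $\cost_{\mathrm{opt}}(B_j)$ plus a boundary term; (iii) prove $\sum_j \cost_{\mathrm{opt}}(B_j) \le \OPT + (\text{boundary terms bounded by } \OPT)$ by restricting the global optimum; (iv) combine and observe the boundary terms are dominated, giving the factor $2$. Throughout, I would be careful that ``optimal offline within a batch'' is well-defined because \static computes it at the start of the batch knowing all of $B_j$ (this is exactly what SPRP with no error supplies).

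The main obstacle I anticipate is step (iii): making the batch-restriction argument tight enough to get $2$ rather than, say, $3$ or $4$. The subtlety is that $\cost_{\mathrm{opt}}(B_j)$ as computed by \static may assume a favorable starting cache, whereas $\OPT$'s restriction to $B_j$ is saddled with $\OPT$'s actual (possibly unfavorable) starting state — so naively $\cost_{\mathrm{opt}}(B_j) \le \cost_{\OPT}(B_j)$ is the wrong direction and I instead need $\cost_{\mathrm{opt}}(B_j) \le \cost_{\OPT}(B_j) + (\text{correction})$. Controlling the correction requires exploiting that \static's own cache at the \emph{end} of $B_{j-1}$ is a legitimate warm start for $B_j$ (since \static played optimally on $B_{j-1}$ and the last requests of $B_{j-1}$ overlap informatively with $B_j$ via the prediction window), so the real comparison should perhaps be between consecutive \static states rather than against a cold start — this is where I'd expect to spend the most care, and where the precise definition of when a batch ends (the moment the first request's prediction expires) does the essential work.
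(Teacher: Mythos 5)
Your outline has the right shape (batch-by-batch comparison, with a transition cost at each batch boundary charged somehow to $\OPT$), but the one lemma that actually makes the argument close is missing, and you have explicitly flagged it as the unresolved obstacle. The structural fact you propose to prove --- that within a batch every non-final occurrence of a page recurs inside the batch --- is true of any interval and does not do the needed work. The fact the paper uses is a \emph{nesting} property: since SPRP, on the last request for $p$ in batch $B_{j-1}$, reveals all requests up to and including the next occurrence of $p$, every page requested before $B_j$ begins that will ever be requested again is itself requested in $B_{j-1}$; hence $B_1 \subseteq B_2 \subseteq \cdots \subseteq B_{m-1}$ as sets of pages. This is what powers the charging argument: in the eviction-cost model, \static's cost on $B_j$ is at most the cost of the feasible strategy ``evict $C(S_{j-1})\setminus C(\OPT_{j-1})$, then imitate $\OPT$ on $B_j$,'' i.e.\ $S_j \le \OPT_j + w\bigl(C(S_{j-1})\setminus C(\OPT_{j-1})\bigr)$. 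Any page $p$ in that difference was requested before $B_j$, hence (by nesting) requested in $B_{j-1}$, hence was in $\OPT$'s cache during $B_{j-1}$ but not at its end --- so $\OPT$ evicted it during $B_{j-1}$, and the switching cost is at most $\OPT_{j-1}$. Summing $S_j \le \OPT_j + \OPT_{j-1}$ gives the factor $2$ with no slack to recover.

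Two further points in your plan would fail as written. First, routing the argument through $\cost_{\mathrm{opt}}(B_j)$ defined from an empty or arbitrary start is the wrong intermediate quantity: \static computes its optimum from its \emph{own} current cache state, and the clean comparison is directly against the specific feasible solution above, not against an isolated per-batch optimum that then needs a two-sided correction. Second, your proposed bound ``the transition cost is at most the total weight brought into the cache during $B_j$, which is at most $\cost_{\mathrm{opt}}(B_j)$ up to constants'' is false in the eviction-cost model (the first $k$ pages are brought in at zero eviction cost), and even if it held ``up to constants'' it would destroy the factor $2$ you are trying to prove. The fix in both cases is the same: identify the nesting lemma and charge the boundary evictions to $\OPT$'s evictions in the \emph{previous} batch.
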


\begin{proof}
In this proof, we assume w.l.o.g. that evicting page $p$ costs $w(p)$, and fetches can be performed for free.
%

Suppose the algorithm runs a total of $m$ batches $B_1, \ldots, B_m$. Consider a page $p$ in some batch $B_i$ where $i < m$. If $p$ appears again after $B_i$, then upon seeing the last request for $p$ in $B_i$, SPRP will include $p$ in the next batch $B_{i+1}$. (If $p$ does not appear again, then the next batch must be the last batch.) Therefore, the batches satisfy $B_1 \subseteq B_2 \subseteq \cdots \subseteq B_{m-1}$.

Now let $\OPT$ denote a fixed optimal offline algorithm for the entire sequence, and let $\OPT_i$ denote the cost of $\OPT$ incurred in $B_i$. Similarly, let $S$ denote the total cost of \static, and let $S_i$ denote the cost that \static incurs in $B_i$. So we have $\OPT = \sum_{i=1}^m \OPT_i$ and $S = \sum_{i=1}^m S_i$.

Fix a batch index $j \in \{2, 3, \ldots, m\}$ and let $C(\OPT_{j-1})$ and $C(S_{j-1})$ denote the cache states of $\OPT$ and \static immediately before batch $B_j$. We know that \static runs an optimal offline algorithm on $B_j$. One feasible solution is to immediately change the cache state to $C(\OPT_{j-1})$, and then imitate what $\OPT$ does to serve $B_j$. Since we charge for evictions, we have
\[
S_j \leq \OPT_j + \sum_{p \in C(S_{j-1}) \setminus C(\OPT_{j-1})} w(p), \text{ for every } j \in \{2, 3, \ldots, m\}.
\]
Consider some $p \in C(S_{j-1})\setminus C(\OPT_{j-1})$: since $p \in C(S_{j-1})$, we know $p$ must have appeared before the start of $B_j$ (because \static does not fetch pages that have never been requested). Since $B_{j-1}$ contains all pages that appeared before, in particular, $p$ must be in $B_{j-1}$. Furthermore, since $p \not\in C(\OPT_{j-1})$, then at some point while serving $B_{j-1}$, $\OPT$ must have evicted $p$. Thus, $S_j \leq \OPT_j + \OPT_{j-1}$. Summing over all $j \geq 2$ and $S_1 \leq \OPT_1$ proves the theorem.
\end{proof}

\section{The SPRP Model with Prediction Errors} \label{sec:sprp}
In this section, we consider the SPRP prediction model with the possibility of prediction errors. We first define three measurements of error and then prove lower and upper bounds on algorithms with imperfect SPRP, in terms of these error measurements.

Let $A$ denote a prediction sequence of length $m$, and let $B$ denote an input sequence of length $n$. For any time $t$, let $A_t$ and $B_t$ denote the $t$-th element of $A$ and $B$, respectively. We also define the following for any time step $t$:
\begin{itemize}
    \item $\prev(t)$: The largest $i < t$ such that $B_i = B_t$ (or 0 if no such if no such $i$ exists).
    \item $\next(t)$: The smallest $i > t$ such that $B_i = B_t$ (or $n+1$ if no such $i$ exists).
    \item $\predictnext(t)$: The smallest $i > t$ such that $A_i = B_t$ (or $m+1$ if no such $i$ exists).
    \item We say two requests $A_i = B_j = p$ can be \emph{matched} only if $\predictnext(\prev(j)) = i$. Furthermore, no edges in a matching are allowed to cross. In other words, $A_i$ must be the earliest occurrence of $p$ in $A$ after the time of the last $p$ in $B$ before $B_j$.
\end{itemize}
First, we define a variant of edit distance between the two sequences.
\begin{definition}
The \emph{edit distance} $\ell_{ed}$ between $A$ and $B$ is the total minimum weight of unmatched elements of $A$ and $B$.
\end{definition}
Next, we define an error measure based on the metric $1$-norm distance between corresponding requests on the standard weighted star metric denoting the weighted paging problem.
\begin{definition}
The \emph{1-norm distance} $\ell_{1}$ between $A$ and $B$ is defined as follows:
\begin{align*}
\ell_1=\sum\limits_{\substack{i=1 \\ A_i \neq B_i}}^n \left(w(A_i)+w(B_i)\right). \tag{1-norm}
\end{align*}
\end{definition}
Third, we define an error measure inspired by the PRP model that was also used in \cite{lykouris2018competitive}.
\begin{definition}
The \emph{prediction distance} $\ell_{pd}$ between $A$ and $B$ is defined as follows:
\begin{align*}
\ell_{pd}=\sum_{i=1}^{n} w(B_i)\cdot \left|\next(i)-\predictnext(i)\right|.
\end{align*}
\end{definition}

\subsection{Lower Bounds} \label{subsec:sprp-lower}
In this section, we give an overview of the lower bounds stated in Theorems~\ref{thm:1-lpd-lower},~\ref{thm:1-l1-lower}, and~\ref{thm:1-led-lower}. We focus on the $\ell_{ed}$ (i.e., Theorem~\ref{thm:1-led-lower}) error measurement; the proofs for $\ell_1$ and $\ell_{pd}$ follow similarly.

Our high-level argument proceeds as follows: recall that in Section \ref{sec:PRP}, we showed a lower bound of $\Omega(k)$ on the competitive ratio of deterministic PRP-based algorithms. Given an SPRP algorithm $\ALG$, we design a PRP algorithm $\ALG'$ specifically for the input generated by the procedure described in Section \ref{sec:PRP}. (Recall that this input is a sequence of blocks, where a \emph{block} is a string of $a_0$'s, $a_1$'s, and so on, ending with a single page $a_\ell$ for some $\ell$.)

We show that if $\ALG$ has cost $o(k) \cdot \OPT + o(\ell_{ed})$ (where $\OPT$ is the optimal cost of the SPRP instance), then $\ALG'$ will have cost $o(k)\cdot \OPT'$ (where $\OPT'$ is the optimal cost of the PRP instance), which contradicts our PRP lower bound of $\Omega(k)$ on this input. For the randomized lower bound, we use the same line of reasoning, but replace $\Omega(k)$ with $\Omega(\log k)$.

Let $k'$ denote the cache size of $\ALG'$. Recall that the set of possible page requests received by $\ALG'$ is $A = \{a_0, a_1, \ldots, a_{k'}\}$ where $w(a_i) = c^i$ for some constant $c \geq 2$. The oracle $\ALG$, maintained by $\ALG'$, has cache size $k = k' + 1$. The set of possible requests received by $\ALG$ is $A \cup \{b\}$ where $w(b) = 1/v$ for some sufficiently large value of $v$. (Thus, the instance for $\ALG$ has $k+1$ distinct pages.) Our PRP algorithm $\ALG'$ must define a prediction and an input sequence for $\ALG$.

\paragraph{The prediction sequence for $\ALG$:}
For any strings $X$ and $Y$, let $X + Y$ denote the concatenation of $X$ and $Y$ and let $\lambda \cdot X$ denote the concatenation of $\lambda$ copies of $X$. Let $L = 2ck'H_{k'} + 1$, and consider the series of strings: $S_0 = 2 \cdot a_0$, and $S_i = L \cdot S_{i-1} + a_i$ for $i \in \{1, \ldots, k'\}$. We fix $S \coloneqq M \cdot S_{k'}$, for some sufficiently large $M$, as the prediction sequence for the SPRP algorithm. (Observe that $S$ only contains $k$ distinct pages, and the oracle $\ALG$ has cache size $k$.)

\paragraph{$\ALG'$ and the request sequence for $\ALG$:}
Our PRP algorithm $\ALG'$ will simultaneously construct input for $\ALG$ while serving its own requests. Since randomized and fractional algorithms are equivalent up to constants (see Bansal et al.~\cite{bansal2012primal}), we view the SPRP algorithm $\ALG$ from a fractional perspective. Let $q_i \in [0,1]$ denote the fraction of page $a_i$ not in the cache of $\ALG$. Notice that the vector $q=(q_0,q_1,\dots,q_{k'})$ satisfies $\sum_{i=0}^{k'}q_i\geq 1$. (A deterministic algorithm is the special case where every $q_i\in \{0,1\}$.) Similarly, let $q'=(q'_0,q'_1,\dots,q'_{k'})$, where $q'_i$ denotes the amount of request for $a_i$ that is not in the cache in $\ALG'$. 

When a block ending with $a_i$ is requested, $\ALG'$ scans $S$ for the next appearance of $a_i$. It then feeds the scanned portion to $\ALG$, followed by a single request for page $b$. In this case, the prediction error only occurs due to the requests for this page $b$. After serving this request $b$, the cache of $\ALG$ contains at most $k'$ pages in $A$. This enables $\ALG'$ to mimic the behavior of $\ALG$ upon serving the current block. This process continues for every block: $\ALG'$ modifies the input by inserting an extra request $b$ into the input for $\ALG$, and mimics the resulting cache state of $\ALG$. The details of our algorithm $\ALG'$ are given below:

\begin{enumerate}
    \item Initially, let $S$ be the input for $\ALG$ and $t = 0$. (We will modify $S$ as time passes.)
    \item For all $0\leq i\leq k'$, let $q'_i=1$. (Note that the initial value of every $q_i$ is also 1.)
    \item On PRP request block $s_i = (a_0, a_1, \ldots, a_i)$ (for some unknown $i$): \label{reduction_suffer_cost}
    \begin{enumerate}
        \item Let $q'=(q'_0,q'_1,\dots,q'_{k'})$ denote the current cache state. \label{reduction_c1}
        \item Set $q'=(0,\min\{1,q'_0+q'_1\},q'_2,q'_3,\dots,q'_{k'})$ to serve $a_0$. Note that after we serve $a_0$, the PRP prediction tells us the value of $i$. \label{reduction_c2}
        \item Find the first time $t'$ after $t$ when $S$ requests $a_{i}$ and set $t=t'+2$.
        \item Change the request at time $t$ into $b$. (Note that the original request is $a_0$.)\label{reduction_insert}
        \item Run $\ALG$ until this $b$ is served to obtain a vector $q=(q_0,q_1,\dots,q_{k'})$.\label{reduction_run_black}
        \item If $i\geq 1$, set $q'=(\min\{1,\sum_{j=0}^{i}q'_j\},0,0,\dots,0,q'_{i+1},q'_{i+2},\dots,q'_{k'})$; this serves the requests $(a_1,a_2,\dots,a_i)$.\label{reduction_c3}
        \item Set $q'=(q_0,q_1,\dots,q_{k'})$.\label{reduction_c4}
    \end{enumerate}
\end{enumerate}

\smallskip\noindent
{\bf Bounding the costs.} 
The main idea in the analysis is the following: since the input sequences to $\ALG$ and $\ALG'$ are closely related, and they maintain similar cache states, we can show that they are coupled both in terms of the algorithm's cost and the optimal cost. Therefore, the ratio of $\Omega(k)$ for $\ALG'$ (from Theorem~\ref{thm:prp-det}) translates to a ratio of $\Omega(k)$ for $\ALG$. Furthermore, since the only prediction errors are due to the additional requests for page $b$, and this page has a very small weight, the cost of $\ALG$ is at least the value of $\ell_{ed}$.
(The same line of reasoning is used for randomized algorithms, but $\Omega(k)$ is replaced by $\Omega(\log k)$.)

We now formalize the above line of reasoning with the following lemmas.

\begin{lemma} \label{lem:reduction-alg}
Using any SPRP algorithm $\ALG$ as a black box, the PRP algorithm $\ALG'$ satisfies the following: $\cost(\ALG')\leq 2(c+1)\cdot \cost(\ALG)$.
\end{lemma}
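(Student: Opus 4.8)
The plan is to track, block by block, the cost incurred by $\ALG'$ in serving a PRP request block and charge it against the cost incurred by the oracle $\ALG$ while serving the corresponding scanned portion of $S$ together with the inserted request for $b$. Fix a block $s_i = (a_0, a_1, \ldots, a_i)$. I would first observe that in Step~\ref{reduction_c2} and Step~\ref{reduction_c3}, $\ALG'$ serves the requests $a_0, a_1, \ldots, a_i$ by moving the fractional mass that was outside the cache ``down'' to $a_0$ (and then, for $i \geq 1$, collapsing it entirely, since after the $b$-insertion $\ALG$'s cache has at most $k'$ pages in $A$ so room exists). The movement cost of these steps is at most $\sum_{j=0}^i q'_j \cdot w(a_j) \leq \sum_{j=0}^i q'_j \cdot c^j$. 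Using the geometric weights, $\sum_{j=0}^i q'_j c^j \leq c^i \sum_{j=0}^i q'_j \cdot c^{j-i}$, and since the PRP request for block $s_i$ forces $\ALG'$ to actually fetch $a_i$ (which costs $\Omega(c^i)$ by the contiguity structure of blocks established in Lemma~\ref{lem:det-contiguous}), this ``serving'' cost is within a constant factor of what $\ALG'$ must pay on this block anyway — but more importantly I should compare it to $\ALG$'s cost, not just to $\ALG'$'s forced cost.

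The cleaner accounting, which I would pursue, is this: the cost $\ALG'$ pays on block $s_i$ comes in two pieces — (a) the ``serve'' steps \ref{reduction_c2}, \ref{reduction_c3}, and (b) the ``copy'' step \ref{reduction_c4}, where $\ALG'$ overwrites $q'$ with $\ALG$'s state $q$. For piece (b), the $\ell_1$-type cost of changing $q'$ into $q$ is at most the total weighted movement, which I can bound by the movement $\ALG$ itself performed while being run in Step~\ref{reduction_run_black} (plus the discrepancy already present), i.e. it telescopes against $\cost(\ALG)$ over the whole run. For piece (a), I claim the cost is at most a constant times $\ALG$'s cost on the same scanned portion: when $\ALG$ serves the scanned suffix of $S$ ending at $a_i$ followed by $b$, it must in particular bring $a_i$ fully into its cache, paying $\Omega(c^i)$, whereas piece~(a) costs $O(c^i)$. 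So block by block, $\cost_{s_i}(\ALG') \leq O(1)\cdot(\text{cost of }\ALG\text{ on the scanned part})$, and summing over all blocks gives $\cost(\ALG') \leq O(1)\cdot\cost(\ALG)$. Getting the precise constant $2(c+1)$ requires being careful that the ``serve'' moves never cost more than $2\sum_{j\le i} q'_j c^j$ and that the fetch of $a_i$ in $\ALG$ is worth at least $c^i (1 - q'_i)$ where $q'_i$ starts this block; chaining with $q'_i \le$ (something small, since $\ALG$ just served a long run $S_i$ containing many $a_i$'s, or handling the boundary crudely) yields the factor.

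Concretely the steps in order: (1) argue $S$ is structured so that between consecutive scans, $\ALG$ sees the full string $S_i$ (or a superset), so $\ALG$ must evict and refetch $a_i$ at least once, costing $\ge c^i$ — this uses the nested structure $S_0 \subseteq S_1 \subseteq \cdots$ analogous to the batch-nesting in Theorem~\ref{thm:static}; (2) bound piece (a) of $\ALG'$'s per-block cost by $2c^i$ using the geometric-weight collapse and $c \ge 2$; (3) bound piece (b) by the total movement $\ALG$ makes, summed globally via a telescoping/potential argument on $\|q - q'\|$ (weighted), noting $q'$ is reset to $q$ at the end of each block so the only drift is what $\ALG'$ does in steps \ref{reduction_c2}--\ref{reduction_c3}, which is already counted; (4) combine (1)--(3) to get the stated inequality with constant $2(c+1)$.

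The main obstacle I anticipate is step (3): making the per-block ``copy'' cost $\|q^{\text{new}} - q'^{\text{old}}\|_w$ charge correctly against $\ALG$'s own movement without double-counting against piece~(a). The subtlety is that $\ALG'$'s state $q'$ right before step~\ref{reduction_c4} is not $\ALG$'s state from the previous block but a moved version of it, so the ``copy'' distance is $\|q - (\text{moved } q^{\text{prev}})\|_w \le \|q - q^{\text{prev}}\|_w + \|q^{\text{prev}} - \text{moved } q^{\text{prev}}\|_w$; the first term telescopes into $\cost(\ALG)$ and the second is exactly piece~(a), already bounded. I would need to set up the triangle-inequality bookkeeping on the weighted-star metric carefully so these two contributions are added, not conflated, and confirm the total still comes out to $2(c+1)\cdot\cost(\ALG)$ rather than something larger.
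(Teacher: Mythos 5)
Your overall skeleton (bound $\ALG'$'s per-block cost by its serve moves plus the copy-to-$q$ move, then lower-bound $\ALG$'s cost on the corresponding scanned portion of $S$) is the right one, and your handling of piece (b) via the triangle inequality is essentially the paper's computation in disguise. But there is a genuine gap in how you lower-bound $\cost(\ALG)$, which is the crux of the lemma. You charge the entire serve cost $\sum_{j\le i} q'_j c^j$ (which you correctly cap by $O(c^i)$) against the claim that ``$\ALG$ must bring $a_i$ fully into its cache, paying $\Omega(c^i)$,'' and in step (1) you assert that $\ALG$ must evict and refetch $a_i$ between consecutive scans. Neither claim holds: nothing forces $\ALG$ to ever evict $a_i$, and the cost of serving the $a_i$ request is $c^i$ times the \emph{missing fraction} of $a_i$ at that moment, which can be $0$ (your own formula $c^i(1-q'_i)$ has this backwards --- it should be $c^i q'_i$ --- and your proposed fix, assuming $q'_i$ is small, makes the charge smaller, not larger). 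Concretely, if $q'_i=0$ but $q'_{i-1}=1$, your piece (a) costs $\Theta(c^{i-1})$ while the term you charge it to is $0$.

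The repair is to charge \emph{per weight class} rather than against the single $a_i$ fetch: for every $j\le i$, the scanned portion of $S$ contains a request for $a_j$, so $\ALG$ must move $a_j$'s missing mass from $q'_j$ down to $0$ and then back up to $q_j$ by the end of the block, paying at least $c^j(q'_j+q_j)$; for $j>i$ it pays at least $c^j\,|q_j-q'_j|$. Summing gives $\cost_{\ALG}\ge \sum_{j\le i} c^j(q_j+q'_j)+\sum_{j>i}c^j|q_j-q'_j|$, which is exactly the quantity that upper-bounds $\ALG'$'s per-block cost up to the factor $2(c+1)$ once you account for Steps~\ref{reduction_c2}, \ref{reduction_c3}, and \ref{reduction_c4} term by term (each unit move of $a_j$ in $\ALG'$ costs at most $c^j+c$ or $c^j+1$, whence the $(c+1)$). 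With this per-page charging there is no need for your global telescoping potential for piece (b) either --- the copy cost decomposes into the same $q_j$ and $|q_j-q'_j|$ terms directly.
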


\begin{proof}
Note that $q=q'$ at the beginning and end of Step \ref{reduction_suffer_cost}. For convenience, let $q'$ denote the vector at the beginning of Step \ref{reduction_suffer_cost}, and let $q$ denote the vector at the end of Step \ref{reduction_suffer_cost}. Let $\cost_{\ALG}$ and $\cost_{\ALG'}$ denote the cost of $\ALG$ and $\ALG'$ respectively incurred in a fixed Step \ref{reduction_suffer_cost}.

Each time $\ALG'$ enters Step \ref{reduction_suffer_cost}, the cost incurred is at most:
\begin{align*}
&\text{Step \ref{reduction_c2}: } q'_0\cdot (1+c),   \\
&\text{Step \ref{reduction_c3}: } (q'_0+q'_1)\cdot (1+c) + \sum_{j=2}^{i} q'_j\cdot (1+c^j),   \\
&\text{Step \ref{reduction_c4}: } \left(\sum_{j=1}^{i} q_j\cdot (1+c^j)\right)
   + \left(\sum_{j=i+1}^{k} \left|q'_j-q_j\right|\cdot (1+c^j)\right).   \\
\end{align*}
Summing the above yields the following:
\[
\cost_{\ALG'} \leq 2(c+1)\cdot \left( \left( \sum_{j=0}^{i} c^j\cdot \left( q_j+q'_j \right) \right) + \left( \sum_{j=i+1}^{k} c^j\cdot \left| q_j-q'_j \right|  \right)   \right).
\]

Now we consider $\ALG$. For each $j$, at the beginning of Step \ref{reduction_suffer_cost}, there is $q'_j$ amount of $a_j$ not in the cache, and at the end of Step \ref{reduction_suffer_cost}, there is $q_j$ amount of $a_j$ not in the cache.

If $j>i$, the cost incurred due to $a_j$ is at least $c^j\cdot\abs{q_j-q'_j}$. If $j\leq i$, $\ALG'$ must serve $a_j$ at some point in Step \ref{reduction_run_black}, so the incurred cost due to $a_j$ is at least $c^j\cdot (q_j+q'_j)$. Summing the above yields the following:
\[
\cost_{\ALG} \geq \left( \sum_{j=0}^{i} c^j\cdot \left( q_j+q'_j \right) \right) + \left(\sum_{j=i+1}^{k} c^j\cdot \left| q_j-q'_j \right|  \right).
\]
Combining the two inequalities above proves the lemma.
\end{proof}

Now let $\OPT$ denote the optimal SPRP algorithm for the input sequence served by $\ALG$, and let $\OPT'$ denote the optimal PRP algorithm for the input sequence served by $\ALG'$. We can similarly prove the following lemma that bounds the costs of $\OPT$ and $\OPT'$ against each other.

\begin{lemma} \label{lem:reduction-opt}
The algorithms $\OPT$ and $\OPT'$ satisfy $\cost(\OPT) \leq 2\cdot \cost(\OPT')$.
\end{lemma}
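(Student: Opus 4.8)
The plan is to mirror the proof of Lemma~\ref{lem:reduction-alg}, but now from the perspective of the optimal solutions rather than the algorithms. Concretely, I would take $\OPT'$, the optimal PRP algorithm on the sequence that $\ALG'$ serves, and use it to construct a feasible SPRP strategy on the sequence that $\ALG$ is fed, whose cost is at most a constant times $\cost(\OPT')$. Since $\OPT$ is optimal, this would immediately give $\cost(\OPT)\le O(1)\cdot\cost(\OPT')$, and I would arrange the constants so that the bound is exactly $2\cdot\cost(\OPT')$.

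The construction is as follows. Recall that $\ALG'$'s input is a sequence of PRP blocks $s_i=(a_0,\dots,a_i)$, and for each such block $\ALG'$ locates the corresponding occurrence of $a_i$ in $S$, feeds $\ALG$ the scanned prefix of $S$ plus a single inserted request for $b$. So the SPRP input is naturally partitioned into ``segments,'' one per PRP block, where the $t$-th segment consists of (a scanned piece of $S$ containing exactly the pages $a_0,\dots,a_i$, possibly with repeats) followed by one request for $b$. The SPRP strategy I build from $\OPT'$ will, at the boundaries between segments, keep its cache state in correspondence with $\OPT'$'s cache state $q^*$ just after the matching PRP block: for $j\ge 1$ it holds $q^*_j$ fraction of $a_j$ out of cache, holds $a_0$ fully in cache at a segment boundary, and holds $b$ out of cache (this is the ``extra slot'' $\ALG$ has relative to $\ALG'$, which is exactly what lets $b$ sit outside the cache for free at boundaries). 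Within a segment, to serve the requests for $a_0,\dots,a_i$ and then the one $b$, the SPRP strategy changes its fractional state from its boundary value to the next boundary value, plus a bounded amount of extra movement to actually fetch each $a_j$ with $j\le i$ when it is requested inside the segment and to fetch $b$ when $b$ is requested at the end. As in Lemma~\ref{lem:reduction-alg}, the cost of all this is dominated by $\sum_{j\le i}c^j(q^*_j+q^{*\prime}_j)+\sum_{j>i}c^j|q^*_j-q^{*\prime}_j|$ (where $q^{*\prime},q^*$ are $\OPT'$'s states before and after the block), up to a factor like $2(c+1)$, and the one extra request for $b$ costs only $w(b)=1/v$, which is negligible after scaling $v$ up. Meanwhile, $\OPT'$ itself pays at least $\sum_{j\le i}c^j(q^*_j+q^{*\prime}_j)+\sum_{j>i}c^j|q^*_j-q^{*\prime}_j|$ in that block, by the same reasoning used for the lower bound on $\cost_{\ALG}$ in Lemma~\ref{lem:reduction-alg} (it must serve each $a_j$ with $j\le i$, and any change in the $j>i$ coordinates costs $c^j$ per unit). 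Summing over all blocks and folding the negligible $b$-costs into the constant yields $\cost(\OPT)\le 2\cdot\cost(\OPT')$.

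The main obstacle I anticipate is handling $\OPT'$'s behavior \emph{inside} a block rather than only at block boundaries: $\OPT'$ is free to move mass on the $a_j$ coordinates in the middle of serving $s_i$, whereas the comparison I want is between consecutive boundary states. The clean fix is to only record $\OPT'$'s boundary states and note that, because the intra-block requests in $s_i$ force $\OPT'$ to bring each $a_j$ ($j\le i$) fully into cache at least momentarily, $\OPT'$'s in-block cost already pays for the round trip $q^{*\prime}_j\to 0\to q^*_j$ on those coordinates and pays $c^j|q^*_j-q^{*\prime}_j|$ on the higher coordinates, which is exactly the quantity I am charging the constructed SPRP strategy against. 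A secondary point to get right is that $S$ must be long enough (the parameter $M$) that $\ALG'$ never runs off the end of $S$ while scanning, and that the inserted $b$'s are the \emph{only} source of mismatch between the $\ALG$-input and a shift of $S$; both are guaranteed by the choices of $L$ and $M$ already fixed in the setup, so I would just remark on this rather than belabor it. Finally, I would note that the identical argument, viewing $\OPT$ and $\OPT'$ fractionally, applies verbatim in the randomized setting, since the randomized/fractional equivalence of Bansal et al.~\cite{bansal2012primal} is already invoked.
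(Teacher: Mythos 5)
Your proposal is correct and takes essentially the same route as the paper: both construct a feasible SPRP solution from $\OPT'$'s block-boundary states, bound its per-block cost by twice the movement quantity $\sum_{j\le i}c^j(q_j+q'_j)+\sum_{j>i}c^j\left|q_j-q'_j\right|$, and lower-bound $\OPT'$'s per-block cost by that same quantity. The one detail to tighten is the constant: the factor is exactly $2$ (not $2(c+1)$) because the constructed solution's swaps are against the weight-$1/v$ page $b$ rather than against $a_0$, which is the point you gesture at when calling the $b$-costs negligible.
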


\begin{proof}
Using $\OPT'$ as an oracle, we can design a potential algorithm for $\OPT$:
\begin{enumerate}
    \item Let $S$ be the initial input sequence for $\ALG$ and let $t = 0$.
    \item For all $0\leq i\leq k'$, let $q_i=1$. Note that $q'_i=1$ at the beginning.
    \item For each PRP block $s_i = (a_0, a_1, \ldots, a_i)$:
    \begin{enumerate}
        \item Find the first time $t'$ after $t$ when $S$ requests $a_{i}$. Let $t=t'+2$; note that $S_{t} = b$.
        \item Run $\OPT'$ to serve request $(a_0,a_1,\dots,a_i)$ and obtain $q'=(q'_0,q'_1,\dots,q'_{k'})$. \label{reduction_suffer_cost_opt}
        \begin{enumerate}
            \item Let $q=(q_0,q_1,\dots,q_{k'})$ denote the current cache state (i.e., immediately before we serve $a_0$). \label{reduction_opt_c1}
            \item Set $q=(0,0,\dots,0,q'_{i+1},q'_{i+2},\dots,q'_{k'})$ to serve all requests until the requested $b$.\label{reduction_opt_c2}
            \item Set $q=(q'_0,q'_1,\dots,q'_{k'})$ to serve the $b$.\label{reduction_opt_c3}
        \end{enumerate}
    \end{enumerate}
\end{enumerate}

Note that we have $q=q'$ at the beginning and the end of Step \ref{reduction_suffer_cost} in $\ALG$. For convenience, let $q'$ denote the vector at the beginning of Step \ref{reduction_suffer_cost}, and let $q$ to denote the vector at the end of Step \ref{reduction_suffer_cost}. Furthermore, let $\cost_{\OPT}$ and $\cost_{\OPT'}$ denote the cost that $\OPT$ and $\OPT'$ respectively incur in a fixed Step \ref{reduction_suffer_cost_opt}.

Each time $\OPT$ enters Step \ref{reduction_suffer_cost}, the incurred cost is at most:
\begin{align*}
&\text{Step \ref{reduction_opt_c2}: } \sum_{j=0}^{i} q'_j\cdot \left(\frac{1}{v}+c^j\right),   \\
&\text{Step \ref{reduction_opt_c3}: } \sum_{j=0}^{i} q_j\cdot \left(\frac{1}{v}+c^j\right) + \sum_{j=i+1}^{k} (\frac{1}{v}+c^j)\cdot \left| q_j-q'_j \right|.
\end{align*}

Summing the above yields the following:
\[\cost_{\OPT}\leq 2\left(\left( \sum_{j=0}^{i} c^j\cdot \left( q_j+q'_j \right) \right) + \left( \sum_{j=i+1}^{k} c^j\cdot \left| q_j-q'_j \right|  \right)\right).  \]

Now we consider $\OPT'$. At the beginning of Step \ref{reduction_suffer_cost_opt}, there is $q'_j$ amount of $a_j$ is not in the cache, and at the end of Step \ref{reduction_suffer_cost_opt}, there is $q_j$ amount of $a_j$ is not in the cache.

If $j>i$, the cost incurred due to $a_j$ is at least $c^j\cdot \abs{q_j-q'_j}$. If $j\leq i$, $\OPT'$ must serve $a_j$ while it serving $(a_0,a_1,\dots,a_i)$, so the cost due to $a_j$ is at least $c^j\cdot (q_j+q'_j)$. Summing the above yields the following:
\[
\cost_{\OPT'} \geq  \left( \sum_{j=0}^{i} c^j\cdot \left( q_j+q'_j \right) \right) + \left(\sum_{j=i+1}^{k} c^j\cdot \left| q_j-q'_j \right|\right).
\]
Combining the above inequalities proves the lemma.
\end{proof}

We are now ready to bound the cost of any algorithm with SPRP.

\begin{theorem}\label{thm:lower_bound_ed}
For weighted paging with SPRP, there is no deterministic algorithm whose cost is $o(k) \cdot \OPT + o(\ell_{ed})$, and there is no randomized algorithm whose cost is $o(\log k)\cdot \OPT + o(\ell_{ed})$.
\end{theorem}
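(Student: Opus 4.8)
The plan is to push the reduction that has just been set up through to a contradiction with the PRP lower bounds of Section~\ref{sec:PRP}. Suppose, toward a contradiction, that there is a deterministic SPRP algorithm $\ALG$ with $\cost(\ALG)\le \gamma\cdot\OPT+\delta\cdot\ell_{ed}$ for some $\gamma=\gamma(k)=o(k)$ and some constant $\delta$ (the randomized case is handled the same way, with $o(\log k)$ in place of $o(k)$). Use this $\ALG$ as the oracle inside the PRP algorithm $\ALG'$ defined above, and run $\ALG'$ on the deterministic adversarial PRP input of Section~\ref{sec:PRP} (on the random input of Section~\ref{sec:random_lower} in the randomized case), with cache size $k'=k-1$. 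This single run does two things at once: it serves a PRP instance whose offline optimum is $\OPT'$, and it feeds $\ALG$ an SPRP instance with prediction sequence $S$ and actual input $B$, obtained from $S$ by inserting one request for the weight-$(1/v)$ page $b$ per PRP block, with offline optimum $\OPT$ and edit-distance error $\ell_{ed}$.

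Next I chain the three bounds already in hand. By Theorem~\ref{thm:prp-det} applied to $\ALG'$ with cache $k'$, $\OPT'\le\frac{32}{k'}\cost(\ALG')$; by Lemma~\ref{lem:reduction-alg}, $\cost(\ALG')\le 2(c+1)\cost(\ALG)$; and by Lemma~\ref{lem:reduction-opt}, $\OPT\le 2\OPT'$. Together with the assumed guarantee on $\ALG$, these give
\[
\frac{k'}{32}\,\OPT' \;\le\; \cost(\ALG') \;\le\; 2(c+1)\bigl(\gamma\cdot\OPT+\delta\cdot\ell_{ed}\bigr) \;\le\; 4(c+1)\gamma\cdot\OPT' + 2(c+1)\delta\cdot\ell_{ed}.
\]
Since $\gamma=o(k')$, for all sufficiently large $k$ we have $4(c+1)\gamma\le \frac{k'}{64}$, and rearranging yields $\ell_{ed}\ge \frac{k'}{128(c+1)\delta}\,\OPT'$. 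In the randomized case the identical computation in expectation, using Theorem~\ref{thm:prp-rand} (which gives $\Ex{\cost(\ALG')}=\Omega(\log k')\cdot\Ex{\OPT'}$ on the random input, via Lemma~\ref{lem:lower-prp-alg}) together with $\gamma=o(\log k')$, yields $\Ex{\ell_{ed}}\ge \Omega(\log k' / \delta)\cdot\Ex{\OPT'}$.

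It remains to observe that $\ell_{ed}$ is far too small for these inequalities to hold. Since $B$ is obtained from the prediction $S$ only by inserting one request for $b$ per block, the two sequences agree except at the inserted copies of $b$, which have no counterpart in $S$ (as $S$ uses only $a_0,\dots,a_{k'}$); hence every $a_j$-request is matched and $\ell_{ed}\le(\#\text{blocks})\cdot w(b)\le |S|/v$. On the other hand $\OPT'\ge w(a_0)=1$, because $\ALG'$ starts with an empty cache and its instance has $k'+1$ distinct pages in a cache of size $k'$, forcing at least one unit-weight miss. Crucially, $|S|=M\,|S_{k'}|$ — and therefore the number of blocks, which is at most $|S|$ — is fixed independently of $v$, so we may finally choose $v>128(c+1)\delta\,|S|/k'$. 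This gives $\ell_{ed}\le |S|/v<\frac{k'}{128(c+1)\delta}\le\frac{k'}{128(c+1)\delta}\OPT'$, contradicting the bound of the previous paragraph; the same choice of $v$ (now with $\Ex{\ell_{ed}}\le|S|/v$) closes the randomized case.

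I expect the last paragraph to be the main obstacle: one must check carefully that the edit-distance matching really absorbs all the heavy requests, leaving only the weight-$(1/v)$ page $b$ unmatched, and that the number of inserted $b$'s — hence $\ell_{ed}$ — is bounded by a quantity independent of $v$, which is precisely what makes the apparently circular choice of $v$ legitimate ($v$ sets $w(b)$, which affects $\ALG$'s behavior and hence the generated instance, yet $|S|$ and the block count do not move). The $\ell_1$ and $\ell_{pd}$ variants of the theorem then follow from exactly the same reduction, only re-evaluating the error contributed by the inserted $b$'s under those two measures.
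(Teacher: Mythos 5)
Your reduction chain (run $\ALG'$ with oracle $\ALG$ on the adversarial PRP input, then combine Theorem~\ref{thm:prp-det}, Lemma~\ref{lem:reduction-alg}, and Lemma~\ref{lem:reduction-opt}) is exactly the paper's route, and the inequality
$\frac{k'}{32}\OPT' \le 4(c+1)\gamma\,\OPT' + 2(c+1)\delta\,\ell_{ed}$
is correct. The gap is in your last paragraph, and it is fatal as written: the prediction error is \emph{not} $O(|S|/v)$. The construction does not insert $b$'s into $B$; Step~\ref{reduction_insert} \emph{substitutes} a request that reads $a_0$ in the prediction $S$ by $b$ in the actual input, once per block. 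Consequently the consumed prefix of $A$ contains one more occurrence of $a_0$ per block than $B$ does, and since a matching pairs equal pages one-to-one, at least one $a_0$ of $A$ (weight $w(a_0)=1$) per block must go unmatched, in addition to the unmatched $b$ (weight $1/v$). Hence $\ell_{ed}\ge (\#\text{blocks})\cdot(1+1/v)=\Theta(\#\text{blocks})$, no matter how large you take $v$; your inequality $\ell_{ed}\le |S|/v$ and the ensuing contradiction collapse. (You correctly flagged this as the step to check; the check fails.)

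The repair is to stop trying to make $\ell_{ed}$ tiny and instead show $\cost(\ALG)=\Omega(\ell_{ed})$ on this instance, which is what the paper does. Each block contributes at most $w(a_0)+w(b)\le 2$ to $\ell_{ed}$ (and to $\ell_1$; indeed $\ell_{ed}=\ell_1$ here), while each block forces $\ALG'$ to incur at least constant cost (in the deterministic case via the regular-block count, in expectation via Lemma~\ref{lem:lower-prp-alg}); by Lemma~\ref{lem:reduction-alg} this cost transfers to $\ALG$ up to the factor $2(c+1)$. Thus $\ell_{ed}=O(\cost(\ALG))$ and simultaneously $\OPT=O(\cost(\ALG)/k)$, so a guarantee of $o(k)\cdot\OPT+o(\ell_{ed})$ would force $\cost(\ALG)=o(\cost(\ALG))$ as the number of blocks grows --- the desired contradiction, with $\Omega(\log k)$ in place of $\Omega(k)$ for the randomized case. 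Note also that the theorem only requires ruling out an $o(\ell_{ed})$ additive term, so you do not need (and, by the above, cannot get on this instance) a contradiction for an arbitrary fixed constant $\delta$.
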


\begin{proof}
From Theorem~\ref{thm:prp-det}, we know $\ALG'= \Omega(k) \cdot \OPT'$, so we can apply Lemmas~\ref{lem:reduction-alg} and~\ref{lem:reduction-opt} to conclude $\ALG= \Omega(k)\cdot \OPT$. Furthermore (as we saw in Section \ref{sec:PRP}), each PRP block increases $\ALG$ by at least a constant. At the same time, for each block, $\ell_1$ increases by at most $2$, because only one request is changed from $a_0$ to $b$. As a result, we can conclude $\ALG = \Omega(\ell_1)$. Similarly, for $\ell_{pd}$, notice that the only mispredictions are due to $a_0$ and $b$. This allows us to conclude $\ell_{pd} = \Theta(\ell_1)$. Finally, we can also see that in this instance, we have $\ell_{ed} = \ell_1$, so the bound continues to hold. For randomized algorithms, the same line of reasoning holds with $\Omega(\log k)$ instead of $\Omega(k)$.
\end{proof}

\subsection{Upper Bounds} \label{subsec:sprp-upper}
In this section, we give algorithms whose performance degrades with the value of the SPRP error. In particular, we first prove the upper bound in Theorem~\ref{thm:1-led-upper} for the $\ell_{ed}$ measurement, and then analyze the \follow algorithm, which proves the upper bound in Theorem~\ref{thm:1-l1-upper}.

Now we present an algorithm that uses a cache of size $k+1$ whose cost scales linearly with $\OPT + \ell_{ed}$. Following our previous terminology, let $A$ denote a prediction sequence of length $m$, and let $B$ denote an input sequence of length $n$.

Our algorithm, which we call \textsc{Learn}, relies on an algorithm that we call \textsc{Idle}. At a high level, \textsc{Idle} resembles \textsc{Static} (see Section~\ref{sec:static}): it partitions the prediction sequence $A$ into batches and runs an optimal offline algorithm on each batch. The \learn algorithm tracks the cost of imitating \textsc{Idle}: if the cost is sufficiently low, then it will imitate \textsc{Idle} on $k$ of its cache slots; otherwise, it will simply evict the page in the extra cache slot.

Before formally defining \textsc{Idle}, we consider a modified version of caching. Our cache has $k + 1$ slots, where one slot is \emph{memoryless}: it always immediately evicts the page it just fetched. In other words, this slot can serve any request, but it cannot store any pages. Let $\OPT^{+1}$ denote the optimal algorithm that uses a memoryless cache slot.

\begin{lemma}\label{kplus:corelemma1}
For any sequences $A$ and $B$, $\cost(\OPT^{+1}(A))\leq \cost(\OPT(B)) + 2\ell_{ed}$,
where $\ell_{ed}$ is the edit distance between $A$ and $B$.
\end{lemma}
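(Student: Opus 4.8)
The plan is to exhibit a concrete algorithm $\mathcal{A}$ that uses a memoryless $(k{+}1)$-st slot, runs on the prediction sequence $A$, and has cost at most $\cost(\OPT(B)) + 2\ell_{ed}$; since $\OPT^{+1}(A)$ is optimal among such algorithms, this suffices. The algorithm $\mathcal{A}$ will \emph{simulate} $\OPT(B)$ on its $k$ ordinary slots, using the memoryless slot to absorb precisely the discrepancies between $A$ and $B$. Fix an optimal matching $\mathcal M$ between $A$ and $B$ realizing $\ell_{ed}$: each matched pair $(A_i,B_j)$ has $A_i=B_j$, the matching edges are non-crossing, and $\ell_{ed}$ is the total weight of the \emph{unmatched} elements of $A$ and $B$.

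First I would set up the simulation invariant: after $\mathcal{A}$ has processed the prefix of $A$ up to and including position $i$, its $k$ ordinary slots hold exactly the cache contents of $\OPT(B)$ after $\OPT$ has processed the prefix of $B$ up to the partner of $A_i$ under $\mathcal M$ (or up to the last matched position before $i$, if $A_i$ is unmatched). Because the matching is non-crossing, as $i$ advances through $A$ the partner pointer advances monotonically through $B$, so this is well-defined. Now consider how $\mathcal{A}$ serves request $A_i$:
\begin{itemize}
\item If $A_i$ is matched to some $B_j$: since we are about to let $\OPT$ process the segment of $B$ strictly between the previous matched position and $B_j$, and all those intervening $B$-requests are unmatched, $\mathcal{A}$ first pays to have its $k$ slots imitate $\OPT$'s evictions/fetches over that segment — but every page fetched there that is \emph{not} already part of $\OPT$'s cache at position $B_j$ can be routed through the memoryless slot, and every such fetch or eviction is chargeable to an unmatched element of $B$. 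Then $A_i=B_j$ is served exactly as $\OPT$ serves $B_j$ (it is either already cached, or $\OPT$ fetches it, which $\mathcal{A}$ mirrors). This incurs no cost beyond mirroring $\OPT$ plus a charge of $w(B_j)$-scale terms to unmatched $B$-elements.
\item If $A_i$ is unmatched: serve it using the memoryless slot (fetch $A_i$, evict it immediately), paying $O(w(A_i))$ and charging it to the unmatched element $A_i$ of $A$.
\end{itemize}
Summing, $\mathcal{A}$'s cost is at most $\cost(\OPT(B))$ (for the faithful mirroring of $\OPT$'s own moves) plus a total of at most $2\ell_{ed}$ for the charges to unmatched elements — the factor $2$ absorbing that each unmatched element may be touched both as a fetch and an eviction, and that serving through the memoryless slot costs a fetch plus an (immediate) eviction. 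One must check the "free" claim carefully: since the memoryless slot discards its page, after serving an unmatched $A_i$ (or an unmatched segment of $B$) $\mathcal{A}$'s $k$ ordinary slots are untouched, so the simulation invariant is preserved and the next matched request lands on exactly $\OPT$'s cache state, as required.

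The main obstacle I anticipate is the bookkeeping for the segments of $B$ that lie strictly between consecutive matched positions: $\OPT(B)$ genuinely changes its cache there, and $\mathcal{A}$ must both (a) reproduce the \emph{net} change so the invariant holds at the next matched position, and (b) do so while only ever having $k$ pages committed plus one memoryless slot, and (c) pay no more than $2\times$(weight of the unmatched $B$-elements in that segment). The cleanest way to discharge this is to observe that every page that enters or leaves $\OPT$'s cache during such a segment must be requested during that segment — hence corresponds to an unmatched $B$-element — and that $\mathcal{A}$ can realize any such net cache change by a sequence of single-page swaps routed through the memoryless slot, each swap chargeable to a distinct unmatched $B$-request; being careful that the total charge is weighted correctly gives the $2\ell_{ed}$ bound rather than something larger. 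Everything else (the matching's non-crossing property giving monotone pointers, the memoryless slot leaving ordinary slots untouched) is routine once this segment analysis is pinned down.
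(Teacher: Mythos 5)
Your proposal is correct and is essentially the paper's own argument (which the paper states in just three lines): fix the optimal matching realizing $\ell_{ed}$, have the $k$ ordinary slots imitate $\OPT(B)$ for matched requests, serve unmatched requests of $A$ via the memoryless slot, and charge the extra cost to the unmatched elements. The segment bookkeeping you worry about is handled exactly as you suggest --- mirroring $\OPT(B)$'s net cache changes is already paid for by $\cost(\OPT(B))$ itself (the memoryless slot plays no role there, since unmatched $B$-requests never need to be served by the algorithm running on $A$).
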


\begin{proof}
Let $M$ denote the optimal matching between $A$ and $B$ (for $\ell_{ed}$). One algorithm for $\OPT^{+1}(A)$ is the following: imitate what $\OPT(B)$ does for requests matched by $M$, and use the memoryless slot for unmatched requests. The cost of this algorithm is $\OPT(B)+2\ell_{ed}$.
\end{proof}

Recall that the \static algorithm requires the use of an optimal offline algorithm. Similarly, for our new problem with a memoryless cache slot, we require a constant-approximation offline algorithm on $A$. This can be obtained from the following lemma:

\begin{lemma} \label{lem:const-plus1-pred}
Given a prediction sequence $A$, there is a randomized offline algorithm whose cost is at most a constant times the cost of $\OPT^{+1}(A)$.
\end{lemma}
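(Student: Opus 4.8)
The plan is to recast the offline memoryless‑slot problem as \emph{weighted caching with bypassing} on an ordinary cache of size $k$: to serve a request for $B_t$, one either has $B_t$ in one of the $k$ ordinary slots (free), brings $B_t$ into an ordinary slot while evicting some page $q$ (cost $w(q)$), or ``bypasses'' $B_t$ by using the memoryless slot (cost $w(B_t)$). With this reformulation, $\cost(\OPT^{+1}(A))$ is exactly the optimum of this bypassing problem, and in particular it is at least the optimum $\mathrm{LP}^\ast(A)$ of the natural polynomial‑size fractional relaxation: a variable $x_p(t)\in[0,1]$ for the fraction of page $p$ that is outside the ordinary cache at time $t$, the constraint $\sum_p x_p(t)\ge (\#\{\text{distinct pages requested by time }t\})-k$ at every $t$, movement cost $w(p)\cdot(x_p(t)-x_p(t-1))^{+}$, and bypassing encoded by permitting $x_{B_t}(t)>0$ at a request while charging $w(B_t)\cdot x_{B_t}(t)$. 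Since this is a polynomial‑size linear program, an optimal fractional solution can be computed in polynomial time, so it suffices to round it.

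For the rounding I would use the standard online rounding of fractional weighted caching (in the style of Bansal, Buchbinder and Naor): scan the fractional solution in time order and maintain a coupled distribution over integral $k$‑caches so that, for every unit of fractional mass the LP moves in or out of the cache for a page $p$, the rounded solution performs only $O(1)$ expected integral fetches/evictions of $p$ while keeping \emph{exactly} $k$ ordinary pages at all times. A request at time $t$ is then served by coupling: if the fractional solution keeps $B_t$ wholly in the cache, the rounded solution must too, and the expected cost of this is already accounted for in the LP's movement term for $B_t$; if the fractional solution leaves $x_{B_t}(t)>0$ (a partial bypass), the rounded solution bypasses $B_t$ with matching probability, paying $w(B_t)$ and contributing $O(1)$ times the LP's bypass charge in expectation. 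Summing the movement and bypass contributions over all time steps yields a randomized algorithm of expected cost $O(\mathrm{LP}^\ast(A)) \le O(\cost(\OPT^{+1}(A)))$, which is Lemma~\ref{lem:const-plus1-pred}.

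I expect the crux to be the rounding itself: it must simultaneously respect the \emph{hard} size‑$k$ constraint at every individual step (not merely in expectation), pay only $O(1)$ expected cost per unit of fractional change, and interleave the new bypass case with ordinary fetches and evictions — this is the same delicate bookkeeping present in known weighted‑caching rounding arguments, with one extra case to handle. As a sanity check and possible shortcut, one can also try to show that the constraint matrix of the bypassing LP is interval‑structured (hence totally unimodular), which would make $\mathrm{LP}^\ast(A)=\cost(\OPT^{+1}(A))$ and give the lemma deterministically; but the rounding route is robust to that and directly matches the ``randomized constant‑approximation'' statement we need.
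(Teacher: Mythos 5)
Your proposal is correct and follows essentially the same route as the paper: both write a polynomial-size LP relaxation of caching-with-a-bypass option (the paper uses the standard interval variables $x(i,j)$ and simply includes the requested page in the covering constraint, which folds the bypass charge into the usual objective), solve it fractionally, round the $k$ ordinary slots with the Bansal et al.\ constant-factor rounding, and bound the expected memoryless/bypass cost by the fractional out-of-cache mass of the requested page at each request. The only differences are cosmetic choices of LP formulation (time-indexed variables with an explicit bypass term versus interval variables), so no further comparison is needed.
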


\begin{proof}
Let $x(i,j)$ be an indicator variable that is 1 if page $i$ is evicted between the $j$-th time and the $(j+1)$-th time it is requested, and 0 otherwise. For any time $t\leq T$, let $B(t)=\{i|r(i,t)\geq 1\}$, where $r(i,t)$ denotes the number of times page $i$ is requested until time $t$. The problem has the following linear programming formulation:
\begin{gather*}
    \min \sum_{i=1}^{n}\sum_{j=1}^{r(i,T)}w(i)x(i,j)    \\
    \text{For any time $t$: }\sum_{i\in B(t)} x(i,r(i,t))\geq |B(t)|-k  \\
    \text{For any $i,j$: } 0\leq x(i,j)\leq 1
\end{gather*}
Recall that the size of the cache is $k+1$, including a memoryless cache slot. The constraint specifies that at any time $t$, at least $|B(t)|-k$ pages are not the normal cache, which means at most $k$ pages are in the normal cache. If the requested page is not in the normal cache but contributes to the sum in the constraint, then this corresponds to fetching it into the memoryless cache slot.

This formulation gives us a fractional solution, and for the standard caching problem (with $k$ slots), Bansal et al.~\cite{bansal2012primal} showed how to convert a fractional solution to a randomized solution while losing only a constant factor. Thus, this formulation yields a randomized integral solution for the $k$ normal slots. Note that if a requested page is not fetched by one of the $k$ normal slots, then we fetch it using the memoryless slot.

Now we analyze the cost of our algorithm in two parts: the total cost incurred by the normal slots $w_1$, and the total cost incurred by the memoryless slot $w_2$. We let $w^f_1$ and $w^f_2$ denote the corresponding costs of the fractional solution. Note that $w_1 = O(w^f_1)$ due to the rounding scheme of Bansal et al.~\cite{bansal2012primal}. Now we consider $w_2$. On the arrival of a page $p$, $w_2$ increases by $w(p)$ if it is not in the normal slots, and suppose this occurs with some probability $q$. By the rounding scheme, this means $p$ is in a normal slot with probability $1-q$, so in the fractional solution, a $1-q$ fraction of $p$ is in the normal slots. Therefore, a $q$ fraction of page $p$ is not in the normal slots, so $w^f_2$ increases by $q\cdot w(p)$, so this upper bounds the expected increase of $w_2$. As a result, we obtain a randomized integral solution while losing only a constant factor.
\end{proof}

\paragraph*{The \idle algorithm}
Assume that our cache has size $k+1$ and the extra slot is memoryless (as defined above). For any time step $t$, let $L(t)$ denote the set of pages predicted to arrive starting at time $t+1$. At time step 1 (i.e., initially), \idle runs the offline algorithm from Lemma~\ref{lem:const-plus1-pred} on $L(1)$, ignoring future requests. After the requests in $L(1)$ have been served, i.e., at time $|L(1)| + 1$, \idle then consults the predictor and runs the offline algorithm on the next ``batch''. The algorithm proceeds in this batch-by-batch manner until the end. We can show that the competitive ratio of this algorithm is at most a constant; the proof is nearly identical to the proof of Theorem~\ref{thm:static}, so we omit it.
 
\begin{lemma} \label{lemma:idle} \label{kplus:corelemma2}
On the prediction sequence $A$, we have $\cost(\idle) = O(1) \cdot \cost(\OPT^{+1}(A))$.
\end{lemma}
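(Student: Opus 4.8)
The plan is to mimic the proof of Theorem~\ref{thm:static} almost verbatim, with the offline optimal replaced by the constant-approximation offline algorithm from Lemma~\ref{lem:const-plus1-pred}, and with $\OPT$ replaced by $\OPT^{+1}(A)$ throughout. First I would observe that, exactly as in the \static analysis, \idle partitions $A$ into batches $B_1,\dots,B_m$ and that on each batch it runs a constant-approximation algorithm for the memoryless-slot caching problem restricted to that batch. The crucial structural fact — that the batches are nested, $B_1 \subseteq B_2 \subseteq \cdots \subseteq B_{m-1}$ — carries over without change, since this follows only from the definition of the prediction/batching mechanism (upon seeing the last request for a page $p$ in batch $B_i$, the SPRP prediction places $p$ into $B_{i+1}$ if $p$ ever recurs), and does not depend on which offline routine is run inside a batch.

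Next I would set up the per-batch accounting. Let $I_j$ denote the cost \idle incurs within batch $B_j$, and let $O_j$ denote the cost of $\OPT^{+1}(A)$ within $B_j$, so that $\cost(\idle) = \sum_j I_j$ and $\cost(\OPT^{+1}(A)) = \sum_j O_j$. Because \idle runs a constant-approximation offline algorithm on $B_j$ in isolation, $I_j$ is at most a constant times the optimal memoryless-cache cost of serving $B_j$ when started from \idle's own cache state at the start of $B_j$. One feasible solution for serving $B_j$ from that state is: first pay to transition the normal $k$ slots to the cache state $C(O_{j-1})$ that $\OPT^{+1}(A)$ holds just before $B_j$, then imitate $\OPT^{+1}(A)$ on $B_j$. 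Charging for evictions only (as in the \static proof), this costs at most $O_j + \sum_{p \in C(\idle_{j-1})\setminus C(O_{j-1})} w(p)$. For each such $p$: it is in \idle's cache so it was requested before $B_j$, hence $p \in B_{j-1}$ by nesting; and since $p \notin C(O_{j-1})$, the algorithm $\OPT^{+1}(A)$ must have evicted $p$ at some point while serving $B_{j-1}$, so the weight of $p$ is charged to $O_{j-1}$. Thus $I_j = O(1)\cdot(O_j + O_{j-1})$ for $j \geq 2$, and $I_1 = O(1)\cdot O_1$; summing over $j$ gives $\cost(\idle) = O(1)\cdot \cost(\OPT^{+1}(A))$.

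The one genuinely new wrinkle compared to Theorem~\ref{thm:static} is the memoryless slot: I should check that the ``transition to $C(O_{j-1})$ and then imitate'' strategy is legal and cheap in the memoryless model. This is fine — the memoryless slot is defined to evict whatever it fetches, so it never appears in a persistent cache state $C(\cdot)$, which is a set of at most $k$ pages in the normal slots; the transition cost is therefore still a sum of eviction weights over normal-slot pages, and imitation of $\OPT^{+1}(A)$ on $B_j$ is by definition available to an algorithm with the same memoryless-plus-$k$ configuration. A second minor point is that \idle uses a \emph{randomized} constant-approximation offline routine (Lemma~\ref{lem:const-plus1-pred}), so the per-batch bound $I_j = O(1)\cdot(\text{offline optimal on }B_j)$ holds only in expectation; I would therefore state the inequalities in expectation and take expectations before summing, which changes nothing since all the charging arguments above are deterministic statements about cache states. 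I do not expect any real obstacle here; the main thing to be careful about is simply carrying $\OPT^{+1}$ rather than $\OPT$ through the argument and not conflating the memoryless slot with a normal slot.
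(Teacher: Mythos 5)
Your proof is correct and takes exactly the route the paper intends: the paper in fact omits the proof of this lemma, saying only that it is ``nearly identical to the proof of Theorem~\ref{thm:static},'' and your adaptation --- batch nesting, the per-batch charge of the cache-transition cost to $O_{j-1}$, and the two extra observations about the memoryless slot and about taking expectations over the randomized offline routine --- is precisely that adaptation, carried out carefully.
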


\paragraph*{The \learn algorithm}
Before defining the algorithm, we introduce another measurement of error that closely approximates $\ell_{ed}$. Recall that $A$ denotes a prediction sequence of length $m$ and $B$ denotes an input sequence of length $n$. In defining $\ell_{ed}$, two elements $A_i = B_j$ can be matched only if $\predictnext(\prev(j)) = i$, and no matching edges are permitted to cross.

\begin{definition}
The \emph{constrained edit distance} $\ell'_{ed}$ is the minimum weight of unmatched elements of $A$ and $B$, with the following additional constraint: if $|P(A_i)| \geq 2$, then $A_i$ can only be matched with the latest-arriving element in $P(A_i)$.
\end{definition}

We note that $\ell'_{ed}$ is a constant approximation of $\ell_{ed}$, as shown in the following lemma.

\begin{lemma} \label{lem:l-prime-ed}
For any sequences $A,B$, we have $\ell_{ed}\leq \ell'_{ed}\leq 3\ell_{ed}$.
\end{lemma}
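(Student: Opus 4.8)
The plan is to prove the two inequalities separately; the first is essentially immediate, and the second is a short charging argument.

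\textbf{The easy direction ($\ell_{ed}\le \ell'_{ed}$).} I would observe that the extra condition defining $\ell'_{ed}$ — namely, that whenever $|P(A_i)|\ge 2$ the element $A_i$ may only be matched to the latest-arriving element of $P(A_i)$ — is strictly more restrictive than the matching rule used for $\ell_{ed}$. Hence every matching that is feasible for $\ell'_{ed}$ is also feasible for $\ell_{ed}$, so minimizing the weight of unmatched elements over the (smaller) feasible family for $\ell'_{ed}$ can only yield a value that is at least $\ell_{ed}$.

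\textbf{The main direction ($\ell'_{ed}\le 3\ell_{ed}$).} Let $M$ be an optimal matching for $\ell_{ed}$, so the total weight of elements unmatched by $M$ is exactly $\ell_{ed}$. Call an edge $(A_i,B_j)\in M$ \emph{bad} if $|P(A_i)|\ge 2$ and $B_j$ is not the latest-arriving element of $P(A_i)$; write $B_{j^*(i)}$ for that latest element. I would form $M'$ by deleting every bad edge from $M$. Deletion preserves the non-crossing property, and every surviving edge now respects the $\ell'_{ed}$ constraint (edges from $A_i$ with $|P(A_i)|=1$ are automatically fine, and the remaining edges from $A_i$ with $|P(A_i)|\ge 2$ hit the latest element by construction), so $M'$ is feasible for $\ell'_{ed}$. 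Consequently $\ell'_{ed}$ is at most the unmatched weight of $M'$, which is $\ell_{ed}$ plus the weight of the endpoints of the deleted bad edges; since $A_i=B_j$ as pages, each bad edge contributes $w(A_i)+w(B_j)=2w(A_i)$.

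\textbf{The key step.} It remains to bound $\sum_{\text{bad }i} w(A_i)$ by $\ell_{ed}$, and this is the one point that needs care. I would argue that for each bad edge $(A_i,B_j)$ the element $B_{j^*(i)}$ is unmatched by $M$: because $t\mapsto \predictnext(\prev(t))$ is a function, the membership $B_{j^*(i)}\in P(A_i)$ means that $A_i$ is the \emph{only} $A$-element eligible to be matched with $B_{j^*(i)}$; but $A_i$ is already matched to $B_j\neq B_{j^*(i)}$ in $M$, so $B_{j^*(i)}$ has no partner. Moreover, the sets $P(A_i)$ are pairwise disjoint over distinct $i$ (same reason: each $B$-position has a single eligible target index), so the elements $B_{j^*(i)}$ are distinct across bad edges, and each has weight $w(B_{j^*(i)})=w(A_i)$. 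Therefore $\sum_{\text{bad }i} w(A_i)=\sum_{\text{bad }i} w(B_{j^*(i)})\le \ell_{ed}$, and the unmatched weight of $M'$ is at most $\ell_{ed}+2\ell_{ed}=3\ell_{ed}$, giving $\ell'_{ed}\le 3\ell_{ed}$. The only real obstacle is this injective charging of bad edges to distinct $M$-unmatched $B$-elements, and both ingredients of it reduce to the fact that eligibility on the $B$-side is pinned down by the single-valued map $j\mapsto\predictnext(\prev(j))$.
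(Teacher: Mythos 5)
Your proof is correct and follows essentially the same route as the paper's: starting from an optimal matching $M$ for $\ell_{ed}$, pruning the edges that violate the $\ell'_{ed}$ constraint, and charging the (at most $2w(A_i)$ per pruned edge) loss to $B$-elements of $P(A_i)$ that are necessarily unmatched in $M$, using the disjointness of the sets $P(A_i)$. The only difference is cosmetic: the paper deletes \emph{all} edges incident to indices $i$ with $\abs{P(A_i)}\geq 2$ and charges against $w(S)$ for that whole set, whereas you delete only the genuinely violating edges and exhibit the specific unmatched witness $B_{j^*(i)}$; both yield the same $\ell_{ed}+2\ell_{ed}$ bound.
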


\begin{proof}
The first inequality follows directly from the definitions of $\ell_{ed}$ and $\ell'_{ed}$.

Let $S=\{i: \abs{P(A_i)} \geq 2\}$, and let $w(S)=\sum_{i\in S} w_{A_i}$. Let $M$ be an optimal matching for $\ell_{ed}$. For each $i\in S$, there is at least one unmatched $B_j \in P(A_i)$ because $A_i$ can only get matched with one request in $B$. Each of these unmatched elements of $B$ contributes to the value of $\ell_{ed}$, so $\ell_{ed} \geq w(S)$.

Now we construct a feasible matching $M'$ for $\ell'_{ed}$ by removing the edges incident to $S$ from $M$, that is, $M'=\{(A_i,B_j)\in M | i\notin S \}$. Consider the requests unmatched by $M'$: the weight is at most the amount originally unmatched by $M$ together with the amount incurred from removing edges incident to $S$. The former contributes $\ell_{ed}$ weight while the latter contributes $2w(S)$ weight, so we have $\ell'_{ed}\leq \ell_{ed}+2w(S)\leq 3\ell_{ed}$.
\end{proof}

Now we are ready to define the \learn algorithm. For any $i \leq j$, we let $A(i, j)$ denote the subsequence $(A_i, A_{i+1}, \ldots, A_j)$. For any set (or multiset) of pages $S$, we let $w(S)$ denote the total cost of pages in $S$. The algorithm is the following:
\begin{enumerate}
    \item Let $s = 0$; the variable $s$ always denotes that we have imitated the \idle algorithm through the first $s$ requests of the prediction.
    \item Let $S = \emptyset$ be an empty queue.
    \item On the arrival of request $p$, add $p$ to $S$.\label{branch_steps}
    \begin{enumerate}
        \item If there is a $t$ (in $[s+1, L]$ where $L$ is the end of the current prediction) such that
        \begin{equation} \label{ineq:edit-weight}
        \ell'_{ed}(A(s+1, t), S) < \frac{1}{3}(w(A(s+1,t)) + w(S)),
        \end{equation}
        then imitate \idle through position $t$, empty $S$ and let $s=t$. (If more than one $t$ satisfies the above, select the minimum.) \label{a_large_step}
        \item Otherwise, evict the page in the final slot. \label{a_small_step}
    \end{enumerate}
\end{enumerate}

We first prove that the algorithm is indeed feasible.

\begin{lemma} \label{lem:learn-feasible}
In the \learn algorithm, Step \ref{a_large_step} is feasible, i.e., if $t$ satisfies \eqref{ineq:edit-weight}, then $A_t=p$.
\end{lemma}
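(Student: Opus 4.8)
\textbf{Proof proposal for Lemma~\ref{lem:learn-feasible}.}

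The plan is to show that whenever the inequality \eqref{ineq:edit-weight} holds for some $t \in [s+1, L]$, the prediction element $A_t$ must equal the currently arriving page $p$; this is exactly what is needed for the imitation in Step~\ref{a_large_step} to make sense, since \idle is run through position $t$ and the $(|S|)$-th actual request has just arrived. First I would set up the contrapositive: suppose $A_t \neq p$. The key observation is that $p$ is the \emph{last} element currently in the queue $S$, i.e., if $|S| = r$ then $S_r = p$ and $B_{s+?}$ through the current time correspond to $S_1, \ldots, S_r$. So I want to argue that in \emph{any} feasible matching between $A(s+1,t)$ and $S$ (under the constrained edit-distance rules), the last element $S_r = p$ cannot be matched, and hence contributes its full weight $w(p)$ to $\ell'_{ed}$; symmetrically, if $A_t \neq p$ then $A_t$ also cannot be matched to anything in $S$, because matching edges are not allowed to cross and $A_t$ is the last element on the $A$-side.

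The core combinatorial point is the non-crossing constraint combined with the positions of the two ``extreme'' elements. On the $B$-side (the queue $S$), the element $p$ sits at the very end. For $p$ to be matched to some $A_i$ with $i \le t$, by non-crossing every earlier element of $S$ must be matched to something with index $< i \le t$, which is fine, but the matching partner of $p$ must be an occurrence of $p$ in $A(s+1,t)$ that is ``available'' under the $\predictnext \circ \prev$ rule. The cleanest way to force a contradiction is: the only candidate in $A(s+1,t)$ that could be the non-crossing match for the last element $p$ of $S$ is $A_t$ itself (anything earlier would either cross with a later $S$-element's match or, if $S$ has no later elements, still would leave $A_t$ unmatched on the $A$-side); and we assumed $A_t \neq p$, so $p \in S$ is unmatched. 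Therefore $\ell'_{ed}(A(s+1,t), S) \ge w(p) = w(S_r)$. Then I still need to get a lower bound that beats $\frac{1}{3}(w(A(s+1,t)) + w(S))$. Here I'd also use that $A_t$ is unmatched (same non-crossing argument from the $A$-side, since $A_t$ is the last $A$-element and its only possible partner under the rules would be the last $B$-element $p$), giving $\ell'_{ed} \ge w(p) + w(A_t)$. Comparing $w(p) + w(A_t)$ against $\frac{1}{3}(w(A(s+1,t)) + w(S))$ is not automatic in general, so the argument must exploit that $t$ is chosen \emph{minimal} — see below.

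The main obstacle I anticipate is precisely handling the minimality of $t$ and the structure of the batches of \idle. If $t$ is the minimal index satisfying \eqref{ineq:edit-weight}, then for $t' = t-1$ the inequality fails, i.e. $\ell'_{ed}(A(s+1,t-1), S') \ge \frac13(w(A(s+1,t-1)) + w(S'))$ where $S'$ is $S$ without its last element $p$ (since $p$ arrived simultaneously with the check at step $t$ — one must be careful about whether the previous failed check was at the previous arrival or the previous candidate $t$). Using that failure I can write $\ell'_{ed}(A(s+1,t), S)$ in terms of $\ell'_{ed}(A(s+1,t-1), S')$ plus the contribution of the two new last elements $A_t$ and $p$; if $A_t \ne p$ both are necessarily unmatched as argued, so $\ell'_{ed}(A(s+1,t),S) \ge \ell'_{ed}(A(s+1,t-1),S') + w(A_t) + w(p) \ge \frac13(w(A(s+1,t-1)) + w(S')) + w(A_t) + w(p) = \frac13(w(A(s+1,t)) + w(S)) + \frac23(w(A_t)+w(p)) > \frac13(w(A(s+1,t)) + w(S))$, contradicting \eqref{ineq:edit-weight}. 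So the lemma follows. The delicate parts to get right are (i) the bookkeeping of indices (what ``the previous check'' means, and that $S'$ really is $S$ minus one element $p$), and (ii) justifying carefully, from the no-crossing rule plus the $\predictnext(\prev(\cdot))$ matching restriction, that the last element on each side can only possibly be matched to the last element on the other side; everything else is routine arithmetic with the weights.
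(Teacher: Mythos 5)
Your proposal has a genuine gap at its combinatorial core. You claim that, under the non-crossing rule, ``the last element on each side can only possibly be matched to the last element on the other side,'' and you use this to conclude that if $A_t \neq p$ then \emph{both} $A_t$ and $p$ are unmatched, contributing $w(A_t)+w(p)$ to $\ell'_{ed}(A(s+1,t),S)$. That claim is false. The non-crossing constraint only implies that if both extreme elements are matched, then they are matched to each other; it does not forbid a matching in which $A_t$ is matched to some interior element $B_i$ of $S$ (forcing $p$ and the rest of the tail of $S$ to be unmatched), nor one in which $p$ is matched to some earlier occurrence $A_i$ with $i<t$ (forcing $A_t$ to be unmatched). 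Your final arithmetic, which adds both $w(A_t)$ and $w(p)$ on top of the failed check at the previous arrival, is therefore built on an unjustified premise and collapses in these mixed cases.

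The correct argument (and the one in the paper) splits into two parts, only one of which is reachable by your minimality-of-$t$ reasoning. If $A_t$ is unmatched in the optimal matching, then restricting that matching to $A(s+1,t-1)$ shows that $t-1$ already satisfies \eqref{ineq:edit-weight}, contradicting the minimality of $t$ --- this part you essentially have. But if $A_t$ is matched to some $B_i$ with $B_i$ not the last element of $S$, no appeal to minimality of $t$ helps: the contradiction instead comes from the \emph{algorithm's history}. Letting $S'$ be the prefix of the queue up through $B_i$, the non-crossing rule forces all of $S\setminus S'$ to be unmatched, so $\ell'_{ed}(A(s+1,t),S') = \ell'_{ed}(A(s+1,t),S) - w(S\setminus S') < \frac{1}{3}(w(A(s+1,t)) + w(S'))$, meaning the algorithm would already have entered Step~\ref{a_large_step} when the queue was $S'$ --- a contradiction with the fact that it did not. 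Once both $A_t$ and $p$ are known to be matched, non-crossing plus extremality pairs them with each other and gives $A_t = p$. You need to add this second, history-based case to make the proof go through.
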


\begin{proof}
Consider the optimal matching $M$ between $A(s+1,t)$ and $S$; we will show that both $A_t$ and $p$ are matched in $M$, and this implies that $(A_t, p)$ is an edge in $M$, so $A_t = p$.

For contradiction, first suppose that $A_t$ is not matched in $M$, in which case
\begin{align*}
    \ell'_{ed}(A(s+1, t-1), S) 
    &=  \ell'_{ed}(A(s+1, t), S) - w(A_t)   \\
    &<  \frac{1}{3}(w(A(s+1,t)) + w(S))-w(A_t)   \\
    &\leq  \frac{1}{3}(w(A(s+1,t-1)) + w(S)),
\end{align*}
which means $t-1$ satisfies \eqref{ineq:edit-weight}, contradicting our choice of the minimum $t$ satisfying \eqref{ineq:edit-weight}.

Now we will show that $p$ is matched in $M$. For contradiction, suppose $p$ is not matched in $M$, which means $A_t$ is matched to some other request $B_i\in S'$. By the defined matching conditions, we have $\predictnext(\prev(i))=t$. This implies that when the algorithm was serving request $B_i$, it could see the prediction sequence $A(s,t)$.

Let $S'$ denote the contents of the queue up through $B_i$, and let $w(S \setminus S')$ denote the weight of pages in $S \setminus S'$ (including $p$). Since $A_t$ is matched to $B_i$, no pages in $S \setminus S'$ can be matched when considering request $p$. Thus, we have the following:
\begin{align*}
    \ell'_{ed}(A(s+1, t), S') 
    &= \ell'_{ed}(A(s+1, t), S) - w(S \setminus S')  \\
    &< \frac{1}{3}(w(A(s+1,t)) + w(S))-w(S \setminus S')   \\
    &\leq \frac{1}{3}(w(A(s+1,t)) + w(S')),
\end{align*}
which means $S'$ satisfied \eqref{ineq:edit-weight} by matching $B_i$ with $A_t$, contradicting the fact that the algorithm did not enter Step \ref{a_large_step} at the time the queue was $S'$. 
\end{proof}

Now we arrive at the heart of the analysis: we upper bound the cost of \learn against the cost of \idle (i.e., a surrogate for $\OPT(B)$) and the constrained edit distance $\ell'_{ed}$. In particular, we prove the following lemma.

\begin{lemma} \label{kplus:corelemma3}
The algorithms \learn and \idle satisfy $\cost(\learn) \leq \cost(\idle)+12\ell'_{ed}$.
\end{lemma}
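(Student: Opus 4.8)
The plan is to partition the timeline of \learn into epochs, where each epoch ends precisely at a moment when the algorithm executes Step~\ref{a_large_step} (i.e., when it ``catches up'' to \idle by imitating it through some position~$t$). Within a single epoch, let $s$ be the value of the pointer at the start, let $t$ be the value at the end, and let $S$ be the queue of actual requests accumulated during the epoch (so that $S$ triggered inequality~\eqref{ineq:edit-weight} with $A(s+1,t)$). I would bound the cost \learn incurs during this epoch by two contributions: (i) the cost of all the Step~\ref{a_small_step} evictions in the extra (memoryless-like) slot before the catch-up, and (ii) the cost of the ``jump'' in Step~\ref{a_large_step} that resynchronizes the $k$ normal slots with \idle's state at position~$t$. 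For (i), every request served while in Step~\ref{a_small_step} pays at most its own weight, so this is at most $w(S)$ minus the last element (or just $\le w(S)$). For (ii), jumping from \idle's cache state at position~$s$ to its state at position~$t$ costs at most $2\cdot w(A(s+1,t))$, since only pages touched in $A(s+1,t)$ can differ between the two states, and each is paid at most twice (once to evict, once conceptually—we can be generous here). So the total \learn cost in this epoch is at most $w(S) + 2\,w(A(s+1,t))$, and I should be careful to also fold in the actual \idle cost of serving the batch $A(s+1,t)$, which is exactly $\cost(\idle)$ restricted to those positions; calling that $\cost(\idle;s+1,t)$, the bound becomes roughly $\cost(\idle;s+1,t) + w(S) + 2\,w(A(s+1,t))$.

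The next step is to use the triggering inequality~\eqref{ineq:edit-weight} to convert the $w(S) + w(A(s+1,t))$ terms into a bound in terms of $\ell'_{ed}$. Since the epoch ended because some $t$ satisfied $\ell'_{ed}(A(s+1,t),S) < \tfrac13(w(A(s+1,t)) + w(S))$, we get $w(A(s+1,t)) + w(S) > 3\,\ell'_{ed}(A(s+1,t),S)$—but that is the wrong direction. The useful direction comes instead from the \emph{minimality} of $t$: for every $t' < t$ in the valid range, inequality~\eqref{ineq:edit-weight} \emph{failed}, i.e. $\ell'_{ed}(A(s+1,t'),S_{t'}) \ge \tfrac13(w(A(s+1,t')) + w(S_{t'}))$ where $S_{t'}$ is the prefix of $S$ at that time. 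Taking $t' = t-1$ (and $S_{t'} = S \setminus \{p\}$ where $p$ is the final request), the failure at $t-1$ gives $3\,\ell'_{ed}(A(s+1,t-1), S\setminus\{p\}) \ge w(A(s+1,t-1)) + w(S) - w(p)$. Combined with the easy facts $w(A(s+1,t)) \le w(A(s+1,t-1)) + w(A_t)$, $A_t = p$ (so $w(A_t) = w(p)$), and $\ell'_{ed}(A(s+1,t-1), S\setminus\{p\}) \le \ell'_{ed}(A(s+1,t),S)$, this yields $w(A(s+1,t)) + w(S) \le 3\,\ell'_{ed}(A(s+1,t),S) + 2w(p) \le 3\,\ell'_{ed}(A(s+1,t),S) + 2w(S)$—still not clean. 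The right move is to bound $w(S) + w(A(s+1,t)) = O(\ell'_{ed}(A(s+1,t),S))$ directly: indeed, from the failure of~\eqref{ineq:edit-weight} at $t-1$ we get $w(A(s+1,t-1)) + w(S\setminus\{p\}) \le 3\ell'_{ed}(\cdots)$, and since appending one request $A_t = p$ increases each side controllably, $w(A(s+1,t)) + w(S) \le 3\ell'_{ed}(A(s+1,t),S) + 2w(p)$; then observing $w(p) \le \ell'_{ed}(A(s+1,t),S)$ when $p$ is unmatched, or handling the matched case separately, gives $w(A(s+1,t)) + w(S) = O(\ell'_{ed})$ on this epoch, say $\le 5\,\ell'_{ed}(A(s+1,t),S)$ with room to spare, and certainly $\le 6\,\ell'_{ed}$ as needed for the stated constant~$12$ (since each of the two terms $w(S)$ and $2w(A(s+1,t))$ in the \learn cost bound contributes).

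Finally, I would sum over all epochs. The \idle-cost contributions $\sum_{\text{epochs}} \cost(\idle; s+1,t)$ telescope to at most $\cost(\idle)$, since the intervals $[s+1,t]$ partition the prediction sequence (up to the last, incomplete epoch, which I handle by noting \learn's cost there is bounded by $w(S) \le$ the uncharged error plus a bounded term—or by appending a virtual flush). The $\ell'_{ed}$ contributions $\sum_{\text{epochs}} \ell'_{ed}(A(s+1,t), S_{\text{epoch}})$ sum to at most $\ell'_{ed}(A,B)$ by the sub-additivity of the (constrained) edit distance over a partition of both sequences into aligned pieces—this uses that the matching witnessing $\ell'_{ed}(A,B)$ restricts to a feasible matching on each epoch's sub-instance, because matching edges cannot cross and the epoch boundaries are consistent across $A$ and $B$. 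Putting the pieces together gives $\cost(\learn) \le \cost(\idle) + 12\,\ell'_{ed}$.

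The main obstacle I anticipate is the bookkeeping in the previous paragraph: getting the constant exactly~$12$ requires being precise about (a) which term the factor~$3$ in~\eqref{ineq:edit-weight} gets charged against, (b) the off-by-one in using $t-1$ versus $t$ and the single request $p$, and (c) the subadditivity argument for $\ell'_{ed}$ across epochs—in particular verifying that the constrained-matching condition ($A_i$ may only match the latest element of $P(A_i)$) is preserved under restriction to an epoch, which relies on Lemma~\ref{lem:learn-feasible}'s structural observation that a catch-up at~$t$ forces $A_t = p$ and hence keeps the $\prev/\predictnext$ relationships intra-epoch. A secondary subtlety is the final (possibly unterminated) epoch, where no inequality~\eqref{ineq:edit-weight} ever fires; there I would argue \learn's cost is $O(w(S))$ and that $w(S) = O(\ell'_{ed})$ plus the \idle cost of the corresponding tail, absorbing it into the same bound.
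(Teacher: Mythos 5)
Your epoch decomposition and the local charging idea are essentially the right skeleton and match the paper's: the paper also splits the cost into the imitation cost (bounded by $\cost(\idle)$) and the Step~\ref{a_small_step} evictions, and also uses the failure of \eqref{ineq:edit-weight} at earlier times/prefixes to charge $w(S)$ against the epoch's local edit distance. One smaller problem first: your per-epoch cost includes an extra $2\,w(A(s+1,t))$ for ``resynchronizing'' with \idle, which you then fold into the charge against $\ell'_{ed}$; this term is strictly positive even when the prediction is perfect and $\ell'_{ed}=0$, so it cannot be charged to the error. It is already covered by $\cost(\idle;s+1,t)$ (imitating \idle through position $t$ costs at most \idle's cost on that segment, since the normal slots already hold \idle's state at position $s$) and should simply be dropped.

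The major gap is the final summation step. You claim $\sum_{\text{epochs}} \ell'_{ed}(\text{epoch}) \leq \ell'_{ed}(A,B)$ ``by sub-additivity over a partition of both sequences into aligned pieces.'' But the pieces are aligned by the \emph{algorithm's} catch-up points $(t,b)$, not by the structure of the optimal global matching, and that matching can contain edges crossing an epoch boundary (an $A_i$ with $i\le t$ matched to a $B_j$ beyond the epoch, or vice versa). Restricting the global matching to the epochs loses twice the weight of every crossing edge, so the sum of local edit distances can strictly exceed $\ell'_{ed}(A,B)$; the subadditivity you invoke is exactly the statement that needs proof, not a freebie. This is where the paper's proof spends essentially all of its effort: it inducts on the catch-up events, lets $c$ be the split point in $B$ induced by the optimal matching at the catch-up point $a$ in $A$, and in the misaligned cases $c<b$ and $c>b$ derives the quantitative bounds $d_M \geq \frac{1}{2} w_B(c+1,b)$ and $d_M \geq \frac{1}{2} w_A(d+1,a)$ on the cross-boundary matched weight from the very inequalities you cite, and uses them to show (as in \eqref{eq:b-to-c}) that re-aligning the tail instance from the matching's split point to the algorithm's split point does not increase the edit distance. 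Without an argument of this kind, your sum over epochs is not controlled by $\ell'_{ed}$, so the proof as written does not go through.
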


Note that the proof of Theorem~\ref{thm:1-led-upper} follows directly from Lemmas~\ref{kplus:corelemma1},~\ref{kplus:corelemma2}, and~\ref{kplus:corelemma3}.

\begin{proof}[Proof of Lemma~\ref{kplus:corelemma3}]
Let $\cost_1$ denote the total cost of Step \ref{a_large_step} and let $\cost_2$ denote the total cost of Step \ref{a_small_step}, so $\cost(\learn) = \cost_1 + \cost_2$. From the algorithm, we can see that $\cost_1 \leq \cost(\idle)$.

So now we will prove $\cost_2 \leq  12\ell'_{ed}$ by induction on the times we enter Step \ref{a_large_step}. Let $w_A(a,b)=w(A(a,b))$ and $w_B(a,b)=w(B(a,b))$. Let $\cost_2(a,b)$ denote the total cost of Step \ref{a_small_step} when it serves input requests from time $a$ to time $b$. Finally, let $\ell'_{ed}((a,b),(c,d))$ be the distance between $A[a...b]$ and $B[c...d]$ according to the definition of $\ell'_{ed}$.

If we never enter Step \ref{a_large_step}, then the algorithm trivially evicts every page of the input $B$, so
\[
\cost_2\leq 2w_B(1,n)\leq 2w_A(1,m)+2w_B(1,n) \leq 6\ell'_{ed}
\]
where the final inequality follows from the fact that we never satisfied \eqref{ineq:edit-weight}.

Now assume the $\cost_2\leq 12\ell'_{ed}$ if we enter Step \ref{a_large_step} fewer than $i$ times; we will show that $\cost_2\leq 12\ell'_{ed}$ if we enter Step \ref{a_large_step} $i$ times.

Consider the first time we enter Step \ref{a_large_step}, at which point we have read input $B(1,b)$ and we imitate $\idle$ on $A(1,a)$. From the definition of $\ell'_{ed}$, there exists some integer $c$ such that
\[
\ell'_{ed}=\ell'_{ed}((1,a),(1,c))+\ell'_{ed}((a+1,m),(c+1,n)).
\]
Consider the following cases:
\begin{enumerate}
    \item $c=b$: In this case, we have
        \begin{align*}
        \cost_2 &= \cost_2(1,b-1)+\cost_2(b+1,n)  \\
        &\leq 6\cdot \ell'_{ed}((1,a-1),(1,b-1)) + 12\cdot \ell'_{ed}((a+1,m),(b+1,n))
        \end{align*}
        where the equality holds due to Lemma~\ref{lem:learn-feasible}, and the inequality follows from the fact that we did not enter Step~\ref{a_large_step} on request $B_{b-1}$ and the induction hypothesis. Again, Lemma~\ref{lem:learn-feasible} and our choice to enter Step~\ref{a_large_step} imply that this quantity is equal to
        \[
        6\cdot \ell'_{ed}((1,a),(1,b)) + 12\cdot \ell'_{ed}((a+1,m),(b+1,n)),
        \]
        which is at most $12 \cdot \ell'_{ed}$ by the definition of $c$ and our case assumption.

    \item $c<b$: Since we did not enter Step~\ref{a_large_step} earlier, we have
        \begin{align}\label{c<b:in1}
        \ell'_{ed}((1,a_0),(1,c))\geq \frac{1}{3}\left(w_A(1,a_0)+w_B(1,c)\right)
        \end{align}
        for every $a_0 < a$. Furthermore, since we are now entering Step~\ref{a_large_step}, we have
        \begin{align}\label{c<b:in2}
        \ell'_{ed}((1,a),(1,b))\leq \frac{1}{3}\left(w_A(1,a)+w_B(1,b)\right).
        \end{align}
        Let $M$ be the optimal matching for $\ell'_{ed}((1,a),(1,b))$, and consider the following matching $M$' for $\ell'_{ed}((1,a),(1,c))$:
        \[
        M'=\{(A_i,B_j)\in M| j\leq c\}.
        \]
        Let $d_M=\sum_{(A_i,B_j)\in M}w_{A_i}-\sum_{(A_i,B_j)\in M'} w_{A_i}$, and let $a'=\arg\max_{i}(A_i,B_j)\in M'$. Now consider the constrained edit distance between $A(1,a')$ and $B(1,c)$: one option is to match $A(1,a)$ and $B(1,b)$ and remove the weight of unmatched requests. This implies the following:
        \[
        \ell'_{ed}((1,a'),(1,c))
        \leq \ell'_{ed}((1,a),(1,b))-w_A(a'+1,a)-w_B(c+1,b)+2d_M
        \]
        Rearranging the above yields
        \begin{align*}
        w_A(a'+1,a)+w_B(c+1,b)-2d_M
        &\leq \ell'_{ed}((1,a),(1,b))-\ell'_{ed}((1,a'),(1,c))    \\
        &\leq \frac{1}{3}(w_A(1,a) + w_B(1,b) - w_A(1,a') - w_B(1,c)) \\
        &= \frac{1}{3} \left(w_A(a'+1,a)+w_B(c+1,b)\right),
        \end{align*}
        where the second inequality follows from inequalities (\ref{c<b:in1}) and (\ref{c<b:in2}). Further rearranging and applying the inequality $d_M \leq w_A(a'+1,a)$ yields
        \begin{equation} \label{ineq:dm}
        d_M\geq\frac{1}{2}w_B(c+1,b).
        \end{equation}
        Now consider the optimal matching between $A(a+1,m)$ and $B(b+1,n)$. One way to form this matching is to match $A(a+1, m)$ and $B(c+1, n)$ (since $c < b$) and leave the requests matched to $B(c+1, b)$ unmatched (in addition to existing unmatched requests). The matching corresponding to $\ell'_{ed}((a+1,m),(c+1,n))$ is penalized by $d_M$ when considered as a matching for $A(a+1,m)$ and $B(b+1, m)$. Furthermore, the amount of weight in $A(a+1,m)$ matched to $B(c+1, n)$ is at most $w_B(c+1,b) - d_M$. This gives us the following:
        \begin{align}
        \ell'_{ed}((a+1,m),(b+1,n))
        &\leq \ell'_{ed}((a+1,m),(c+1,n)) - d_M + (w_B(c+1,b)-d_M) \nonumber \\
        &\leq \ell'_{ed}((a+1,m),(c+1,n)), \label{eq:b-to-c}
        \end{align}
        where the second inequality follows from \eqref{ineq:dm}. Letting $\cost(x,y)$ denote the cost incurred by the algorithm to serve $B(x,y)$, we have
        \begin{align*}
        \cost_2 &\leq \cost(1,c)+\cost(c+1,b)+\cost(b+1,n)  \\
        &\leq 2w_B(1,c)+4d_M + 12\cdot \ell'_{ed}((a+1,m),(b+1,n)) \tag{trivial upper bounds, \eqref{ineq:dm}, induction} \\
        &\leq 4(w_B(1,c)+w_A(1,a)) + 12\cdot \ell'_{ed}((a+1,m),(c+1,n)) \tag{trivial upper bounds and \eqref{eq:b-to-c}} \\
        &\leq 12\cdot \ell'_{ed}((1,a),(1,c)) + 12\cdot \ell'_{ed}((a+1,m),(c+1,n))
        \tag{we did not enter Step \ref{a_large_step} at time $c$}    \\
        &=12\cdot \ell'_{ed}.
        \end{align*}
    \item $c>b$: This case is very similar to the $c < b$ case, so we omit some details. Define $d \leq a$ such that
        \[
        \ell'_{ed} = \ell'_{ed}((1,d),(1,b)) + \ell'_{ed}((d+1,m),(b+1,n)).
        \]
        If $d = a$, then this case is analogous to the $c = b$ case, so from now on, we assume $d < a$. Then for every $b' \leq b$, we have
        \begin{align}\label{last-alg-i1}
        \ell'_{ed}((1,d),(1,b'))\geq \frac{1}{3}\left(w_A(1,d)+w_B(1,b')\right),
        \end{align}
        and since we are now entering Step \ref{a_large_step}, we have
        \begin{align}\label{last-alg-i2}
        \ell'_{ed}((1,a),(1,b))\leq \frac{1}{3}\left(w_A(1,a)+w_B(1,b)\right).
        \end{align}
        Let $M$ denote the optimal matching for $\ell'_{ed}((1,a),(1,b))$ and consider the following matching between $A(1,d)$ and $B(1,b)$:
        \[
        M'=\{(A_i,B_j)\in M| i\leq d\}.
        \]
        Let $d_M =\sum_{(A_i,B_j)\in M} w_{A_i} - \sum_{(A_i,B_j)\in M'} w_{A_i}$, and let $b'=\arg\max_{j}(A_i,B_j)\in M'$. Since $M'$ is a valid matching between $A(1,d)$ and $B(1,b')$, we have the following:
        \begin{align}\label{last-alg-i3}
        \ell'_{ed}((1,d),(1,b')) \leq \ell'_{ed}((1,a),(1,b))-w_A(d+1,a)-w_B(b'+1,b)+2d_M.
        \end{align}
        Rearranging and applying the previous inequalities \ref{last-alg-i1}, \ref{last-alg-i2}, and \ref{last-alg-i3} yields
        \begin{align*}
        w_A(d+1,a)+w_B(b'+1,b)-2d_M &\leq \ell'_{ed}((1,a),(1,b))-\ell'_{ed}((1,d),(1,b')) \\
        &\leq \frac{1}{3} \left( w_A(d+1,a)+w_B(b'+1,b) \right) ,
        \end{align*}
        and further rearranging gives us
        \[
        d_M\geq \frac{1}{3} \left( w_A(d+1,a)+w_B(b'+1,b) \right).
        \]
        Since $d_M\leq w_B(b'+1,b)$, we have
        \[
        d_M\geq \frac{1}{2}w_A(d+1,a).
        \]
        As in the previous case, we have
        \begin{align*}
        \ell'_{ed}((a+1,m),(b+1,n)) &\leq \ell'_{ed}((d+1,m),(b+1,n)) - d_M + (w_A(d+1,a)-d_M)  \\
        &\leq \ell'_{ed}((d+1,m),(b+1,n)).
        \end{align*}
        Letting $\cost(x,y)$ denote the cost of serving $B(x,y)$, we have
        \begin{align*}
        \cost_2 &\leq \cost(1,b)+\cost(b+1,n)  \\
        &\leq 2w_B(1,b) + 12\cdot \ell'_{ed}((a+1,m),(b+1,n))   \\
        &\leq 2w_B(1,b)+w_A(1,d) + 12\cdot \ell'_{ed}((d+1,m),(b+1,n))   \\
        &\leq 6\cdot \ell'_{ed}((1,d),(1,b)) + 12\cdot \ell'_{ed}((d+1,m),(b+1,n)) \\
        &=12\cdot \ell'_{ed}.
        \end{align*}
        \qedhere
\end{enumerate}
\end{proof}

\paragraph*{The \follow algorithm}
Now we show that the $\Omega(\ell_1)$ lower bound in Theorem~\ref{thm:1-l1-lower} is tight, that is, we will give an SPRP algorithm \follow that has cost $O(1) \cdot (\OPT + \ell_1)$. Recall the \static algorithm from Theorem~\ref{thm:static}. The algorithm \follow ignores its input: it simply runs \static on the prediction sequence $A$ and imitates its fetches/evictions on the input sequence $B$.

\begin{theorem}
The \follow algorithm has cost $O(1) \cdot (\OPT + \ell_1)$.
\end{theorem}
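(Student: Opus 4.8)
The plan is to prove the bound by chaining three inequalities,
\[
\cost(\follow)\;\le\;\cost(\static(A))+\ell_1\;\le\;2\,\OPT(A)+\ell_1\;\le\;2\,\OPT(B)+3\ell_1,
\]
where $B$ is the true request sequence, $\OPT=\OPT(B)$, and (following the proof of Theorem~\ref{thm:static}) I use the convention that evicting $p$ costs $w(p)$ while fetches are free. The middle inequality is immediate: feeding $A$ to \static as an SPRP instance whose predictions are literally $A$ makes those predictions entirely correct, so Theorem~\ref{thm:static} gives $\cost(\static(A))\le 2\,\OPT(A)$. Note that \follow, \static, and $\OPT$ here all use a cache of size $k$, so no extra slot is needed (unlike the $\ell_{ed}$ upper bound).

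For the first inequality I would spell out \follow precisely: after each step $t$ it holds exactly the cache state that $\static(A)$ holds after $t$. At step $t$, \follow first replays the moves \static makes to serve $A_t$, paying exactly \static's step-$t$ cost. If $B_t=A_t$ (or $B_t$ already happens to be cached) this serves $B_t$ at no extra cost. Otherwise \follow evicts $A_t$ (cost $w(A_t)$), fetches $B_t$ for free into the freed slot, serves $B_t$, then evicts $B_t$ (cost $w(B_t)$) and re-fetches $A_t$, returning to \static's state. Hence the extra cost at step $t$ is at most $w(A_t)+w(B_t)$ when $A_t\ne B_t$ and $0$ otherwise; since the $\ell_1$ model aligns $A$ and $B$ position by position, summing over the mismatched positions gives overhead at most $\ell_1$.

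The last inequality is the symmetric statement $\OPT(A)\le\OPT(B)+\ell_1$: take an optimal solution $\sigma$ for $B$ and build a feasible solution $\sigma'$ for $A$ that agrees with $\sigma$ after every step, by replaying $\sigma$'s moves and, whenever $A_t\ne B_t$, inserting the same local detour (evict $B_t$, fetch and serve $A_t$, evict $A_t$, re-fetch $B_t$) at cost $w(A_t)+w(B_t)$; this solution costs $\OPT(B)+\ell_1$, so $\OPT(A)$ is at most that. Combining the three inequalities yields $\cost(\follow)\le 2\,\OPT+3\ell_1=O(\OPT+\ell_1)$.

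I do not expect a real obstacle. The only things to check carefully are (i)~that each detour is a legal sequence of cache operations --- in particular that the page being evicted is genuinely present after the replay and that re-inserting it fits --- which holds because every detour begins and ends in the same cache state and is entirely local to step $t$, so no ``desynchronized'' state ever persists between steps; and (ii)~that the eviction-cost convention introduces no hidden term, which is handled exactly as in the proof of Theorem~\ref{thm:static}.
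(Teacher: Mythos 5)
Your proof is correct and follows essentially the same route as the paper's: the identical chain $\cost(\follow)\le\cost(\static)+O(\ell_1)\le 2\,\OPT(A)+O(\ell_1)\le 2\,\OPT(B)+O(\ell_1)$, with Theorem~\ref{thm:static} supplying the middle step. The only difference is in the accounting of the imitation overhead --- you insert a local evict/fetch/evict/re-fetch detour at each mismatched position, whereas the paper defers the charge to the $\ell_1$ term at the time the corresponding page is later evicted or fetched --- and both yield the same $O(\ell_1)$ additive overhead (yours with a marginally better constant).
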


\begin{proof}
Recall from Theorem~\ref{thm:static} that $\cost(\static) \leq O(1) \cdot \OPT(A)$. Furthermore, we claim $\OPT(A) \leq \OPT(B) + 2\ell_1$. This is because on $A$, there exists an algorithm that imitates the movements of $B$: say at time $t$, $\OPT(B)$ evicts some element $b$ that had appeared in $B$ at time $v(t)$. Then $\OPT(A)$ can also evict whatever element appeared at time $v(t)$ in $A$, and if this is not $b$, then this cost can be charged to the $v(t)$ term of $\ell_1$. Each term of $\ell_1$ is charged at most twice because a specific request can be evicted and fetched at most once respectively.

By the same argument, we have $\cost(\follow) \leq \cost(\static) + 2\ell_1$. Combining these inequalities proves the theorem.
\end{proof}

\section{Conclusion} \label{sec:conclusion}
In this paper, we initiated the study of weighted paging with predictions. This continues the recent line of work in online algorithms with predictions, particularly that of Lykouris and Vassilvitski~\cite{lykouris2018competitive} on unweighted paging with predictions. We showed that unlike in unweighted paging, neither a fixed lookahead not knowledge of the next request for every page is sufficient information for an algorithm to overcome existing lower bounds in weighted paging. However, a combination of the two, which we called the strong per request prediction (SPRP) model, suffices to give a constant approximation. We also explored the question of gracefully degrading algorithms with increasing prediction error, and gave both upper and lower bounds for a set of natural measures of prediction error. The reader may note that the SPRP model is rather optimistic and requires substantial information about the future. A natural question arises: can we obtain constant competitive algorithms for weighted paging with fewer predictions? While we refuted this for the PRP and fixed lookahead models, being natural choices because they suffice for unweighted paging, it is possible that an entirely different parameterization of predictions can also yield positive results for weighted paging. We leave this as an intriguing direction for future work.


\bibliographystyle{plainurl} 
\bibliography{bibliography}

\begin{thebibliography}{10}

\bibitem{albers1993influence}
Susanne Albers.
\newblock The influence of lookahead in competitive paging algorithms.
\newblock In {\em European Symposium on Algorithms}, pages 1--12. Springer,
  1993.

\bibitem{antoniadis2020online}
Antonios Antoniadis, Christian Coester, Marek Elias, Adam Polak, and Bertrand
  Simon.
\newblock Online metric algorithms with untrusted predictions.
\newblock {\em arXiv preprint arXiv:2003.02144}, 2020.

\bibitem{bansal2012primal}
Nikhil Bansal, Niv Buchbinder, and Joseph~Seffi Naor.
\newblock A primal-dual randomized algorithm for weighted paging.
\newblock {\em Journal of the ACM (JACM)}, 59(4):19, 2012.

\bibitem{belady1966study}
Laszlo~A. Belady.
\newblock A study of replacement algorithms for a virtual-storage computer.
\newblock {\em IBM Systems journal}, 5(2):78--101, 1966.

\bibitem{chrobak1991new}
Marek Chrobak, H~Karloof, Tom Payne, and S~Vishwnathan.
\newblock New results on server problems.
\newblock {\em SIAM Journal on Discrete Mathematics}, 4(2):172--181, 1991.

\bibitem{fiat1991competitive}
Amos Fiat, Richard~M Karp, Michael Luby, Lyle~A McGeoch, Daniel~D Sleator, and
  Neal~E Young.
\newblock Competitive paging algorithms.
\newblock {\em Journal of Algorithms}, 12(4):685--699, 1991.

\bibitem{gollapudi2019online}
Sreenivas Gollapudi and Debmalya Panigrahi.
\newblock Online algorithms for rent-or-buy with expert advice.
\newblock In {\em International Conference on Machine Learning}, pages
  2319--2327, 2019.

\bibitem{hsu2018learningbased}
Chen-Yu Hsu, Piotr Indyk, Dina Katabi, and Ali Vakilian.
\newblock Learning-based frequency estimation algorithms.
\newblock In {\em International Conference on Learning Representations}, 2019.

\bibitem{LattanziLMV20}
Silvio Lattanzi, Thomas Lavastida, Benjamin Moseley, and Sergei Vassilvitskii.
\newblock Online scheduling via learned weights.
\newblock In {\em Proceedings of the 2020 {ACM-SIAM} Symposium on Discrete
  Algorithms, {SODA} 2020, Salt Lake City, UT, USA, January 5-8, 2020}, pages
  1859--1877, 2020.

\bibitem{lykouris2018competitive}
Thodoris Lykouris and Sergei Vassilvtiskii.
\newblock Competitive caching with machine learned advice.
\newblock In {\em International Conference on Machine Learning}, pages
  3302--3311, 2018.

\bibitem{mitzenmacher2018model}
Michael Mitzenmacher.
\newblock A model for learned bloom filters and optimizing by sandwiching.
\newblock In {\em Advances in Neural Information Processing Systems}, pages
  464--473, 2018.

\bibitem{MotwaniR}
Rajeev Motwani and Prabhakar Raghavan.
\newblock {\em Randomized Algorithms}.
\newblock Cambridge University Press, New York, NY, USA, 1995.

\bibitem{purohit2018improving}
Manish Purohit, Zoya Svitkina, and Ravi Kumar.
\newblock Improving online algorithms via ml predictions.
\newblock In {\em Advances in Neural Information Processing Systems}, pages
  9661--9670, 2018.

\bibitem{rohatgi2020near}
Dhruv Rohatgi.
\newblock Near-optimal bounds for online caching with machine learned advice.
\newblock In Shuchi Chawla, editor, {\em Proceedings of the 2020 {ACM-SIAM}
  Symposium on Discrete Algorithms, {SODA} 2020, Salt Lake City, UT, USA,
  January 5-8, 2020}, pages 1834--1845. {SIAM}, 2020.

\bibitem{sleator1985amortized}
Daniel~D Sleator and Robert~E Tarjan.
\newblock Amortized efficiency of list update and paging rules.
\newblock {\em Communications of the ACM}, 28(2):202--208, 1985.

\bibitem{young1991online}
Neal Young.
\newblock On-line caching as cache size varies.
\newblock In {\em Proceedings of the Second Annual ACM-SIAM Symposium on
  Discrete Algorithms}, SODA '91, pages 241--250, 1991.

\bibitem{young2002line}
Neal~E Young.
\newblock On-line file caching.
\newblock {\em Algorithmica}, 33(3):371--383, 2002.

\end{thebibliography}

\end{document}